\documentclass{article}
\pdfpagewidth=8.5in
\pdfpageheight=11in

\usepackage{kr}

\usepackage{times}
\usepackage{soul}
\usepackage{url}
\usepackage[utf8]{inputenc}
\usepackage[small]{caption}
\usepackage{graphicx}
\usepackage{amsmath}
\usepackage{amsthm}
\usepackage{booktabs}
\usepackage{algorithm}
\usepackage[noend]{algorithmic}
\urlstyle{same}

\usepackage[hidelinks,breaklinks]{hyperref}



\def\nnot{\neg}
\def\accp{\mathtt{acc}}
\def\defp{\mathtt{def}}
\def\attackp{\mathtt{attack}}
\def\bp{\mathtt{b}}
\def\fact{\mathtt{a}}
\def\confp{\mathtt{confl}}
\def\prefp{\mathtt{pref}}
\def\nextp{\mathtt{next}}
\def\oddp{\mathtt{odd}}

\def\boundedconf{\textsc{BoundConf}\xspace}
\def\binaryconf{\textsc{BinConf}\xspace}
\def\polyCQ{\textsc{PolyBCQ}\xspace}
\def\polycons{\textsc{PolyCons}\xspace}

\newcommand{\mypar}[1]{\medskip\noindent\textbf{#1.}}

\def\Gmc{\mathcal{G}}

\def\cn{\ensuremath{\mathsf{N_{C}}\xspace}}
\def\rn{\ensuremath{\mathsf{N_{R}}\xspace}}
\def\inds{\ensuremath{\mathsf{N_{I}}\xspace}}

\def\dllite{DL-Lite\xspace}
\def\dllitecore{DL-Lite$_{\mn{core}}$\xspace}
\def\dlliter{DL-Lite$_{\mathcal{R}}$\xspace}
\def\dlliterhorn{DL-Lite$^{\mathcal{H}}_\mn{horn}$\xspace}

\newcommand{\tup}[1]{\langle #1\rangle}


\newcommand{\bravemodels}[1]{\models_{\text{brave}}^{#1}}
\newcommand{\armodels}[1]{\models_{\text{AR}}^{#1}}
\newcommand{\iarmodels}[1]{\models_{\text{IAR}}^{#1}}

\newcommand{\conflicts}[1]{\mi{Conf}(#1)}

\newcommand{\reps}[1]{\mi{SRep}(#1)}
\newcommand{\greps}[1]{\mi{GRep}(#1)}
\newcommand{\preps}[1]{\mi{PRep}(#1)}
\newcommand{\creps}[1]{\mi{CRep}(#1)}
\newcommand{\xreps}[1]{\mi{XRep}(#1)}

\def\defeat{\rightsquigarrow}
\def\sdefeat{\defeat_\succ}
\def\args{\mi{Args}}

\newcommand{\chf}[1]{\ensuremath{\Gamma_{#1}}}
\newcommand{\chff}[0]{\ensuremath{\Gamma_{F}}}

\newcommand\aczero{\ensuremath{\mathsf{AC}^{0}}\xspace}

\def\ptime{\textsc{P}\xspace}
\def\np{\textsc{NP}\xspace}
\def\conp{co\textsc{NP}\xspace}
\def\piptwo{\ensuremath{\Pi^{p}_{2}}\xspace}
\def\sigmaptwo{\ensuremath{\Sigma^{p}_{2}}\xspace}
\def\delayp{\textsc{DelayP}\xspace}
\def\incrdelayp{\textsc{IncP}\xspace}
\def\totalp{\textsc{TotalP}\xspace}

\def\true{\ensuremath{\mathsf{true}}}
\def\false{\ensuremath{\mathsf{false}}}

\newcommand{\mn}[1]{\ensuremath{\mathsf{#1}}}
\newcommand{\mi}[1]{\ensuremath{\mathit{#1}}}

\newcommand{\Amc}{\ensuremath{\mathcal{A}}}
\newcommand{\Bmc}{\ensuremath{\mathcal{B}}}
\newcommand{\Cmc}{\ensuremath{\mathcal{C}}}

\newcommand{\Imc}{\ensuremath{\mathcal{I}}}
\newcommand{\Kmc}{\ensuremath{\mathcal{K}}}

\newcommand{\Tmc}{\ensuremath{\mathcal{T}}}

\newcommand{\eg}{e.g.,~}

\newcommand{\ie}{i.e.,~}
\newcommand{\wrt}{w.r.t.~}


\usepackage{latexsym}
\usepackage{amsfonts, amssymb, stmaryrd}
\usepackage{xspace}
\usepackage{xcolor}
\usepackage{thm-restate} 
\usepackage{tikz}
\tikzset{>=latex}
\usepackage{subcaption}

\newtheorem{theorem}{Theorem}
\newtheorem{definition}[theorem]{Definition}
\newtheorem{example}[theorem]{Example}
\newtheorem{lemma}[theorem]{Lemma}

\newtheorem{remark}[theorem]{Remark}
\newtheorem{corollary}[theorem]{Corollary}

\usepackage{fancyhdr}
\fancypagestyle{firstpage}
{
    \fancyhead[L]{Long version (with proofs) of the KR'20 paper with the same title. Corrects the statement of Theorem 43 (missing hypothesis).}    
    \fancyhead[R]{}
}

 \pdfinfo{
/Title (Querying and Repairing Inconsistent Prioritized Knowledge Bases: Complexity Analysis and Links with Abstract Argumentation)
/Author (Meghyn Bienvenu, Camille Bourgaux)
}

\setcounter{secnumdepth}{2} 

%
\setlength\titlebox{2.5in} 
\title{Querying and Repairing Inconsistent Prioritized Knowledge Bases:
\\ Complexity Analysis and Links with Abstract Argumentation}

\author{Meghyn Bienvenu$^1$\and Camille Bourgaux$^2$\\
\affiliations
$^1$ CNRS \& University of Bordeaux, France\\
$^2$ DI ENS, ENS, CNRS, PSL University \& Inria, Paris, France\\  
\emails
meghyn.bienvenu@u-bordeaux.fr, camille.bourgaux@ens.fr
}
 \begin{document}

\maketitle
\thispagestyle{firstpage}
\begin{abstract}
In this paper, we explore the issue of inconsistency handling over prioritized knowledge bases (KBs), which consist of an ontology, a set of facts, and a priority relation between conflicting facts. In the database setting, a closely related scenario has been studied and led to the definition of  three different notions of optimal repairs (global, Pareto, and completion) of a prioritized inconsistent database. After transferring the notions of globally-, Pareto- and completion-optimal repairs to our setting, we study the data complexity of the core reasoning tasks: query entailment under inconsistency-tolerant semantics based upon optimal repairs, existence of a unique optimal repair, and enumeration of all optimal repairs. Our results provide a nearly complete picture of the data complexity of these tasks for ontologies formulated in common DL-Lite dialects. The second contribution of our work is to clarify the relationship between optimal repairs and different notions of extensions for (set-based) argumentation frameworks. Among our results, we show that Pareto-optimal repairs correspond precisely to stable extensions (and often also to preferred extensions), and we propose a novel semantics for prioritized KBs which is inspired by grounded extensions and enjoys favourable computational properties. Our study also yields some results of independent interest concerning preference-based argumentation frameworks.

\end{abstract}

\section{Introduction}
Ontology-mediated query answering (OMQA) improves data access through the use of an ontology,
which provides a convenient vocabulary for query formulation and captures domain knowledge
that is exploited during query evaluation \cite{DBLP:journals/jods/PoggiLCGLR08,DBLP:conf/rweb/BienvenuO15,DBLP:conf/ijcai/XiaoCKLPRZ18}. There is now a large literature on OMQA, 
with much of the work 
adopting description logics (DLs) \cite{DBLP:books/daglib/0041477}, or the closely related OWL (2) standard \cite{owl2}, 
as the ontology specification language. 
In particular, the DL-Lite family of lightweight DLs \cite{calvaneseetal:dllite,DBLP:journals/jair/ArtaleCKZ09} (which underpin the OWL 2 QL profile \cite{profiles})
has been a main focus of both theoretical and practical research 
due to its favourable computational properties.

While it is often reasonable to assume that the ontology has been debugged and contains only trusted knowledge, 
real-world datasets are plagued by data quality issues, which may render the KB inconsistent. 
This has led to a large body of work on how to handle data inconsistencies in OMQA,
with the proposal of several inconsistency-tolerant semantics to
provide meaningful answers to queries posed over inconsistent KBs
(see e.g.\ the surveys \cite{DBLP:conf/rweb/BienvenuB16} and \cite{DBLP:conf/dlog/Bienvenu19}). 
Many of these semantics are based upon some notion of repair, defined as 
an inclusion-maximal subset of the data that is consistent with the ontology. 
This is in particular the case for the AR, brave, and IAR semantics, 
which correspond respectively to a query answer holding w.r.t. all repairs, 
at least one repair, or the intersection of all repairs \cite{LemboLRRS10,DBLP:conf/ijcai/BienvenuR13}. 
The computational properties of these and other semantics are now quite well understood,
and some practical algorithms and implementations have begun to be developed, 
see e.g.\ \cite{DBLP:journals/ws/LemboLRRS15,DBLP:journals/jair/BienvenuBG19,DBLP:conf/ijcai/TsalapatiSSK16,DBLP:conf/aaai/TrivelaSV18}. 

In many scenarios, there is some information about the dataset that can be used to select the most relevant repairs (e.g.\ we might know which relations or sources are most reliable, or the likelihood of certain kinds of facts being correct). 
To exploit such information, variants of the preceding semantics have been considered 
in which we restrict attention to the 
most preferred repairs based upon cardinality, weight, or a stratification of the dataset into priority levels \cite{DBLP:conf/aaai/BienvenuBG14}. 
While the latter forms of preferences are quite natural, 
there are other relevant ways of defining preferred repairs that are worth exploring.
In particular, in the database area, 
the seminal work of \citeauthor{DBLP:journals/amai/StaworkoCM12} \shortcite{DBLP:journals/amai/StaworkoCM12} supposes that preferences are given in terms of a priority relation (i.e. acyclic binary relation) between conflicting facts.
Such `fact-level' preferences have been shown to naturally arise in applications like information extraction and can e.g.\ be declaratively specified using rules \cite{DBLP:journals/tods/FaginKRV16}.
To lift a priority relation between facts to the level of repairs, three different methods were proposed by Staworko et al.
Pareto-optimal and globally-optimal repairs are defined as those for which there is no possible substitution of facts that leads to a (Pareto / global) improvement, 
while completion-optimal repairs correspond to greedily constructing a repair based upon some compatible total order (see Section \ref{sec:reps} for formal definitions).
The complexity of reasoning with these three kinds of optimal repair has been investigated, primarily focusing on constraints given as functionality dependencies  \cite{DBLP:conf/pods/FaginKK15,DBLP:conf/icdt/KimelfeldLP17,DBLP:conf/pods/LivshitsK17}. 

In this paper, we explore the use of fact-level preferences and optimal repairs for inconsistency handling in OMQA. 
%
Our first contribution is a data complexity analysis of the central reasoning problems related to optimal repairs: conjunctive query entailment under inconsistency-tolerant semantics based upon optimal repairs, uniqueness of optimal repairs, and enumeration of all optimal repairs. 
Our results provide a nearly complete picture of the data complexity for the three types of optimal repair and for ontologies formulated in \dllitecore and \dlliterhorn; however, as we make precise later, many of our results transfer to other data-tractable ontology languages. Not surprisingly, we find that reasoning with optimal repairs is generally more challenging than for standard repairs. In particular, query entailment under variants of the AR, IAR, and brave semantics based upon any of the three notions of optimal repair is intractable in data complexity, whereas with standard repairs, the IAR and brave semantics allow for tractable query answering. 

Our second contribution is to establish connections with abstract argumentation frameworks (AFs). More precisely, we show that every prioritized KB with only binary conflicts naturally corresponds to a preference-based AF \cite{DBLP:conf/comma/KaciTV18}. To handle general prioritized KBs, we need preference-based set-based AFs (SETAFs), which we introduce as a natural extension of SETAFs \cite{DBLP:journals/ijar/FlourisB19}. This correspondence enables us to compare repairs and extensions. We determine that Pareto-optimal repairs are precisely the stable extensions of the corresponding preference-based (SET)AF, and under reasonable assumptions, also coincide with the preferred extensions. To establish the latter result, we prove a technically challenging result about symmetric preference-based SETAFs that we believe is of independent interest. Globally-optimal and completion-optimal repairs correspond to proper subsets of the stable extensions, but do not at present have any analog in the argumentation setting. 

The argumentation connection situates optimal repairs within a broader context and lays the foundations for importing ideas and results from the argumentation literature.
Indeed, our third contribution is to propose a new notion of grounded repair, directly inspired by grounded extensions from argumentation. 
We show that the (unique) grounded repair is contained in the intersection of Pareto-optimal repairs. 
As the grounded repair can be computed in polynomial time from the dataset and conflicts, it yields a tractable approximation of the Pareto variant of the IAR semantics for the considered DLs. Moreover, we show that it is more productive than the recently proposed Elect semantics \cite{DBLP:conf/lpnmr/BelabbesBC19,DBLP:conf/ksem/BelabbesB19}. These advantages motivate us to take a closer look 
at the computational properties of the grounded semantics. We prove in particular a matching \ptime\ lower bound and show how the semantics can be computed via the well-founded semantics of logic programs. 

Proofs of all results are provided in the appendix.

%
%
%
%
%
%
%

\section{Preliminaries}

Even if many of our results apply to more general settings, our focus is on description logic  (DL) knowledge bases, and we will in particular consider the \dllitecore and \dlliterhorn\ dialects of the \dllite family.

\mypar{Syntax}
A DL \emph{knowledge base (KB)} $\Kmc=\tup{\Tmc,\Amc}$ consists of an ABox $\Amc$ and a TBox $\Tmc$, both constructed from a set \cn\ of \emph{concept names} (unary predicates), 
a set of \rn\ of \emph{role names} (binary predicates), and a set \inds\ of \emph{individuals} (constants). 
The \emph{ABox} (dataset) consists of a finite number of \emph{concept assertions} of the form $A(a)$ and \emph{role assertions} of the form $R(a,b)$, where $A \in \cn$,  $R \in \rn$, $a,b \in \inds$. 
The \emph{TBox} (ontology) consists of a set of axioms whose form depends on the DL in question. 
In \dllitecore, TBox axioms are \emph{concept inclusions} $B\sqsubseteq C$ built according to the following grammar, where $A\in \cn$ and $R\in \rn$:
$$ B := A \mid \exists S,\quad  C:= B \mid \neg B,\quad S:= R \mid R^-.$$

\dlliterhorn extends \dllitecore with \emph{role inclusions} of the form $S\sqsubseteq Q$ where $Q:= S \mid \neg S$ and concept inclusions of the form $B_1\sqcap\dots\sqcap B_n\sqsubseteq C$.

\mypar{Semantics} An \emph{interpretation} has the form $\Imc = (\Delta^{\Imc}, \cdot^{\Imc})$, 
where $\Delta^{\Imc}$ is a non-empty set and $\cdot^{\Imc}$ 
maps each $A \in \cn$ to $A^{\Imc} \subseteq \Delta^{\Imc}$, 
each $R \in \rn$ to $R^{\Imc} \subseteq \Delta^{\Imc} \times \Delta^{\Imc}$, and each $a \in \inds$ to $a^{\Imc} \in \Delta^{\Imc}$. 
The function $\cdot^{\Imc}$ is straightforwardly extended to general concepts and roles,
e.g. $(\neg B_1)^{\Imc}= \Delta^{\Imc}\setminus B_1^{\Imc}$, $(B_1\sqcap B_2)^{\Imc}= B_1^{\Imc}\cap B_2^\Imc$, $(R^{-})^{\Imc}=\{(c,d) \mid (d,c) \in R^{\Imc}\}$ and $(\exists Q)^{\Imc}=\{c \mid \exists d:\, (c,d) \in Q^{\Imc}\}$. 
An interpretation $\Imc$ satisfies 
$G \sqsubseteq H$ if $G^{\Imc} \subseteq H^{\Imc}$; it satisfies $A(a)$ (resp. $R(a,b)$) if
$a^{\Imc} \in A^{\Imc}$ (resp. $(a^{\Imc},b^{\Imc})\in R^{\Imc}$). 
 We call $\Imc$ a \emph{model} of  $\Kmc=\tup{\Tmc,\Amc}$ if $\Imc$ satisfies all  axioms
in $\Tmc$ and assertions in \Amc. 
A KB \Kmc\ is \emph{consistent} if it has a model; otherwise it is inconsistent.
We say that an ABox $\Amc$ is \emph{$\Tmc$-consistent} if the KB $\Kmc=\tup{\Tmc,\Amc}$ is consistent.

\mypar{Queries}
A \emph{first-order (FO) query} is a first-order logic formula whose atoms are built using the predicate symbols in $\cn \cup \rn$ and constants in $\inds$. 
We will focus on the subclass of 
\emph{conjunctive queries} (CQs) which take the form 
$q(\vec{x})=\exists \vec{y} \, \psi(\vec{x},\vec{y})$, where  
$\psi$ is a conjunction of atoms of the forms $A(t)$ or $R(t,t')$ whose terms are either variables from $\vec{x} \cup \vec{y}$ or individuals. 
When a CQ consists of a single atom, we called it an instance query (IQ). 
A query without free variables is called \emph{Boolean}, and we refer to Boolean CQs as BCQs. 
A Boolean query $q$ is \emph{satisfied} by an interpretation~$\Imc$, written $\Imc \models q$, 
iff $q$ is satisfied in~$\Imc$ according to standard first-order logic semantics; $q$ is \emph{entailed} by a KB $\Kmc$, written $\Kmc \models q$, iff $\Imc\models q$ for every model~$\Imc$ of $\Kmc$. 
Henceforth, when we speak of a (Boolean) query, without specifying the type of query, we mean a (Boolean) CQ.

\section{Querying Inconsistent Prioritized KBs}\label{sec:reps}
An inconsistent KB entails every Boolean query under the standard semantics. 
Several inconsistency-tolerant semantics have been defined to obtain meaningful answers in this context (see \cite{DBLP:conf/rweb/BienvenuB16} for a survey). 
Most of these semantics are based on the notion of \emph{repairs}, which correspond to the different ways of restoring consistency by minimally removing some assertions.

\begin{definition}[Repair]
An \emph{ABox repair} of a KB $\Kmc=\tup{\Tmc,\Amc}$, or \emph{repair} for short, is an inclusion-maximal $\Tmc$-consistent subset of $\Amc$. The set of 
repairs of $\Kmc$ is denoted by $\reps{\Kmc}$.
\end{definition}

Another central notion is that of \emph{conflicts}, which are the minimal subsets of assertions that contradict the TBox. 
\begin{definition}[Conflict]
A \emph{conflict} of a KB $\Kmc=\tup{\Tmc,\Amc}$, is a minimal $\Tmc$-inconsistent subset of $\Amc$. The set of all conflicts of $\Kmc$ is denoted by $\conflicts{\Kmc}$ and can be represented as a \emph{conflict hypergraph} whose vertices are assertions and whose hyperedges are the conflicts.
\end{definition}

To simplify the presentation, we assume throughout the paper that \emph{the ABox does not contain any assertion that is inconsistent with the TBox}, \ie every conflict contains at least two assertions. Note that self-contradictory assertions do not occur in any repair, and such assertions can be readily identified and removed using existing reasoning algorithms.

In the database setting, \citeauthor{DBLP:journals/amai/StaworkoCM12} \shortcite{DBLP:journals/amai/StaworkoCM12} defined three notions of \emph{optimal repairs} based on a \emph{priority relation} over facts, which expresses preferences between conflicting facts. 
Such a relation could e.g.\ represent expert knowledge of how to resolve certain kinds of conflicts or 
be learned from manual conflict resolution \cite{DBLP:journals/ijar/MartinezPPSS14,DBLP:conf/www/TanonBS19}. 
We recast these notions in the KB setting.

\begin{definition}[Priority relation, completion]
Given a KB $\Kmc=\tup{\Tmc,\Amc}$, a \emph{priority relation} $\succ$ for $\Kmc$ is an acyclic binary relation over the assertions of $\Amc$ such that if $\alpha\succ\beta$, then there exists $C\in\conflicts{\Kmc}$ such that $\{\alpha,\beta\}\subseteq C$. 
A priority relation $\succ$ is \emph{total} iff for all $\alpha\neq\beta$, if there exists $C\in\conflicts{\Kmc}$ such that $\{\alpha,\beta\}\subseteq C$, then $\alpha\succ\beta$ or $\beta\succ\alpha$. 
A \emph{completion} of $\succ$ is a total priority relation $\succ'\ \supseteq \ \,\succ$.
\end{definition}

\begin{definition}[Prioritized KB]
A \emph{prioritized KB} is a pair $(\Kmc, \succ)$ consisting of a KB $\Kmc$ and priority relation $\succ$ for $\Kmc$.
The notation $\Kmc_\succ$ will be used as shorthand for $(\Kmc, \succ)$.   
\end{definition}

\begin{definition}[Optimal repairs]
Consider a prioritized KB $\Kmc_\succ$ with $\Kmc=\tup{\Tmc,\Amc}$, and let $\Amc' \in \reps{\Kmc}$. 
\begin{itemize}
\item A \emph{Pareto improvement} of $\Amc'$ (\wrt $\succ$) is  a $\Tmc$-consistent $\Amc''\subseteq\Amc$ such that there exists $\beta\in\Amc''\setminus\Amc'$ such that $\beta\succ\alpha$ for every $\alpha\in\Amc'\setminus\Amc''$.
\item A \emph{global improvement} of $\Amc'$ (\wrt $\succ$) is a $\Tmc$-consistent $\Amc''\subseteq\Amc$ such that $\Amc''\neq\Amc'$ and for every $\alpha\in\Amc'\setminus\Amc''$, there exists $\beta\in\Amc''\setminus\Amc'$ such that $\beta\succ\alpha$.
\end{itemize}
The repair $\Amc'$ is a:
\begin{itemize}
\item \emph{Pareto-optimal repair of $\Kmc_\succ$} if there is no Pareto improvement of $\Amc'$ w.r.t.\ $\succ$. 
\item \emph{Globally-optimal repair of $\Kmc_\succ$} if there is no global improvement of $\Amc'$ w.r.t.\ $\succ$.
\item \emph{Completion-optimal repair of $\Kmc_\succ$} 
if $\Amc'$ is a globally-optimal repair of $\Kmc_{\succ'}$, for some completion $\succ'$ of $\succ$.
\end{itemize}
We denote by $\greps{\Kmc_\succ}$, $\preps{\Kmc_\succ}$ and $\creps{\Kmc_\succ}$ the sets of globally-, Pareto-, and completion-optimal repairs.
\end{definition}

\citeauthor{DBLP:journals/amai/StaworkoCM12} \shortcite{DBLP:journals/amai/StaworkoCM12} showed the following relation between optimal repairs: $$\creps{\Kmc_\succ}\subseteq \greps{\Kmc_\succ}\subseteq \preps{\Kmc_\succ}\subseteq \reps{\Kmc}.$$
They also show that $\Amc'$ is a completion-optimal repair iff it can be obtained by the following greedy procedure:
while some assertion has not been considered, 
pick an assertion 
that is maximal \wrt $\succ$ among those not yet considered, 
and add it to the current set if it does not lead to a contradiction.

\begin{example}\label{ex:prio-repairs}
Consider the prioritized KB $\Kmc_\succ$ given on Figure \ref{fig:confgraph}. 
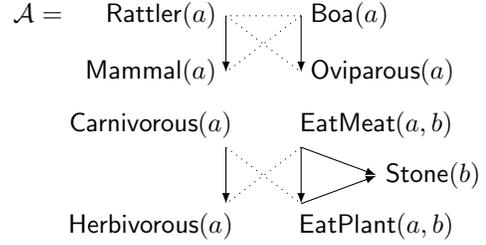
\begin{figure*}
\begin{subfigure}[b]{0.5\textwidth}
\begin{align*}
\Tmc=\{& 
\mn{Boa}\sqsubseteq\mn{Snake}, \,
\mn{Boa}\sqsubseteq\mn{Ovoviviparous},
\\&
\mn{Rattler}\sqsubseteq\mn{Snake}, \,
\mn{Rattler}\sqsubseteq\mn{Ovoviviparous},
\\&
\mn{EatPlant}\sqsubseteq\mn{Eat},  \,
\mn{EatMeat}\sqsubseteq\mn{Eat}, 
\\&
\mn{Mammal}\sqsubseteq\neg\mn{Snake}, \,
\mn{Oviparous}\sqsubseteq\neg \mn{Ovoviviparous},\\&
\mn{Boa}\sqsubseteq\neg\mn{Rattler}, \,
\mn{Carnivorous}\sqsubseteq\neg\mn{Herbivorous},\\&
\mn{Carnivorous}\sqsubseteq\neg\exists\mn{EatPlant}, \,
\mn{Herbivorous}\sqsubseteq\neg\exists\mn{EatMeat}, 
\\&
\exists\mn{EatPlant}^-\sqsubseteq\neg \exists\mn{EatMeat}^-, \,
\exists\mn{Eat}^-\sqsubseteq \neg \mn{Stone}
\}
\end{align*}
\end{subfigure}
\begin{subfigure}[t]{0.5\textwidth}
\centering
\begin{tikzpicture} 
\node [above] at (-2.5,0.55) {$\Amc=$};
\node [left] at (0,0.75) {$\mn{Rattler}(a)$};
\node [right] at (1,0.75) {$\mn{Boa}(a)$};
\node [left] at (0,0) {$\mn{Mammal}(a)$};
\node [right] at (1,0) {$\mn{Oviparous}(a)$};

\draw[<-] (0,0) -- (0,0.75);
\draw[->] (1,0.75) -- (1,0);
\draw[dotted] (0,0.75) -- (1,0);
\draw[dotted] (1,0.75) -- (0,0.75);
\draw[dotted] (1,0.75) -- (0,0);

\node [above] at (-1,-1) {$\mn{Carnivorous}(a)$};
\node [above] at (2,-1) {$\mn{EatMeat}(a,b)$};
\node [below] at (-1,-1.75) {$\mn{Herbivorous}(a)$};
\node [below] at (2,-1.75) {$\mn{EatPlant}(a,b)$};
\node [right] at (2,-1.37) {$\mn{Stone}(b)$};

\draw[->] (0,-1) -- (0,-1.75);
\draw[->] (1,-1) -- (1,-1.75);
\draw[dotted] (0,-1) -- (1,-1.75);
\draw[dotted] (1,-1) -- (0,-1.75);
\draw[->] (1,-1) -- (2,-1.37);
\draw[->] (1,-1.75) -- (2,-1.38);
\end{tikzpicture}
\end{subfigure}
\caption{Prioritized KB example. $\Tmc$ formalizes knowledge about animal groups and $\Amc$ gives information about individuals $a$ and $b$. An arrow from $\alpha$ to $\beta$ indicates that $\alpha\succ\beta$ while a dotted line indicates a conflict without priority between assertions.}\label{fig:confgraph}
\end{figure*}
The sets of optimal repairs of $\Kmc_\succ$ are as follows.
\begin{align*}
\creps{\Kmc_\succ\!}\!=&\{\!
\{\mn{Rattler}(a), \mn{Carnivorous}(a), \mn{EatMeat}(a,b)\!\},\\&
\phantom{\{}\{\mn{Boa}(a), \mn{Carnivorous}(a), \mn{EatMeat}(a,b)\}\!
\}\\
\greps{\Kmc_\succ\!}\!=&\creps{\Kmc_\succ}\cup\{
\{
\mn{Mammal}(a),
 \mn{Oviparous}(a), 
 \\&
\quad\quad\quad\quad 
\mn{Carnivorous}(a), \mn{EatMeat}(a,b)\}
\}\\
\preps{\Kmc_\succ\!}\!=&\greps{\Kmc_\succ}\cup\{\\&
\{\mn{Rattler}(a), \mn{Herbivorous}(a), \mn{EatPlant}(a,b)\},\\&
\{\mn{Boa}(a), \mn{Herbivorous}(a), \mn{EatPlant}(a,b)\},\\&
\{
\mn{Mammal}(a),
 \mn{Oviparous}(a), \\&
 \phantom{\{}\mn{Herbivorous}(a), \mn{EatPlant}(a,b)\}\!
\}
\end{align*}
$\reps{\Kmc}$ additionally contains repairs with $\mn{Stone}(b)$ instead of the $\mn{EatPlant}(a,b)$ or $\mn{EatMeat}(a,b)$ assertions.
\end{example}

We next introduce the three natural inconsistency-tolerant semantics based upon repairs. 
The AR semantics is arguably the most natural and well-known semantics and requires that a query be entailed from 
 \emph{every repair} of the KB. The IAR semantics is a more cautious semantics that evaluates queries over the \emph{intersection of all repairs}. Finally, the brave semantics is the most adventurous semantics, returning yes to queries that are entailed by \emph{some repair}. 

\begin{definition}[AR, IAR and brave semantics]
For $\text{X}\in\{\text{C},\text{G},\text{P},\text{S}\}$, a prioritized KB $\Kmc_\succ$ entails a query $q$ under 
\begin{itemize}
	\item\emph{X-AR semantics}, 
	written $\Kmc_\succ \armodels{X} q$,
	 \begin{flushright} iff $\tup{\Tmc,\Amc'}\models q$ for \emph{every } $\Amc'\in\xreps{\Kmc_\succ}$;\end{flushright} 	
	\item\emph{X-IAR semantics}, written $\Kmc_\succ \iarmodels{X} q$,\begin{flushright} 
	iff $\tup{\Tmc,\Amc^{\cap}} \models q$ where $\Amc^{\cap}=\bigcap_{\Amc'\in\xreps{\Kmc_\succ}}\Amc'$;\end{flushright}
	\item\emph{X-brave semantics}, written $\Kmc_\succ \bravemodels{X} q$,\begin{flushright}
	 iff $\tup{\Tmc,\Amc' }\models q$ for \emph{some } $\Amc'\in\xreps{\Kmc_\succ}$. 	\end{flushright}
\end{itemize}
\end{definition}

We remark that these semantics are related as follows:
$$\Kmc_\succ \iarmodels{X} q \Rightarrow \Kmc_\succ \armodels{X} q  \Rightarrow \Kmc_\succ \bravemodels{X} q.$$ 


\begin{example}[Ex. \ref{ex:prio-repairs} cont'd]
Considering the different semantics based upon completion-optimal repairs, we observe that 
\begin{itemize}
\item $\Kmc_\succ\iarmodels{C}\mn{Carnivorous}(a)$, 
\item $\Kmc_\succ\armodels{C}\mn{Snake}(a)$ but $\Kmc_\succ\not\iarmodels{C}\mn{Snake}(a)$, 
\item $\Kmc_\succ\bravemodels{C}\mn{Boa}(a)$ but $\Kmc_\succ\not\armodels{C}\mn{Boa}(a)$.
\end{itemize}
If we consider now AR semantics for the different kinds of optimal repairs, we find that
\begin{itemize}
\item $\Kmc_\succ\armodels{C}\mn{Snake}(a)$ but $\Kmc_\succ\not\armodels{G}\mn{Snake}(a)$,
\item $\Kmc_\succ\armodels{G}\mn{Carnivorous}(a)$ but $\Kmc_\succ\not\armodels{P}\mn{Carnivorous}(a)$,
\item $\Kmc_\succ\armodels{P}\mn{Eat}(a,b)$ but $\Kmc_\succ\not\armodels{S}\mn{Eat}(a,b)$.
\end{itemize}
\end{example}

\section{Complexity Analysis}\label{sec:complexity}

In this section, we analyze the data complexity of the central reasoning tasks related to optimal repairs. 

\subsection{Overview}
We define the problems and complexity classes we consider as well as some relevant properties of DL languages.

\mypar{Problems} 
Given a prioritized KB $\Kmc_\succ$ and a type of repairs $\text{X}\in\{\text{C},\text{G},\text{P},\text{S}\}$,  
\emph{repair checking} ({\sc isRep}) is deciding whether a set of assertions belongs to $\xreps{\Kmc_\succ}$;
 \emph{AR (resp. IAR, brave) entailment} ({\sc AR, IAR, brave}) is deciding whether $\Kmc_\succ$ entails a given BCQ under X-AR (resp. X-IAR, X-brave) semantics; 
 \emph{uniqueness} ({\sc unique}) is deciding whether $|\xreps{\Kmc_\succ}|=1$; 
and \emph{enumeration} ({\sc enum}) is enumerating all elements of $\mi{XRep}(\Kmc_\succ)$.

\mypar{Complexity classes} All of the complexity results stated in this paper are for the \emph{data complexity} measure, 
i.e., the complexity is with respect to the size of $\Amc$, with $\Tmc$ and $q$ treated as fixed (hence of constant size).

For decision problems, we will refer to the following complexity classes: 
 \aczero is the class of problems that can be solved by a uniform family of cicuits of constant depth and polynomial-size, with unbounded-fanin AND and OR gates; 
 \ptime is the class of problems solvable in polynomial time;
 \np is the class of problems solvable in non-deterministic polynomial time; 
\sigmaptwo is the class of problems solvable in non-deterministic polynomial time with access to an \np oracle;
\conp (resp. \piptwo) is the class of problems whose complement is in~\np (resp. \sigmaptwo). 
These classes are related as follow: $\aczero\subset\ptime\subseteq\np\subseteq\sigmaptwo$ and $\aczero\subset\ptime\subseteq \text{\conp} \subseteq\piptwo$. 
It is widely believed that all of the inclusions are proper. 

For enumeration problems, we will use the following classes: 
\totalp is the class of enumeration problems that can be solved in polynomial time in $n+m$ where $n$ is the input size and $m$ the output size; 
 \delayp is the class of enumeration problems that can be solved by an algorithm such that the delay between the $k^{th}$ and the $k+1^{th}$ solution is bounded by a polynomial in the input size; 
\incrdelayp is the class of enumeration problems that can be solved by an algorithm such that the delay between the $k^{th}$ and the $k+1^{th}$ solution is bounded by a polynomial in the size of the input and $k$. 
These classes are related as follow: $\delayp\subseteq \incrdelayp\subseteq \totalp$. If the decision problem associated to an enumeration problem is \np-hard, then the enumeration problem is not in $\totalp$.

\mypar{Results} Table~\ref{tab:complexity} presents an overview of the complexity results for \dllite. However, since most of the results hold for some DLs outside the \dllite family, we will state them for DLs that satisfy some of the following properties: 
\begin{itemize}
\item \boundedconf : the size of conflicts is bounded;
\item \binaryconf : conflicts have size at most two;
\item \polyCQ : polynomial time BCQ entailment; 
\item \polycons : polynomial time consistency checking. 
\end{itemize}
The most commonly considered DL-Lite dialects, such as \dllitecore\ and \dlliter\ (the basis for OWL 2 QL), satisfy all of these properties, while \dlliterhorn satisfies all except \binaryconf (the size of conflicts is bounded by the number of concept and role names occurring in the TBox).

\begin{table}[t]
\setlength{\tabcolsep}{1pt}
\begin{tabular}{lcccc}
\toprule
& $\reps{\Kmc}$ & $\preps{\Kmc_\succ}$&$\greps{\Kmc_\succ}$ & $\creps{\Kmc_\succ}$\\
\midrule
\sc{isRep}&in \ptime & in \ptime& \conp-c & in \ptime\\
\sc{AR} &\conp-c  &  \conp-c &\piptwo-c & \conp-c\\
\sc{IAR} &  in \aczero & \conp-c &\piptwo-c& \conp-c\\
\sc{brave} & in \aczero  & \np-c &\sigmaptwo-c& \np-c\\
\sc{unique}&in \ptime & \conp-c  & \piptwo-c$^\ddagger$& in \ptime \\
\sc{enum} &$\delayp^{\ast}$  & not $\totalp^\dagger$ &not $\totalp^\dagger$  &$\delayp^{\ast}$ \\
\bottomrule
\end{tabular}
\caption{Data complexity overview. Lower bounds hold for \dllitecore.  Upper bounds hold for \dlliterhorn except those noted with $^\ast$ which hold only for \dllitecore. Results for the $\reps{\Kmc}$ case are stated (or are straightforward consequences of results) in 
\protect\cite{DBLP:conf/rweb/BienvenuB16,DBLP:conf/ijcai/BienvenuR13,DBLP:conf/pods/LivshitsK17}.
$^\ddagger$ in \ptime if $\succ$ is transitive 
$^\dagger$ \incrdelayp if $\succ$ is score-structured 
$^\ast$ \incrdelayp for \dlliterhorn KBs with a score-structured priority relation $\succ$ (see Section \ref{sec:trans} for the definition)} 
\label{tab:complexity}
\end{table}
\subsection{Query Entailment}
We start by establishing the complexity of repair checking and BCQ entailment under the various semantics.

The upper bounds are stated in the following theorem and can be proven by adapting existing results from databases and inconsistency-tolerant OMQA
(recall that 
proofs of all results are provided in the appendix). 
Observe that this result applies to 
all DLs satisfying \polycons and \polyCQ, which includes prominent Horn DLs like $\mathcal{EL}$ and Horn-$\mathcal{SHIQ}$. 

\begin{restatable}{theorem}{PropComplexityRepairCheckEntailment}\label{Prop:complexityRepairCheckEntailment}
For DLs satisfying \polycons, 
repair checking is in \conp for globally-optimal repairs,  in \ptime for Pareto- or completion-optimal repairs. 

For DLs satisfying \polycons and \polyCQ, 
G-AR and G-IAR entailment are in \piptwo, G-brave entailment in \sigmaptwo, 
P-AR, P-IAR, C-AR and C-IAR entailment are in \conp, and P-brave and C-brave entailment in~\np. 
\end{restatable}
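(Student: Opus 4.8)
The plan is to handle each bound by unfolding the relevant definition and then appealing to the polynomial-time subroutines guaranteed by \polycons and \polyCQ, counting the alternations of quantifiers carefully.

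First I would deal with repair checking. For Pareto- and completion-optimal repairs, the key observation is that checking optimality reduces to a polynomial number of consistency checks, which each run in \ptime under \polycons. For completion-optimal repairs, I would use the greedy characterization of Staworko et al.\ quoted in the excerpt: $\Amc'$ is completion-optimal iff it arises from the greedy procedure, and one can verify membership by checking that $\Amc'$ is a repair and that no excluded assertion could have been greedily added before the assertions that actually block it --- concretely, that every $\alpha \in \Amc \setminus \Amc'$ is in conflict with some subset of $\Amc'$ all of whose members are $\succ$-preferable to $\alpha$ along some compatible order. For Pareto, I would verify directly that no single added assertion $\beta$ dominates all the removed ones. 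The only subtlety is to argue that these checks are genuinely polynomial given that conflicts may be enumerated; since consistency checking is in \ptime, the conflicts relevant to a fixed assertion can be probed by testing $\Tmc$-consistency of small candidate sets. For globally-optimal repairs, checking optimality means there is \emph{no} global improvement $\Amc''$; guessing such an $\Amc''$ and verifying it (a polynomial number of consistency and domination checks) places the \emph{complement} in \np, so \textsc{isRep} is in \conp.

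Next I would turn to the entailment problems, layering the quantifiers on top of these repair-checking bounds. For the Pareto and completion cases, since \textsc{isRep} is in \ptime and BCQ entailment from a fixed consistent ABox is in \ptime (by \polycons and \polyCQ), P-brave and C-brave entailment reduce to guessing a subset $\Amc'$, checking in \ptime that it is an optimal repair, and checking in \ptime that $\tup{\Tmc,\Amc'} \models q$; this is a single existential guess, giving \np. The AR and IAR variants are the duals: P-AR (C-AR) fails iff some optimal repair does not entail $q$, again a single \np guess-and-verify, so these are in \conp; and P-IAR (C-IAR) entailment asks whether $q$ follows from the intersection, whose complement --- some atom needed for $q$ is dropped by some optimal repair --- I would reduce to a \conp check by a similar guess of a witnessing repair. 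For the globally-optimal case, the extra quantifier alternation appears because \textsc{isRep} is already \conp rather than \ptime. G-brave entailment becomes ``guess $\Amc'$ (existential), verify it is globally optimal (a \conp subcheck, i.e.\ an \np oracle call), and verify $\tup{\Tmc,\Amc'}\models q$,'' yielding \sigmaptwo; dually G-AR and G-IAR land in \piptwo by making the outer quantifier universal over candidate repairs with an \np oracle confirming global optimality.

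The main obstacle I expect is the repair-checking step for \emph{completion}-optimal repairs: translating the greedy procedure into a polynomial-time membership test requires care, because a naive reading quantifies over all compatible total orders $\succ'$, of which there are exponentially many. The crux is to show that existence of a suitable completion can be decided locally and greedily --- essentially that $\Amc'$ is completion-optimal iff it is a repair and each discarded assertion is ``justified'' by a conflict contained in assertions that can be ordered above it --- and to verify this characterization is equivalent to the definition via globally-optimal repairs of some completion. Making the counting of consistency checks polynomial under \boundedconf versus only \polycons (where individual conflicts need not be small) is where I would spend the most effort; for the \polycons-only upper bounds I would rely on testing $\Tmc$-consistency of candidate sets directly rather than enumerating conflicts, ensuring the argument does not secretly assume \boundedconf.
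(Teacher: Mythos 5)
Your overall strategy coincides with the paper's: repair checking first (\ptime for Pareto and completion, \conp for global by guessing a global improvement), then a single layer of guess-and-check for brave/AR/IAR entailment, with an \np oracle for global optimality, giving the \np/\conp/\sigmaptwo/\piptwo bounds. The Pareto and global repair-checking arguments and the entailment layer are essentially the paper's proof. The genuine gap is the \ptime bound for \emph{completion}-optimal repair checking, which you flag as the main obstacle but do not close, and the concrete test you substitute is unsound. Read per excluded assertion, your condition --- every $\alpha\in\Amc\setminus\Amc'$ lies in a conflict $C$ with $C\setminus\{\alpha\}\subseteq\Amc'$ whose members can each be placed above $\alpha$ by \emph{some} compatible order --- lets the completion be chosen independently for each excluded assertion, so nothing enforces that these choices are jointly realizable in a single acyclic completion. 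Concretely, take conflicts $\{a,x\},\{b,y\},\{x,b\},\{y,a\}$ with priorities $x\succ b$ and $y\succ a$, and consider the repair $\Amc'=\{a,b\}$. It passes your test: $x$ is witnessed by $\{a,x\}$ (a completion with $a\succ' x$ exists) and $y$ by $\{b,y\}$ (a completion with $b\succ' y$ exists). But no single acyclic completion contains both $a\succ'x$ and $b\succ'y$, since together with the forced edges $x\succ'b$ and $y\succ'a$ they close a cycle; and whichever of $x\succ'a$, $y\succ'b$ a completion does contain, $\{x\}$ (resp.\ $\{y\}$) is a global improvement of $\{a,b\}$. So $\{a,b\}\in\preps{\Kmc_\succ}\setminus\creps{\Kmc_\succ}$ yet your check accepts it. Under the alternative reading where one completion must serve all excluded assertions simultaneously, the condition does characterize completion-optimality, but only via the nontrivial fact that Pareto- and globally-optimal repairs coincide for total priority relations, and it still leaves open exactly what you deferred: a polynomial-time procedure deciding whether such a completion exists.

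The paper closes this step differently and constructively: it runs the greedy procedure of Staworko et al., restricting the picks to assertions of $\Amc'$ that are $\succ$-maximal among the not-yet-considered ones, and accepts iff this run reconstructs $\Amc'$; each step is a single $\Tmc$-consistency check, so \polycons suffices and no conflict enumeration (hence no appeal to \boundedconf) is needed. A secondary imprecision in your entailment layer: for X-IAR non-entailment you guess ``a witnessing repair'' (singular). One repair missing one atom does not witness $\Amc^{\cap}\not\models q$ for a CQ $q$. The paper's certificate is a set $\Bmc=\{\alpha_1,\dots,\alpha_n\}\subseteq\Amc$ with $\tup{\Tmc,\Amc\setminus\Bmc}\not\models q$ together with $n$ optimal repairs $\Amc'_1,\dots,\Amc'_n$ such that $\alpha_i\notin\Amc'_i$; then $\Amc^{\cap}\subseteq\Amc\setminus\Bmc$ and monotonicity yields non-entailment. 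This is still a single polynomially bounded guess, so your stated \conp and \piptwo bounds survive, but the certificate must take this form.
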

For Pareto-optimal and completion-optimal repairs, \conp-hardness (resp.\ \np-hardness) of AR and IAR (resp.\ brave) IQ entailment in \dllitecore follow from the special case where the priority relation is given by priority levels, for which it is known the three families of optimal repairs we consider coincide \cite{DBLP:phd/hal/Bourgaux16}. 
The following theorem establishes the remaining lower bounds.

\begin{restatable}{theorem}{ThmHardnessGlobal}\label{ThmHardnessGlobal}
In \dllitecore, repair checking is \conp-hard for globally-optimal repairs, 
G-AR and G-IAR entailment are \piptwo-hard, and G-brave entailment \sigmaptwo-hard, even for IQs. 
\end{restatable}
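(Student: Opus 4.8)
The plan is to obtain all three bounds from a single QBF-based reduction engine, exploiting the fact that, by the definition of global improvement, deciding whether a repair $\Amc'$ is globally-optimal is itself a \conp\ question (``for every candidate $\Amc''$, $\Amc''$ is not a global improvement of $\Amc'$''), whose complement ``$\Amc'$ has a global improvement'' is an \np\ existential. I would reduce from the \conp-complete tautology problem for the repair-checking bound, and from the \sigmaptwo- and \piptwo-complete QBF problems $\exists X\,\forall Y\,\varphi$ and $\forall X\,\exists Y\,\varphi$ for G-brave and G-AR/G-IAR, respectively. Since conflicts in \dllitecore\ have size two (property \binaryconf), the conflict structure is an ordinary graph, repairs are maximal independent sets, and $\succ$ is a partial orientation of its edges; because the query must be a single atom, all combinatorial content is pushed into the TBox, the conflicts, and $\succ$.

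The heart of the construction is an \emph{improvement engine}: once a repair $\Amc'_X$ encoding an assignment to the outer variables $X$ is fixed, its global improvements should be in exact correspondence with assignments to the inner variables $Y$ that witness the targeted (non-)satisfaction of $\varphi(X,Y)$. I would represent each variable by a pair of assertions made mutually exclusive through a disjointness axiom $A\sqsubseteq\neg B$ so that a repair selects a truth value, fix the $X$-part inside $\Amc'_X$, and give the $Y$-part a default valuation that an improvement is allowed to flip. The clause logic is encoded in the conflicts and the priority edges rather than in the TBox (as \dllitecore\ cannot express conjunctive left-hand sides), and a distinguished assertion, detected by the instance query, is wired so that among candidate improvements $\Amc''$ flipping the $Y$-part, every dropped assertion is dominated \emph{exactly when} the corresponding $Y$-assignment realizes the property. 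The second quantifier alternation then comes for free from the shape of the globally-optimal predicate $\forall\Amc''\,\exists\alpha\,\forall\beta\,(\dots)$: the universal over candidate improvements $\Amc''$ realizes the $\forall Y$ (or $\exists Y$ after complementation), while the search for an undominated dropped assertion $\alpha$ performs the clause check.

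With the engine in place, the three results follow by choosing the polarity of $\varphi$ and the target atom. For G-brave I make $\Amc'_X$ entail the query atom for every well-formed $X$ and arrange that $\Amc'_X$ is globally-optimal iff $\forall Y\,\varphi(X,Y)$ (no falsifying $Y$, hence no improvement), so some globally-optimal repair entails $q$ iff $\exists X\,\forall Y\,\varphi(X,Y)$. For G-AR I instead make well-formed repairs \emph{not} entail $q$ while malformed ones do, and arrange that $\Amc'_X$ has a global improvement iff $\exists Y\,\varphi(X,Y)$; then $q$ holds in every globally-optimal repair iff $\forall X\,\exists Y\,\varphi(X,Y)$. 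The \conp-hardness of repair checking is the degenerate case with $X$ empty: a single fixed repair is globally-optimal iff $\forall Y\,\varphi(Y)$, a tautology test. For G-IAR I reuse the G-AR gadget, routing the truth of the query atom through a single assertion that lies in a globally-optimal repair iff the atom is entailed there, so it belongs to the intersection exactly when it is entailed by all globally-optimal repairs; hence G-IAR and G-AR entailment coincide for this atom.

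The principal difficulty, and the step I expect to require the most care, is designing $\succ$ so that it is \emph{acyclic} and relates only assertions sharing a conflict (as the definition of priority relation demands), while still forcing the global improvements to correspond \emph{exactly} to the intended $Y$-witnesses — neither missing any nor admitting spurious improvements that flip unrelated assertions. In particular, I must guarantee that any $\Amc''$ which flips only some $Y$-variables or touches the $X$-part always leaves some dropped assertion undominated, so that only complete, faithful $Y$-encodings can constitute improvements, and I must check that malformed repairs behave as the query requires. Verifying acyclicity and the conflict restriction of $\succ$ throughout, together with confirming that the TBox stays within \dllitecore\ (atomic concepts, unqualified existentials, and disjointness only), is where the construction is most delicate.
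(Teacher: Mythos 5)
Your proposal follows essentially the same route as the paper's proof: reduction from UNSAT for the \conp-hardness of repair checking, from $\forall\exists$-2QBF (and its complement) for the \piptwo\ and \sigmaptwo\ bounds, with global improvements of a repair encoding the outer assignment corresponding exactly to inner-quantifier witnesses, a single ABox atom as query so that G-AR and G-IAR entailment coincide, and the same construction reused for G-brave with the complementary query atom. The ``improvement engine'' you describe is realized in the paper via $P/N$ edge assertions, $\mi{Unsat}$ and $\mi{Block}$ assertions, and priority chains such as $\mi{Exist}(a)\succ\mi{Block}_\exists(a,x_j)\succ P(c_i,x_j)\succ\mi{Unsat}(a,c_i)$ --- precisely the delicate gadget design you flag at the end but leave unexecuted.
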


\subsection{Uniqueness and Enumeration}
We now turn our attention to the problems of deciding uniqueness and enumerating repairs. Note that in the case of classical repairs, deciding whether $|\reps{\Kmc}|=1$ amounts to checking whether $\Kmc$ is consistent (recall our assumption that all self-contradictory assertions have been removed). 

As observed by \citeauthor{DBLP:conf/pods/LivshitsK17} \shortcite{DBLP:conf/pods/LivshitsK17} in the database setting,  
classical repairs correspond to the maximal independent sets (MISs) of the conflict hypergraph. 
 It is known that MIS enumeration for graphs is in $\delayp$ \cite{DBLP:journals/ipl/JohnsonP88}, while for  
hypergraphs with bounded-size hyperedges, the problem is in $\incrdelayp$ \cite{DBLP:journals/siamcomp/EiterG95,DBLP:journals/ppl/BorosEGK00}.  
Since the conflict hypergraph can be tractably computed 
for DLs with bounded-size conflicts, 
enumeration of $\reps{\Kmc}$ 
 is in \incrdelayp for DLs with bounded conflicts, and in \delayp\ for DLs with binary conflicts. 

\citeauthor{DBLP:conf/icdt/KimelfeldLP17} \shortcite{DBLP:conf/icdt/KimelfeldLP17} and \citeauthor{DBLP:conf/pods/LivshitsK17} \shortcite{DBLP:conf/pods/LivshitsK17} provide intractability results for \textsc{unique} and \textsc{enum}
in the case of globally- and Pareto-optimal repairs 
of databases with functional dependencies. Their proofs do not 
transfer to \dllite KBs, but via different proofs, we can establish analogous results in our context.

\begin{restatable}{theorem}{ThmUniqueness}\label{ThmUniqueness}
Uniqueness is \conp-complete for Pareto-optimal repairs, \piptwo-complete for globally-optimal repairs. Upper bounds hold for DLs satisfying \polycons. 
Lower bounds hold for \dllitecore.
\end{restatable}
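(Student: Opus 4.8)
The plan is to obtain the upper bounds by complementation, reducing uniqueness to repair checking, and to obtain the matching lower bounds by encoding (quantified) Boolean formulas. I first note that the set of optimal repairs is never empty: completion-optimal repairs always exist (via the greedy procedure recalled above, under our assumption that no single assertion is self-contradictory), and since $\creps{\Kmc_\succ}\subseteq\greps{\Kmc_\succ}\subseteq\preps{\Kmc_\succ}$, both $\preps{\Kmc_\succ}$ and $\greps{\Kmc_\succ}$ contain at least one element. Hence $|\xreps{\Kmc_\succ}|=1$ fails exactly when two \emph{distinct} optimal repairs exist, so it suffices to place this latter condition in the appropriate class. To test non-uniqueness I would existentially guess two distinct subsets $\Amc_1,\Amc_2\subseteq\Amc$ and check that each is an optimal repair. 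By Theorem~\ref{Prop:complexityRepairCheckEntailment}, for DLs satisfying \polycons this check is in \ptime for Pareto-optimal repairs and in \conp for globally-optimal repairs; two such checks conjoined stay in the same class. Thus non-uniqueness is in \np for the Pareto case and in $\exists\cdot\conp=\sigmaptwo$ for the global case, giving uniqueness in \conp and \piptwo respectively.

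For the \conp lower bound (Pareto case), I would reduce from unsatisfiability, constructing from a CNF formula $\phi$ a \dllitecore prioritized KB $\Kmc_\succ$ whose conflict hypergraph is assembled from variable and clause gadgets connected only by binary conflicts. The key design goal is that a single \emph{canonical} repair (encoding a fixed ``default'' valuation) is always Pareto-optimal and is forced by the priority relation to be the unique one, \emph{unless} $\phi$ admits a satisfying assignment. A satisfying assignment should give rise to a second repair that is ``protected'' by the priorities, in the sense that every attempt to swap assertions back toward the canonical repair introduces a strictly less preferred assertion and so fails to be a Pareto improvement; conversely, when $\phi$ is unsatisfiable, every non-canonical repair admits a Pareto improvement toward the canonical one. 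This yields at least two Pareto-optimal repairs iff $\phi$ is satisfiable, i.e.\ a unique repair iff $\phi$ is unsatisfiable.

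For the \piptwo lower bound (global case), I would reduce from validity of $\forall\vec{x}\,\exists\vec{y}\,\psi(\vec{x},\vec{y})$. The construction extends the Pareto gadget so that assignments to $\vec{x}$ determine candidate ``second'' repairs, while the existence of a witnessing $\vec{y}$ provides a \emph{global} improvement that destroys global-optimality of such a candidate. Then a second globally-optimal repair exists iff some $\vec{x}$ admits \emph{no} witness $\vec{y}$, i.e.\ iff the QBF is \emph{invalid}; equivalently, the globally-optimal repair is unique iff the QBF is valid. This reduction builds naturally on the \conp-hardness of global repair checking from Theorem~\ref{ThmHardnessGlobal}: verifying that a candidate $\vec{x}$-repair is globally optimal already encodes the inner $\forall\vec{y}$ (unsatisfiability) test, and the outer uniqueness quantifier supplies the second level of alternation.

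The main obstacle I anticipate is the design of the priority relation in the lower bounds. Because $\succ$ must be \emph{acyclic} and may only relate assertions sharing a conflict (and in \dllitecore all conflicts are binary), I have limited freedom to ``protect'' the intended repairs: I must ensure simultaneously that (i) the canonical repair resists every Pareto (resp.\ global) improvement, (ii) spurious repairs not corresponding to valuations are always improvable back to the canonical one, and (iii) the distinction between Pareto and global improvements is exploited so that essentially the same gadget yields \conp in one case and the extra alternation needed for \piptwo in the other. Establishing correctness will require a careful case analysis of exactly which swaps constitute Pareto versus global improvements, which is precisely where the argument departs from the database/functional-dependency proofs of \citeauthor{DBLP:conf/icdt/KimelfeldLP17} and \citeauthor{DBLP:conf/pods/LivshitsK17}.
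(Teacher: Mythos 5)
Your proposal follows essentially the same route as the paper's proof: the upper bounds are obtained exactly as you describe (guess two distinct sets and verify each via the repair-checking bounds of Theorem~\ref{Prop:complexityRepairCheckEntailment}, noting that optimal repairs always exist), and the lower bounds are proved by reductions from UNSAT and from validity of $\forall\exists$-QBF in which a canonical repair is always optimal and a second optimal repair exists iff the formula is satisfiable (resp.\ the QBF is invalid) --- precisely the polarity and gadget design goals you state. The only part left implicit in your plan, namely the concrete \dllitecore\ gadgets and acyclic priority relations meeting your conditions (i)--(iii), is carried out in the paper along exactly the lines you anticipate.
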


\begin{corollary}
Enumeration of Pareto-optimal or globally-optimal repairs is not in \totalp for \dllitecore.
\end{corollary}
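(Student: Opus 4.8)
The plan is to derive the corollary directly from the uniqueness hardness of Theorem~\ref{ThmUniqueness} together with the meta-principle recalled in the preliminaries: if the decision problem associated with an enumeration problem is \np-hard, then the enumeration problem is not in \totalp. Almost all of the substance lives in Theorem~\ref{ThmUniqueness}, which I am allowed to assume; the only real work is to pin down the relevant associated decision problem and to connect it to uniqueness.

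First I would fix the associated decision problem to be: given a prioritized \dllitecore\ KB $\Kmc_\succ$, decide whether $|\xreps{\Kmc_\succ}| \ge 2$, for $\text{X}\in\{\text{P},\text{G}\}$. The reason this is the pertinent problem is the standard fact underlying the cited meta-principle, namely that a \totalp\ enumeration algorithm yields, for every fixed $k$, a polynomial-time test for ``at least $k$ solutions.'' Concretely, let $p$ bound the algorithm's total running time on inputs of size $n$ producing $m$ outputs, and run it for $p(n+k)$ steps: if it halts within this budget then all $m$ repairs have been listed and one simply checks whether $m \ge k$; if it fails to halt within the budget, this can only happen when $m \ge k$ (since $m \le k-1$ forces total running time $\le p(n+k-1) \le p(n+k)$, assuming $p$ non-decreasing). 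Taking $k=2$ shows that membership in \totalp\ would put ``$|\xreps{\Kmc_\succ}| \ge 2$'' in \ptime, so it suffices to argue this problem is \np-hard.

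Next I would observe that optimal repairs always exist: $\creps{\Kmc_\succ}$ is nonempty (e.g.\ by the greedy procedure recalled just after the definition of optimal repairs), and the inclusions $\creps{\Kmc_\succ}\subseteq\greps{\Kmc_\succ}\subseteq\preps{\Kmc_\succ}$ then guarantee that $\greps{\Kmc_\succ}$ and $\preps{\Kmc_\succ}$ are nonempty as well. Hence $|\xreps{\Kmc_\succ}|\ge 1$ always holds, so deciding $|\xreps{\Kmc_\succ}| \ge 2$ is exactly the complement of the uniqueness problem $|\xreps{\Kmc_\succ}| = 1$. By Theorem~\ref{ThmUniqueness}, uniqueness is \conp-hard for Pareto-optimal repairs and \piptwo-hard for globally-optimal repairs in \dllitecore; taking complements, deciding $|\preps{\Kmc_\succ}|\ge 2$ is \np-hard and deciding $|\greps{\Kmc_\succ}|\ge 2$ is \sigmaptwo-hard, and in both cases the associated problem is \np-hard.

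Finally I would invoke the meta-principle to conclude that enumeration of Pareto-optimal (resp.\ globally-optimal) repairs is not in \totalp\ for \dllitecore. There is essentially no technical obstacle here, since the hard content is encapsulated in Theorem~\ref{ThmUniqueness}; the one point that genuinely needs checking is that $\xreps{\Kmc_\succ}$ is never empty, so that the uniqueness problem and ``fewer than two repairs'' really coincide and the complement of the established hardness correctly yields \np-hardness of the associated decision problem.
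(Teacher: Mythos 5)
Your proof is correct and takes essentially the same route as the paper: the corollary is meant as an immediate consequence of Theorem~\ref{ThmUniqueness} combined with the meta-principle recalled in the preliminaries that \np-hardness of the associated decision problem excludes \totalp\ enumeration, which is precisely your argument (noting that the paper's reductions show satisfiability/non-validity iff there are at least two repairs, so complementing the \conp/\piptwo\ lower bounds gives \np-hardness of deciding ``at least two''). Your extra steps---checking non-emptiness of $\preps{\Kmc_\succ}$ and $\greps{\Kmc_\succ}$ so that non-uniqueness coincides with having at least two repairs, and unpacking why a \totalp\ algorithm yields a polynomial-time test for that property---simply make explicit what the paper leaves implicit.
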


\citeauthor{DBLP:conf/icdt/KimelfeldLP17} \shortcite{DBLP:conf/icdt/KimelfeldLP17}
provide an algorithm which, given an arbitrary conflict hypergraph and a priority relation, decides uniqueness for completion-optimal repairs in polynomial time. Hence deciding whether $|\creps{\Kmc_\succ}|=1$ is in \ptime for DLs with bounded conflicts, since in this case the conflict hypergraph can be computed in polynomial time. 
\citeauthor{DBLP:conf/pods/LivshitsK17} \shortcite{DBLP:conf/pods/LivshitsK17} describe an algorithm that enumerates completion-optimal repairs with polynomial delay for \emph{conflict graphs}. It follows that for DLs with binary conflicts, enumeration of completion-optimal repairs is in \delayp. 
The case of completion-optimal repairs with non-binary conflicts remains open.

\subsection{Case of Transitive Priority Relations} \label{sec:trans}
A natural case to consider is when the priority relation is \emph{transitive}. This occurs in particular when the priority relation captures the relative reliability of the facts.

\begin{definition}
A priority relation is \emph{transitive} if $\alpha_1\succ\dots\succ\alpha_n$ and $\{\alpha_1,\alpha_n\}\subseteq C\in\conflicts{\Kmc}$ implies $\alpha_1\succ\alpha_n$.
\end{definition}

\citeauthor{DBLP:conf/icdt/KimelfeldLP17} \shortcite{DBLP:conf/icdt/KimelfeldLP17} proved that when the priority relation is transitive, deciding uniqueness for globally-optimal repairs can be done in polynomial time when the conflict hypergraph is given. 
This result is applicable to all DLs for which the conflict hypergraph is computable in polynomial time. 
A natural question is whether any of our other problems become easier if we assume $\succ$ is transitive. 
We found that this is not the case, although the transitivity requirement makes the lower bounds proofs more involved. 
In particular, we can no longer use the \conp-hardness proof for uniqueness to derive that 
the enumeration of globally-optimal repairs is not in \totalp, but instead show this via the 
\conp-hardness of 
deciding whether a given set of ABoxes 
 is exactly $\greps{\Kmc_\succ}$.

\begin{restatable}{theorem}{ThmTransitivity}
When $\succ$ is transitive, uniqueness is in \ptime for globally-optimal repairs and DLs satisfying \boundedconf. 
All other lower bounds in Table \ref{tab:complexity} still hold for \dllitecore. 
\end{restatable}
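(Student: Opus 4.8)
The plan is to treat the two halves of the statement separately. For the \ptime upper bound I would simply combine the observations made just before the theorem: \citeauthor{DBLP:conf/icdt/KimelfeldLP17} \shortcite{DBLP:conf/icdt/KimelfeldLP17} already give a polynomial-time procedure that decides uniqueness of globally-optimal repairs from the conflict hypergraph whenever $\succ$ is transitive, and for DLs satisfying \boundedconf the conflict hypergraph is computable in polynomial time (enumerate the polynomially many assertion sets of bounded size and retain the minimal \Tmc-inconsistent ones, using polynomial-time consistency checking). Feeding the computed hypergraph and $\succ$ into their procedure yields the bound, so no new ideas are required here.

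The bulk of the work is re-establishing the lower bounds of Table~\ref{tab:complexity} under the extra assumption that $\succ$ is transitive. I would first dispose of the Pareto- and completion-optimal entries for {\sc AR}, {\sc IAR} and {\sc brave}: these follow from the special case in which $\succ$ is induced by priority levels, and such a relation is \emph{already} transitive in the sense of our definition. Indeed, along any chain $\alpha_1\succ\dots\succ\alpha_n$ the levels strictly decrease, so if $\alpha_1$ and $\alpha_n$ additionally lie in a common conflict then $\alpha_1$ sits in a strictly higher level than $\alpha_n$ and hence $\alpha_1\succ\alpha_n$. Thus those reductions carry over verbatim.

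The remaining lower bounds concern globally-optimal repairs ({\sc isRep} \conp-hard, {\sc AR}/{\sc IAR} \piptwo-hard, {\sc brave} \sigmaptwo-hard, all from Theorem~\ref{ThmHardnessGlobal}) together with Pareto {\sc unique} (\conp-hard, Theorem~\ref{ThmUniqueness}); here the priority relation genuinely exploits the gap between global and Pareto improvements and is not transitive as given. For these I would revisit each reduction and re-engineer the priority gadgets. Since \dllitecore has binary conflicts, $\succ$ is an acyclic orientation of a subset of the conflict edges, and transitivity forces a direct edge $\alpha_1\succ\alpha_n$ exactly when a directed $\succ$-path joins two conflict-adjacent assertions. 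The cleanest fix is to lay out the gadgets so that the endpoints of every $\succ$-path of length $\ge 2$ are \emph{not} themselves in a common conflict, which makes the transitivity condition vacuous on long chains; where a shortcut edge is genuinely forced, I would add it and re-verify that the sets of Pareto- and globally-optimal repairs are unchanged, so the original correctness argument still applies. I expect this to be the main obstacle: inserting a shortcut edge can turn a non-improvement into a Pareto improvement (or conversely), potentially collapsing the very global-versus-Pareto distinction the reduction relies on, so the gadgets must be designed so that no problematic chain ever arises.

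Finally, for the enumeration entries I would argue through the connection between \totalp and the halting time of a hypothetical enumerator. If enumeration of globally-optimal (resp.\ Pareto-optimal) repairs were in \totalp, there would be an algorithm running in time $p(n+m)$ with $m$ the number of output repairs; running it for $p(n+|S|)$ steps and comparing the produced set with a candidate $S$ decides ``$S=\greps{\Kmc_\succ}$'' (resp.\ ``$|\preps{\Kmc_\succ}|=1$'') in polynomial time, because in the positive case the enumerator provably halts within that budget, so non-halting within it already witnesses a negative answer. For Pareto repairs this reduces enumeration-in-\totalp to the (still \conp-hard, by the adapted reduction above) uniqueness problem, hence enumeration is not in \totalp. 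For globally-optimal repairs uniqueness is now tractable, so instead I would prove that deciding whether a given set of ABoxes is exactly $\greps{\Kmc_\succ}$ is \conp-hard under transitive $\succ$: reducing from a \conp-hard problem, I would build $\Kmc_\succ$ and a candidate set $S$ listing the ``forced'' globally-optimal repairs, arranged so that an additional globally-optimal repair exists iff the formula is satisfiable, whence $S=\greps{\Kmc_\succ}$ iff the instance is a no-instance. Combined with the verification argument, this shows global enumeration is not in \totalp, completing the table.
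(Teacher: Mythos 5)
Your strategic skeleton matches the paper's proof quite closely: the \ptime upper bound via the algorithm of \citeauthor{DBLP:conf/icdt/KimelfeldLP17} \shortcite{DBLP:conf/icdt/KimelfeldLP17} plus polynomial-time computability of the conflict hypergraph under \boundedconf; the observation that score/level-induced priorities are already transitive, so the Pareto- and completion-optimal entailment lower bounds carry over verbatim; the realization that, since uniqueness for globally-optimal repairs becomes tractable, enumeration hardness must instead be obtained from \conp-hardness of deciding whether a given family of ABoxes is exactly $\greps{\Kmc_\succ}$; and the design principle of making transitivity vacuous by arranging that the endpoints of every $\succ$-chain of length at least two never share a conflict. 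That last principle is precisely how the paper repairs the Pareto uniqueness reduction of Theorem~\ref{ThmUniqueness}: it duplicates all concepts and roles into two indexed copies so that $\mi{Unsat}_1$ conflicts with $P_2,N_2$ rather than $P_1,N_1$ (and symmetrically), after which the original argument goes through.

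There are nonetheless genuine gaps. The proposal stops at strategy: none of the reductions that form the technical core --- (a) \conp-hardness of Pareto uniqueness under transitive $\succ$, (b) \conp-hardness of the ``exact set'' problem for globally-optimal repairs, and (c) \conp-hardness of globally-optimal repair checking --- is actually constructed or verified, and you yourself flag the gadget re-engineering as ``the main obstacle'' without resolving it. The paper settles (a) by the two-copy trick, settles (b) with a new UNSAT gadget (concepts $\mi{Sat}$, $\mi{All}$, $\mi{Exist}$, $\mi{Block}_{\exists}$, $\mi{Block}_{\mn{all}}$) yielding three candidate repairs $\Amc_1,\Amc_2,\Amc_3$ with $\greps{\Kmc_\succ}=\{\Amc_1,\Amc_2,\Amc_3\}$ iff $\Phi$ is unsatisfiable, and settles (c) by reusing that same gadget ($\Amc_3\in\greps{\Kmc_\succ}$ iff $\Phi$ is unsatisfiable) rather than patching the original reduction. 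There is also a factual slip: you assert that all reductions of Theorem~\ref{ThmHardnessGlobal} fail transitivity, but the 2QBF reduction for G-AR/G-IAR/G-brave is transitive as given --- there $\mi{Exist}(a)$ dominates only $\mi{Univ}(a)$, and the endpoints of longer chains such as $\mi{Block}_\forall\succ\mi{Block}_\exists\succ P\succ\mi{Unsat}$ never share a conflict --- so it transfers with no work at all. Only the repair-checking reduction is genuinely non-transitive, because there $\mi{Exist}(a)\succ\mi{Block}_\exists(a,x_j)\succ P(c_i,x_j)\succ\mi{Unsat}(a,c_i)$ while $\mi{Exist}(a)$ and $\mi{Unsat}(a,c_i)$ do lie in a common conflict. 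In short, your plan is the right one and mirrors the paper's, but as written the theorem's hard content remains unproven.
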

\mypar{Score-structured priority relations} 
An interesting special case of transitive priority relation 
arises when every assertion $\alpha$ is assigned a natural number called its \emph{score} (or \emph{priority level}) $s(\alpha)$, which induces a priority relation as follows: for every pair of assertions $\alpha$ and $\beta$ that belongs to a conflict, $\alpha\succ\beta$ if and only if $s(\alpha)>s(\beta)$. 
\citeauthor{DBLP:conf/pods/LivshitsK17}~\shortcite{DBLP:conf/pods/LivshitsK17} call such a priority relation \emph{score-structured} and show that deciding whether a priority is score-structured is in \ptime and that if $\succ$ is score-structured, $\creps{\Kmc_\succ}= \greps{\Kmc_\succ}= \preps{\Kmc_\succ}$. 
This relation was also given in 
 \cite{DBLP:conf/aaai/BienvenuBG14,DBLP:phd/hal/Bourgaux16} where it is shown that they further coincide 
 with another notion of preferred repairs directly defined from ABoxes partitioned into priority levels. 
 
For score-structured priorities, the complexity of all decision problems is the same 
as in the case of completion-optimal repairs, with the lower bounds shown in \cite{DBLP:phd/hal/Bourgaux16}. 
By contrast, for enumeration, it follows from the algorithm for enumerating score-structured priority repairs of 
\citeauthor{DBLP:conf/pods/LivshitsK17} that enumeration is in \incrdelayp for DLs with bounded conflicts and consistency checking in \ptime. 

\section{Preference-Based Argumentation}\label{sec:arg}
In preparation for the following section, we 
recall the basics of argumentation frameworks and 
previously studied extensions with preferences and collective attacks. 
We also prove some new results 
(Theorems \ref{thm:symm-coherent}, \ref{thm:SymPAFChar}, \ref{thm:StrongSymmSETAF}, and \ref{psetaf-coh}).

\subsection{Argumentation Frameworks}

We consider \emph{finite argumentation frameworks}. Argumentation frameworks over an infinite set of arguments have also been studied in the literature but finiteness is an ordinary assumption and is made for all results we recall or extend here.

\begin{definition}
An \emph{argumentation framework} (AF) is a pair $(\args,\defeat)$ where $\args$ is
a finite set of arguments and $\defeat\, \subseteq \args \times \args$ is the attack relation.
When $(\alpha, \beta)\! \in \defeat$, we say that 
$\alpha$ \emph{attacks} $\beta$, alternatively denoted by 
$\alpha \defeat \beta$.\end{definition}

We recall some commonly used notation and terminology, letting  
 $(\args , \defeat )$ be the considered AF and $A \subseteq \args$.
We use $A^+=\{\beta \mid \alpha \defeat \beta \text{ for some }\alpha \in A\}$ 
to denote the set of arguments attacked by arguments from $A$.   
We say that $A$ \emph{defends} $\gamma\in \args$ (or, $\gamma$ is \emph{defended by} $A$) iff $\{\beta \mid \beta \defeat \gamma\} \subseteq A^+ $. A set $A \subseteq \args$ is \emph{conflict-free} if $A \cap A^+ = \emptyset$. 
The \emph{characteristic function} $\chff:2^{\args} \mapsto 2^{\args}$ of $F=(\args , \defeat )$ 
is defined as follows: $\chff(A) = \{\alpha \mid \alpha \text{ is defended by } A\}$. 
A set $A \subseteq \args$ is \emph{admissible} if it is conflict-free and $A \subseteq \chff(A)$. 

Argumentation semantics are usually based upon \emph{extensions}, intended to capture coherent sets of arguments. 
We recall some prominent notions of extension:

\begin{definition}\label{def:extensions}
Let $F=(\args , \defeat )$.   
Then $E \subseteq \args$ is a: 
\begin{itemize}
\item \emph{complete extension} iff $E$ is conflict-free and $E=\chff(E)$;
\item \emph{grounded extension} iff $E$ is the $\subseteq$-minimal complete extension, or equivalently, 
the least fixpoint of $\chff$; 
\item \emph{preferred extension} iff $E$ is a $\subseteq$-maximal admissible set;
\item \emph{stable extension} iff $E^+ = \args\setminus E$.
\end{itemize}
\end{definition}

While every stable extension is a preferred extension, the converse does not hold in general \cite{DBLP:journals/ai/Dung95}. 
The term \emph{coherent} designates AFs for which these two notions coincide (i.e.\ every preferred extension is stable).  
Coherence is viewed as a desirable property, and several sufficient conditions for coherence have been identified. In particular:

\begin{definition}
An AF $(\args , \defeat )$ is \emph{symmetric}
 iff  $\defeat$ is symmetric and irreflexive.  
\end{definition}

\begin{theorem} \cite{DBLP:conf/ecsqaru/Coste-MarquisDM05}
Every symmetric AF is coherent.
\end{theorem}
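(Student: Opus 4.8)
The plan is to fix a preferred extension $E$ of a symmetric AF $(\args,\defeat)$ and show directly that $E^+ = \args\setminus E$, i.e.\ that $E$ is stable. Since $E$ is admissible it is in particular conflict-free, so $E\cap E^+=\emptyset$ and hence $E^+\subseteq\args\setminus E$ comes for free. The entire content of the argument is the reverse inclusion $\args\setminus E\subseteq E^+$, which I would establish by contradiction: assume some $\gamma\in\args\setminus E$ is \emph{not} attacked by $E$ (so $\gamma\notin E^+$), and derive a strictly larger admissible set, contradicting the $\subseteq$-maximality of $E$.

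The key step is to exploit symmetry to show that $E\cup\{\gamma\}$ is again admissible. First I would use symmetry of $\defeat$ to observe that $\gamma\notin E^+$ forces $\gamma$ to attack no element of $E$ either: an attack $\gamma\defeat\alpha$ with $\alpha\in E$ would give $\alpha\defeat\gamma$, placing $\gamma$ in $E^+$. Combined with irreflexivity (so $\gamma$ does not attack itself) and the conflict-freeness of $E$, this yields that $E\cup\{\gamma\}$ is conflict-free.

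For admissibility I would then check that every argument of $E\cup\{\gamma\}$ is defended. The old arguments $\alpha\in E$ remain defended because $E\subseteq E\cup\{\gamma\}$ implies $E^+\subseteq (E\cup\{\gamma\})^+$, so any attacker of $\alpha$ that was counterattacked by $E$ is still counterattacked. The new argument $\gamma$ defends itself, again via symmetry: any $\delta\defeat\gamma$ satisfies $\gamma\defeat\delta$, so $\delta\in\{\gamma\}^+\subseteq(E\cup\{\gamma\})^+$. Hence $E\cup\{\gamma\}$ is admissible and strictly contains $E$, contradicting that $E$ is a maximal admissible set.

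Concluding, no such $\gamma$ exists, so $\args\setminus E\subseteq E^+$, and together with the easy inclusion this gives $E^+=\args\setminus E$; thus every preferred extension is stable. Since every stable extension is always preferred, the two notions coincide and the AF is coherent. I do not expect any serious obstacle: the only point requiring care is to invoke symmetry \emph{twice} --- once to preserve conflict-freeness when adding $\gamma$, and once to make $\gamma$ self-defending --- together with irreflexivity to rule out $\gamma$ attacking itself.
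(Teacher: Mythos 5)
Your proof is correct: conflict-freeness of $E\cup\{\gamma\}$ and self-defence of $\gamma$ both follow from symmetry plus irreflexivity exactly as you argue, and this is the standard argument for the cited result. Note that the paper itself does not prove this statement (it is imported from Coste-Marquis et al.); its own related proof is for the generalization to symmetric PAFs (Theorem~\ref{thm:symm-coherent}), which follows the same skeleton of contradicting maximality by adjoining an unattacked outside argument, except that there the preference-reduced attack relation is no longer symmetric, so the added argument must in addition be chosen maximal w.r.t.\ the preference relation --- a complication that vanishes in your preference-free setting.
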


\subsection{Enriching AFs with Preferences}
There has been significant interest in extending AFs to allow for preferences between arguments \cite{DBLP:journals/jar/AmgoudC02,DBLP:conf/comma/KaciTV18}. 

\begin{definition}
A \emph{preference-based argumentation framework} (PAF) is a triple $(\args,\defeat, \succ)$, where $\args$ and $\defeat$ are as before, and $\succ$ is an acyclic binary relation over $\args$ (called the \emph{preference relation}). \emph{Symmetric PAFs} are obtained by requiring 
($\args, \defeat$) to be symmetric. 
\end{definition}

It is typical to assume that the preference relation $\succ$ is transitive. This is a reasonable assumption, but for the sake of generality, the preceding definition does not impose this. 

The standard way of defining the semantics of PAFs is via reduction to (plain) AFs:

\begin{definition}\label{red-paf}
Given a PAF $(\args,\defeat, \succ)$, the \emph{corresponding AF} is $(\args, \defeat_\succ)$, 
where 
$\alpha \defeat_\succ \beta$ iff $\alpha \defeat \beta$ and $\beta \not \succ \alpha$. A subset $E \subseteq \args$ is a \emph{stable (resp.\ preferred, grounded, complete) extension of a PAF} $(\args,\defeat, \succ)$ iff it is a stable (resp.\ preferred, grounded, complete) extension of the corresponding~AF.
\end{definition}

While alternative reductions 
have been proposed in \cite{DBLP:journals/ijar/AmgoudV14} and \cite{DBLP:conf/comma/KaciTV18}, all of these reductions 
coincide for symmetric PAFs, which is the case that will interest us here.

\begin{theorem}\cite{DBLP:journals/ijar/AmgoudV14}\label{thm:symm-trans-coherent}
Every symmetric PAF with a transitive preference relation is coherent. 
\end{theorem}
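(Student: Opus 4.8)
The plan is to use the reduction of Definition~\ref{red-paf} and prove directly that every preferred extension of the corresponding AF $(\args,\sdefeat)$ is stable; since every stable extension is always preferred, this gives coherence. First I would record the shape of $\sdefeat$ induced by a symmetric, irreflexive $\defeat$: for a conflicting pair with $\alpha\defeat\beta$ (hence also $\beta\defeat\alpha$), if $\alpha\succ\beta$ the mutual attack collapses to the single arrow $\alpha\sdefeat\beta$, whereas if $\alpha$ and $\beta$ are $\succ$-incomparable the attack survives in both directions. The one fact I really need is the contrapositive of this: whenever $\alpha\sdefeat\beta$ but $\beta\not\sdefeat\alpha$, it must be that $\alpha\succ\beta$. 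Note also that $\sdefeat$ inherits irreflexivity from $\defeat$.

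Fix a preferred extension $E$ and set $U=\args\setminus(E\cup E^+)$; the goal is to show $U=\emptyset$, which together with conflict-freeness ($E\cap E^+=\emptyset$) yields $E^+=\args\setminus E$, i.e.\ stability. The key preliminary step is to show that every $\gamma\in U$ is completely isolated from $E$: it is not attacked by $E$ (immediate from $\gamma\notin E^+$), and it does not attack $E$ either. The latter uses admissibility of $E$: if $\gamma\sdefeat\alpha$ for some $\alpha\in E$, then $E$ would have to defend $\alpha$, forcing some $\alpha'\in E$ with $\alpha'\sdefeat\gamma$ and hence $\gamma\in E^+$, a contradiction.

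Assuming $U\neq\emptyset$, I would then pick a $\succ$-maximal element $\gamma$ of $U$ — such an element exists because $\succ$ is acyclic on the finite set $U$ (pass to its transitive closure, a strict partial order) — and argue that $E\cup\{\gamma\}$ is admissible, contradicting the maximality of $E$. Conflict-freeness is clear from the isolation property together with irreflexivity of $\sdefeat$. For defence, the members of $E$ remain defended since enlarging a set only helps; the delicate point is defending $\gamma$ itself. Every attacker $\delta$ of $\gamma$ lies outside $E$ (by isolation), so either $\delta\in E^+$, in which case $E$ already attacks it, or $\delta\in U$. In the latter case the $\succ$-maximality of $\gamma$ gives $\delta\not\succ\gamma$, so by symmetry of $\defeat$ and the defining condition of $\sdefeat$ we obtain the counterattack $\gamma\sdefeat\delta$, so $\gamma$ defends itself against $\delta$.

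I expect the main obstacle to be exactly this last case: because $\sdefeat$ is no longer symmetric, an argument outside $E\cup E^+$ can in principle have an attacker that neither $E$ nor the argument itself attacks back, which is precisely the situation the fully symmetric-AF argument of Coste-Marquis et al.\ avoids for free. Choosing $\gamma$ to be $\succ$-maximal in $U$ is what rules this out, and this is where acyclicity of $\succ$ does the real work; transitivity of the preference relation is convenient for the statement but, beyond guaranteeing a maximal element, is not essential to the maximal-element argument.
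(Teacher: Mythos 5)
Your proof is correct and takes essentially the same approach as the paper: the paper establishes this cited result via its generalization (Theorem~\ref{thm:symm-coherent}, which drops transitivity), whose proof likewise picks a $\succ$-maximal element of $(\args\setminus E)\setminus E^+$ (guaranteed by acyclicity) and uses symmetry of $\defeat$ together with the definition of $\sdefeat$ to contradict the maximality of the preferred extension $E$ --- exactly matching your closing observation that transitivity is inessential. The only difference is one of presentation: you verify directly that $E\cup\{\gamma\}$ is admissible (every attacker in $U$ is counterattacked by maximality), whereas the paper argues by contradiction that an undefended attacker would have to lie in $U$ and be preferred to the maximal element; these are contrapositive formulations of the same step.
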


We generalize the preceding result by showing that transitivity is not required:

\begin{restatable}{theorem}{thmSymmCoherent}\label{thm:symm-coherent}
Every symmetric PAF is coherent. 
\end{restatable}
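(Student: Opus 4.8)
The plan is to prove coherence directly by showing that every preferred extension of the corresponding AF $(\args, \defeat_\succ)$ is stable; since every stable extension is already preferred, this yields the coincidence of the two notions. I would first unpack the reduced attack relation for a symmetric PAF. Because $\defeat$ is symmetric and $\succ$ is acyclic, each conflicting pair $\{\alpha,\beta\}$ falls into exactly one of two cases in $(\args,\defeat_\succ)$: if neither is preferred to the other, both attacks survive and the attack between them is symmetric; if, say, $\alpha \succ \beta$, then only $\alpha \defeat_\succ \beta$ survives, a one-directional attack from the preferred to the dispreferred argument. In all cases $\defeat_\succ$ is irreflexive. The guiding observation is that one-directional attacks arise \emph{only} from strict preference, and this is exactly what acyclicity of $\succ$ will let me control.

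Next I would fix a preferred (i.e.\ $\subseteq$-maximal admissible) extension $E$ and suppose, for contradiction, that it is not stable, so the set $U = \args \setminus (E \cup E^+)$ of arguments neither in $E$ nor attacked by $E$ is nonempty. I would first check that no $\gamma \in U$ is in conflict with $E$: by definition $E$ does not attack $\gamma$, and if $\gamma$ attacked some $\delta \in E$ then admissibility of $E$ would force $E$ to counter-attack $\gamma$, contradicting $\gamma \notin E^+$. Hence $E \cup \{\gamma\}$ is conflict-free for every $\gamma \in U$.

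The crux is to choose $\gamma$ wisely. Since $\succ$ is acyclic and $U$ is finite, $\succ$ restricted to $U$ admits a maximal element; I would take $\gamma \in U$ such that no $\beta \in U$ satisfies $\beta \succ \gamma$. I then argue that $E \cup \{\gamma\}$ is admissible. It defends every element of $E$ (as $E$ already did, and enlarging the set only helps, by monotonicity of the characteristic function), so it suffices to defend $\gamma$. Any attacker $\beta$ of $\gamma$ lies in $E^+ \cup U$, since it cannot lie in $E$ (else $\gamma \in E^+$). If $\beta \in E^+$, then $E$ already attacks $\beta$, so $\gamma$ is defended. If $\beta \in U$, maximality gives $\beta \not\succ \gamma$; combined with the surviving attack $\beta \defeat_\succ \gamma$ (which forces $\gamma \not\succ \beta$), both $\gamma \not\succ \beta$ and $\beta \not\succ \gamma$ hold, so the symmetric case applies and $\gamma \defeat_\succ \beta$, i.e.\ $\gamma$ counter-attacks $\beta$ itself. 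Thus $E \cup \{\gamma\}$ is admissible and strictly larger than $E$, contradicting maximality. Therefore $U = \emptyset$, i.e.\ $E^+ = \args \setminus E$, so $E$ is stable.

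I expect the main obstacle to be identifying the right argument to add: a naive attempt to extend $E$ by an arbitrary unattacked argument fails precisely on one-directional attacks, where the new argument cannot counter-attack a strictly preferred neighbour. The key realization is that a $\succ$-maximal element of $U$ has no strictly preferred neighbour inside $U$, so all of its within-$U$ conflicts are symmetric and can be defended by the argument itself. This is where acyclicity of $\succ$, rather than transitivity as in Theorem~\ref{thm:symm-trans-coherent}, does the work, and it is what lets me drop the transitivity hypothesis.
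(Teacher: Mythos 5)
Your proof is correct and takes essentially the same route as the paper's: both pass to the corresponding AF, pick a $\succ$-maximal element of $(\args\setminus E)\setminus E^+$ (using acyclicity and finiteness), and use symmetry of $\defeat$ plus that maximality to show any surviving attack from an unattacked outsider can be countered. The only difference is presentational — you directly verify that $E\cup\{\gamma\}$ is admissible and contradict the maximality of $E$, whereas the paper assumes non-admissibility of $E\cup\{\beta\}$ and contradicts the maximality of the chosen element — but the underlying argument is identical.
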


Theorem \ref{thm:SymPAFChar} gives a characterization of AFs that correspond to symmetric PAFs. 
It generalizes the \emph{strict acyclicity} condition characterizing the so-called \emph{conflict+preference} AFs \cite{DBLP:conf/ecai/KaciTW06} that requires for any cycle $\alpha_1 \defeat \alpha_2\defeat \dots\defeat \alpha_n\defeat \alpha_1$ that $\alpha_1 \defeat \alpha_n\defeat \dots\defeat \alpha_2\defeat \alpha_1$. 

\begin{restatable}{theorem}{thmSymPAFChar}\label{thm:SymPAFChar}
For every AF $F=(\args, \defeat)$, the following statements are equivalent.
\begin{enumerate}
\item $F$ is the corresponding AF of some symmetric PAF.
\item For any cycle $\alpha_1 \defeat \alpha_2\defeat \dots\defeat \alpha_n\defeat \alpha_1$, there exists $(j,i)\in\{(1,2),\dots,(n-1,n),(n,1)\}$ with $\alpha_i\defeat\alpha_j$. 
\end{enumerate}
\end{restatable}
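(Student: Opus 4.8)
The plan is to prove the two implications separately, with the reverse direction (2) $\Rightarrow$ (1) carrying the real content via an explicit construction. Throughout I take AFs to be irreflexive: this is forced whenever $F$ is the corresponding AF of a symmetric PAF (whose underlying attack relation is irreflexive), and on the right-hand side a self-loop can never arise from a symmetric PAF, so it is the natural standing assumption. The guiding intuition is that when a symmetric PAF is reduced to its corresponding AF, an attack $\alpha\defeat\beta$ survives in both directions exactly when neither argument is preferred, and survives one-directionally exactly when the retained direction is strictly preferred; hence the one-directional attacks of $F$ should encode the preference relation.

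For (1) $\Rightarrow$ (2), suppose $F=(\args,\defeat)$ equals the corresponding AF $(\args,\defeat_\succ)$ of a symmetric PAF $(\args,\defeat_0,\succ)$, and fix a cycle $\alpha_1\defeat\alpha_2\defeat\dots\defeat\alpha_n\defeat\alpha_1$. For each $j$ (indices taken mod $n$), $\alpha_j\defeat_\succ\alpha_{j+1}$ unpacks to $\alpha_j\defeat_0\alpha_{j+1}$ together with $\alpha_{j+1}\not\succ\alpha_j$; by symmetry of $\defeat_0$ we also get $\alpha_{j+1}\defeat_0\alpha_j$. I would then argue by contradiction: if no reverse edge $\alpha_{j+1}\defeat\alpha_j$ is present in $F$, then for every $j$ the only way $\alpha_{j+1}\defeat_\succ\alpha_j$ can fail despite $\alpha_{j+1}\defeat_0\alpha_j$ is to have $\alpha_j\succ\alpha_{j+1}$. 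This yields $\alpha_1\succ\alpha_2\succ\dots\succ\alpha_n\succ\alpha_1$, contradicting the acyclicity of $\succ$. Hence some reverse edge exists, which is precisely the pair required by condition (2).

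For (2) $\Rightarrow$ (1), I would build a witnessing symmetric PAF directly from $F$. Take $\defeat_0$ to be the symmetric closure of $\defeat$ (symmetric, and irreflexive since $F$ is), and let $\succ$ consist of the one-directional attacks of $F$, i.e.\ $\alpha\succ\beta$ iff $\alpha\defeat\beta$ and $\beta\not\defeat\alpha$. A short case analysis then shows $\defeat_\succ=\defeat$: a reciprocated attack of $F$ induces no preference and so reappears in both directions; a one-directional attack $\alpha\defeat\beta$ gives $\alpha\succ\beta$, which deletes exactly the spurious reverse edge $\beta\defeat_0\alpha$ while keeping $\alpha\defeat_\succ\beta$; and non-edges stay non-edges. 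The one genuine obstacle is verifying that $\succ$ is acyclic, for otherwise $(\args,\defeat_0,\succ)$ would not be a legitimate PAF. This is exactly where condition (2) is used: a cycle $\alpha_1\succ\dots\succ\alpha_n\succ\alpha_1$ would, by definition of $\succ$, be a $\defeat$-cycle in which every reverse edge $\alpha_{j+1}\defeat\alpha_j$ is absent, contradicting the reciprocated consecutive pair guaranteed by (2).

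I expect the construction's correctness (the case analysis establishing $\defeat_\succ=\defeat$) to be routine bookkeeping, and the only subtle points to be (i) the acyclicity verification, where condition (2) does all the work, and (ii) confirming the treatment of irreflexivity so that $(\args,\defeat_0)$ is a bona fide symmetric AF.
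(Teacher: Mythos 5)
Your proof is correct and follows essentially the same route as the paper's: the forward direction uses symmetry of the underlying attack relation plus acyclicity of $\succ$ to force a surviving reverse edge (the paper argues this directly rather than by contradiction, but it is the same argument), and the backward direction uses exactly the paper's construction --- the symmetric closure of $\defeat$ as the PAF attack relation and the one-directional attacks of $F$ as the preference --- with acyclicity of $\succ$ derived from condition (2) in the same way. Your explicit handling of irreflexivity is a minor point the paper leaves implicit.
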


\subsection{Set-Based Argumentation Frameworks}

Another well-studied extension of AFs is to allow collective attacks,  
in which a set of arguments together form an attack 
\cite{DBLP:conf/argmas/NielsenP06,DBLP:journals/ijar/FlourisB19}.  

\begin{definition}
A \emph{set-based argumentation framework} (SETAF) is a pair $(\args , \defeat )$
where $\args$ is a finite set of arguments, and $\defeat\, \subseteq (2^{\args}\setminus \{\emptyset\}) \times \args$ is the attack relation. 
We'll write $S \defeat \alpha$ to mean $(S, \alpha)\! \in \defeat$. 
\end{definition}

When working with SETAFs, we define $A^+$ 
as $\{\beta \mid S \defeat \beta \text{ for some }S \subseteq A\}$
and say $A$ \emph{defends} $\beta $ iff $A^+\cap S \neq \emptyset$ whenever $S \defeat \beta$. 
The definitions of characteristic function, conflict-free, admissibility, different types of extensions, and coherence for SETAFs are the same as for AFs but using these modified notions of defeated and defended arguments.

Symmetric SETAFs were recently defined as follows: 

\begin{definition}\cite{symm-setaf}
A SETAF $(\args , \defeat)$ is \emph{symmetric} if the following two conditions hold:
\textbf{(Symm-1)} if $S \defeat \beta$ and $\alpha \in S$, then $S' \cup \{\beta\}\defeat \alpha$ for some $S'$, and 
\textbf{(Irr)} there is no $S \defeat \alpha$ with $\alpha \in S$. 
\end{definition}

Unfortunately, \citeauthor{symm-setaf} show their definition does not preserve
the nice properties of symmetric AFs. In particular:

\begin{theorem}
Some symmetric SETAFs are not coherent.  
\end{theorem}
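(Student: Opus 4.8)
The plan is to prove this impossibility result by exhibiting a concrete counterexample: a SETAF that satisfies both \textbf{(Symm-1)} and \textbf{(Irr)}, yet possesses a preferred extension that is not stable. Since coherence requires every preferred extension to be stable, a single such witness suffices. First I would look for the smallest configuration in which collective attacks can create an asymmetry that plain symmetric AFs cannot: the key intuition is that in a SETAF, defending against a collective attack $S \defeat \beta$ only requires attacking \emph{one} element of $S$, so a set can be admissible (defend all its members) while still failing to attack some argument outside it. That slack between ``defends itself'' and ``attacks everything outside'' is precisely the gap between preferred and stable, and it is what the symmetry conditions fail to close.

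Concretely, I would try a small arena of three or four arguments with at least one genuinely collective (size-$\geq 2$) attack, then close it up under \textbf{(Symm-1)} by adding the forced reverse attacks, and check \textbf{(Irr)} holds (no argument attacks via a set containing itself). A natural candidate is to take arguments $a,b,c$ with a collective attack $\{a,b\}\defeat c$; \textbf{(Symm-1)} then forces attacks of the form $S'\cup\{c\}\defeat a$ and $S''\cup\{c\}\defeat b$. The next step is to compute the admissible sets and identify a $\subseteq$-maximal admissible set $E$ for which $E^+ \neq \args\setminus E$, i.e.\ some argument outside $E$ is neither in $E$ nor attacked by any subset of $E$. The typical failure mode is an argument that can only be collectively attacked together with a partner that $E$ does not contain, so $E$ cannot defeat it yet also cannot safely absorb it.

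The main obstacle I anticipate is bookkeeping rather than conceptual: I must verify simultaneously that the candidate $E$ is (i) conflict-free, (ii) admissible (defends each of its members, where defending against a collective attack only requires attacking one participant), and (iii) $\subseteq$-maximal among admissible sets, while (iv) failing the stability condition $E^+=\args\setminus E$. The subtle point is (iii): I need to confirm no strictly larger admissible set exists, which means checking that every proper superset either introduces a conflict or fails to defend some member. Because \textbf{(Symm-1)} adds attacks in both directions, it is easy to accidentally build a framework that \emph{is} coherent, so care is required to keep the collective attack ``one-directional enough'' that the reverse attacks it forces are themselves collective and hence do not give $E$ the power to attack the troublesome outside argument on its own.

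Once such a framework is fixed and the four properties are checked by direct inspection of its (finitely many, small) subsets, the conclusion is immediate: $E$ witnesses a preferred-but-not-stable extension, so this symmetric SETAF is not coherent, establishing that some symmetric SETAFs are not coherent. I would present the counterexample explicitly—listing $\args$, the full attack relation $\defeat$ after closure under \textbf{(Symm-1)}, and the extension $E$—and then verify each property in a short enumeration, since with so few arguments the verification is finite and transparent rather than requiring any general argument.
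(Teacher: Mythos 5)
There is a genuine gap: the statement is an existence claim, so its entire proof \emph{is} an explicit witness plus verification, and you never produce one. (Note also that the paper itself does not prove this theorem; it is quoted from the cited work on symmetric SETAFs, so the only possible proof is such a counterexample.) Your text is a search strategy -- ``I would try\dots'', ``the typical failure mode is\dots'' -- with the actual framework, the check of \textbf{(Symm-1)} and \textbf{(Irr)}, and the exhibition of a preferred-but-not-stable extension all deferred. Worse, the seed you propose does not lead anywhere without a substantially different idea: starting from $\{a,b\}\defeat c$ and closing under \textbf{(Symm-1)} in either natural way (adding $\{c\}\defeat a$ and $\{c\}\defeat b$, or adding $\{b,c\}\defeat a$ and $\{a,c\}\defeat b$) produces frameworks whose preferred extensions ($\{a,b\}$ and $\{c\}$ in the first case; $\{a,b\}$, $\{a,c\}$, $\{b,c\}$ in the second) are all stable, i.e.\ coherent SETAFs -- exactly the pitfall you yourself flag but do not resolve.

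The idea that is missing is that \textbf{(Symm-1)} only requires \emph{some} reverse attack to exist, possibly with a different (larger) support, so one can hide an odd cycle of singleton attacks behind collective ``witness'' attacks that can never fire. Concretely, take $\args=\{a,b,c\}$ with attacks $\{a\}\defeat b$, $\{b\}\defeat c$, $\{c\}\defeat a$ together with $\{b,c\}\defeat a$, $\{a,c\}\defeat b$, $\{a,b\}\defeat c$. Condition \textbf{(Irr)} is immediate, and \textbf{(Symm-1)} holds because each singleton attack's reversal is witnessed by the two-element attack on its source (e.g.\ for $\{a\}\defeat b$ take $\{b,c\}\defeat a$), and each two-element attack's reversals are witnessed by the singleton or two-element attacks already present. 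Yet every two-element subset contains a singleton attack and so is not conflict-free; each singleton $\{x\}$ is attacked by a singleton it does not attack, so it is not admissible; hence $\emptyset$ is the unique preferred extension, and $\emptyset^+=\emptyset\neq\args$, so there is no stable extension at all. This framework satisfies the paper's definition of symmetric SETAF but is not coherent, which is the entire content of the theorem -- and it is precisely this construction (an odd cycle surviving the weak symmetry condition), not a closure of a single collective attack, that your proposal would have needed to find.
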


We propose an alternative, and we believe quite natural, notion of symmetric SETAF, which we term  `strongly symmetric' to distinguish it from the preceding notion:

\begin{definition}
A SETAF $(\args , \defeat)$ is \emph{strongly symmetric} if 
it satisfies \textbf{(Irr)} and \textbf{(Symm-2)}:
 for every attack $S \defeat \beta$ and every $\alpha\in S$, we have $S\setminus\{\alpha\}\cup\{\beta\}\defeat \alpha$. 
\end{definition}

With our definition, symmetry implies coherence:

\begin{restatable}{theorem}{thmStrongSymmSETAF}\label{thm:StrongSymmSETAF}
Strongly symmetric SETAFs are coherent. 
\end{restatable}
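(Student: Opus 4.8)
The plan is to reduce coherence to a single structural lemma: in a strongly symmetric SETAF, every $\subseteq$-maximal conflict-free set is stable. Granting this lemma, coherence follows quickly. Let $E$ be a preferred extension, i.e.\ a $\subseteq$-maximal admissible set; in particular $E$ is conflict-free, so since $\args$ is finite I can extend $E$ to a maximal conflict-free set $E^\ast \supseteq E$. By the lemma, $E^\ast$ is stable, and stability implies admissibility (if $\gamma \in E^\ast$ is attacked by some $S$, then by \textbf{(Irr)} $\gamma \notin S$, and not all of $S$ can lie in $E^\ast$ by conflict-freeness, so $S$ meets $(E^\ast)^+ = \args\setminus E^\ast$, whence $\gamma$ is defended). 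Thus $E^\ast$ is an admissible superset of the maximal admissible set $E$, forcing $E = E^\ast$; hence $E$ is stable. So the whole argument rests on the lemma.

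To prove the lemma, let $E$ be maximal conflict-free. Conflict-freeness gives $E^+ \subseteq \args\setminus E$, so it suffices to show every $\gamma \in \args \setminus E$ is attacked by some subset of $E$. Fix such a $\gamma$. By maximality, $E \cup \{\gamma\}$ is not conflict-free, so there are $\beta \in E \cup \{\gamma\}$ and $S \subseteq E \cup\{\gamma\}$ with $S \defeat \beta$; since $E$ is itself conflict-free, this conflict must involve $\gamma$. I would then split into two cases. If $\beta = \gamma$, then \textbf{(Irr)} gives $\gamma \notin S$, so $S \subseteq E$ and already $\gamma \in E^+$. If instead $\beta \in E$, then conflict-freeness of $E$ forces $S \not\subseteq E$, i.e.\ $\gamma \in S$; applying \textbf{(Symm-2)} to the attack $S \defeat \beta$ with the distinguished element $\gamma \in S$ yields $(S\setminus\{\gamma\}) \cup \{\beta\} \defeat \gamma$, and since $S \setminus \{\gamma\} \subseteq E$ and $\beta \in E$ this attacking set is a subset of $E$, so again $\gamma \in E^+$. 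In both cases $\gamma \in E^+$, proving $E^+ = \args \setminus E$.

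The main obstacle, and the reason the classical symmetric-AF argument cannot be copied verbatim, is that in a SETAF conflict-freeness no longer entails admissibility: a set may fail to defend one of its members against a collective attack even though no attack lies entirely inside it. Consequently one cannot simply argue ``add $\gamma$ to the preferred extension and invoke symmetry,'' since the added singleton need not repair the missing defense. The key idea that circumvents this is to pivot to maximal conflict-free sets and to exploit \textbf{(Symm-2)} precisely in the form used in the second case above: it converts a newly created collective conflict $S \defeat \beta$ (with $\beta$ already inside the set and $\gamma$ the freshly added argument) into a direct counter-attack on $\gamma$ mounted entirely from within $E$. Verifying that this single application of \textbf{(Symm-2)} always lands inside $E$ is the crux; everything else is routine finiteness together with the elementary fact that stable extensions are admissible.
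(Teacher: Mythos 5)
Your proof is correct, and it takes a genuinely different route from the paper's. The paper argues by contradiction: if a preferred extension $E$ is not stable, it selects a $\subseteq$-maximal conflict-free subset $B$ of $(\args \setminus E) \setminus E^+$ (the unattacked outside arguments) and proves, via two claims that combine \textbf{(Symm-2)} with the fact that $E$ defends itself, that $E \cup B$ is conflict-free and self-defending, contradicting the maximality of $E$. You instead isolate the structural lemma that in a strongly symmetric SETAF \emph{every} $\subseteq$-maximal conflict-free set is stable, proved by a clean two-case analysis (the new argument $\gamma$ is either the target of the violating attack, handled by \textbf{(Irr)}, or a member of the attacking set, handled by a single application of \textbf{(Symm-2)} landing inside $E$); coherence then follows by generic reasoning that never invokes symmetry again: a preferred extension extends to a maximal conflict-free set, which is stable, hence admissible, and maximality among admissible sets forces equality. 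Both share the strategy of enlarging the preferred extension, but the technical content differs: your lemma makes no reference to admissibility of $E$, is of independent interest (it shows that maximal conflict-free, stable, and preferred extensions all coincide in strongly symmetric SETAFs, and in particular that stable extensions exist), and localizes the use of strong symmetry. What the paper's formulation buys is its kinship with the proof of Theorem \ref{psetaf-coh}: once preferences are added, the reduced SETAF is no longer strongly symmetric and your lemma fails outright (already for $a \defeat b$, $b \defeat a$ with $b \succ a$, the set $\{a\}$ is maximal conflict-free in the reduction but not stable), so the paper's technique of growing the preferred extension and verifying admissibility by hand is the one that survives, in a far more intricate sequential form, in the preference-based setting.
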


\subsection{Adding Preferences to SETAFs}
It is natural to combine the preceding two extensions, 
but to the best of our knowledge, 
this combination has not yet been considered. We propose the following definition:

\begin{definition}
A \emph{preference-based set-based argumentation framework} (PSETAF) is a triple $(\args,\defeat, \succ)$, where 
$(\args,\defeat)$ is a SETAF, and $\succ$ is an acyclic binary relation over $\args$, called the preference relation.  
\end{definition}

To define the semantics of PSETAFs, we give a reduction to SETAFs, which generalizes the one for PAFs:

\begin{definition}
Given a PSETAF $(\args,\defeat, \succ)$, its corresponding SETAF is $(\args,\defeat_\succ)$, where 
the relation $\defeat_\succ \subseteq (2^{\args} \setminus \{\emptyset\}) \times \args$ is defined as follows:
$S \defeat_\succ \alpha$ iff $S \defeat \alpha$ and $\alpha \not \succ \beta$ for every $\beta \in S$. 
A subset $E \subseteq \args$ is a \emph{stable (resp.\ preferred, grounded, complete) extension of a PSETAF} $(\args,\defeat, \succ)$ iff it is a stable (resp.\ preferred, grounded, complete) extension of the corresponding SETAF.
\end{definition}

We will focus on strongly symmetric PSETAFs:

\begin{definition}
A PSETAF $(\args,\defeat, \succ)$ is \emph{strongly symmetric} if the SETAF $(\args,\defeat)$ is strongly symmetric. 
\end{definition}

The following result lifts Theorem \ref{thm:symm-trans-coherent} to PSETAFs. The proof is quite intricate, and we leave open 
whether the same result holds without the transitivity assumption:

\begin{restatable}{theorem}{thmPSETAF} \label{psetaf-coh}
Every strongly symmetric PSETAF with a transitive preference relation is coherent. 
\end{restatable}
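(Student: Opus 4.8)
The plan is to prove the reformulation that suffices for coherence: \emph{every admissible set $E$ of the corresponding SETAF $(\args,\defeat_\succ)$ that is not stable is strictly contained in a larger admissible set}. Since, by Definition~\ref{def:extensions}, preferred extensions are exactly the $\subseteq$-maximal admissible sets, this immediately forces every preferred extension to be stable. So I fix an admissible $E$ that is not stable and set $U=\args\setminus(E\cup E^+)$, the arguments neither in $E$ nor attacked by $E$ (all notions taken w.r.t.\ $\defeat_\succ$). Non-stability gives $U\neq\emptyset$, and conflict-freeness of $E$ yields the disjoint partition $\args=E\sqcup E^+\sqcup U$. The goal is to produce a non-empty $E'\subseteq U$ with $E\cup E'$ admissible, which contradicts maximality.

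The first step is to observe that the PSETAF induced on $U$, namely $(U,\defeat|_{U},\succ|_{U})$ where $\defeat|_{U}$ retains only attacks $S\defeat\beta$ with $S\cup\{\beta\}\subseteq U$, is again strongly symmetric with a transitive preference relation: $(\mathbf{Irr})$ is inherited, and $(\mathbf{Symm\text{-}2})$ is preserved because the counterattack $S\setminus\{\alpha\}\cup\{\beta\}\defeat\alpha$ it produces stays inside $U$. Moreover, for $S,\tau\subseteq U$ the reduced relation of this sub-PSETAF agrees with $\defeat_\succ$. I would then argue by induction on $|\args|$: whenever $E\cup E^+\neq\emptyset$ we have $|U|<|\args|$, so by the induction hypothesis the sub-PSETAF is coherent; its (always existing) preferred extension $E'$ is therefore stable, and since $U\neq\emptyset$ a stable extension of a non-empty framework is non-empty. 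This supplies the desired non-empty $E'\subseteq U$, \emph{provided} combining it with $E$ preserves admissibility.

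Everything thus reduces to the \textbf{Combination Lemma}: if $E$ is admissible in $F$ and $E'$ is admissible in the sub-PSETAF on $U$, then $E\cup E'$ is admissible in $F$. Conflict-freeness is split by the location of the attacking set and target: attacks internal to $E$ or internal to $E'$ are dispatched by the respective conflict-freeness, while a mixed attack $S\defeat_\succ\tau$ with $S\subseteq E\cup E'$ and $\tau\in E$ is ruled out using admissibility of $E$, since $E$ would then have to attack some $\sigma\in S$, but $\sigma\in E$ contradicts conflict-freeness of $E$ and $\sigma\in E'\subseteq U$ contradicts $\sigma\notin E^+$. For defense, members of $E$ remain defended by $E$, and for $\rho\in E'$ any attack $R\defeat_\succ\rho$ with $R\cap E^+\neq\emptyset$ is countered by $E$ directly; the remaining case has $R\subseteq E\cup U$ with $R\cap U\neq\emptyset$ (if $R\subseteq E$ then $\rho\in E^+$, impossible).

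The main obstacle is exactly this last case, together with the symmetric treatment of mixed attacks targeting $E'$-elements in the conflict-freeness step: here the attacking set straddles $E$ and $U$, so admissibility within $U$ does not by itself furnish a counterattack living inside $E\cup E'$. The idea is to apply $(\mathbf{Symm\text{-}2})$ to fold the $E$-part of the attacking set into a counterattack and then invoke the defense $E'$ already provides inside $U$; the difficulty is that the counterattack produced by $(\mathbf{Symm\text{-}2})$ need not survive the preference reduction, because the newly targeted argument may be $\succ$-preferred to some argument of the counterattacking set. This is precisely where transitivity of $\succ$ enters: I would choose a $\succ$-minimal element of the relevant portion of the attack and chain the preference inequalities transitively to certify that the required counterattack is a genuine $\defeat_\succ$-attack. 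I expect this preference-chaining argument, along with the bookkeeping needed to keep every counterattack inside $E\cup E'$, to be the technically demanding heart of the proof. Finally, the base case $E\cup E^+=\emptyset$ (so $U=\args$) is not reached by the induction and must be settled separately, by directly exhibiting a non-empty admissible set in any non-empty strongly symmetric PSETAF using the same $(\mathbf{Symm\text{-}2})$-and-transitivity machinery applied to a $\succ$-maximal argument.
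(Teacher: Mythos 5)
Your reduction to the \textbf{Combination Lemma} is where the proof breaks: that lemma is false, and the counterexample is essentially the smallest non-binary conflict. Take $\args=\{e,u_1,u_2\}$ with underlying attacks $\{e,u_1\}\defeat u_2$, $\{e,u_2\}\defeat u_1$, $\{u_1,u_2\}\defeat e$ (the strongly symmetric PSETAF arising from a single three-element conflict) and the single preference $e\succ u_1$, which is acyclic and trivially transitive. In the reduced SETAF, the attack on $e$ is cancelled (because $e\succ u_1$) while the other two attacks survive. Hence $E=\{e\}$ is admissible but not stable, $E^+=\emptyset$, and $U=\{u_1,u_2\}$. Your sub-PSETAF on $U$ keeps only attacks whose source and target both lie in $U$, and there are none, since every underlying attack involves $e$; so the sub-framework is attack-free and its unique preferred (indeed stable) extension is $E'=\{u_1,u_2\}$. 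Your recipe then outputs $E\cup E'=\{e,u_1,u_2\}$, which is not conflict-free in the reduced SETAF: the surviving attack $\{e,u_1\}\defeat_\succ u_2$ lies entirely inside it. No amount of \textbf{(Symm-2)}-plus-transitivity chaining can repair this, because the failure is one of conflict-freeness, not of defense: once a surviving attack is contained in $E\cup E'$, admissibility is irrecoverable. Note that suitable extensions of $E$ do exist here ($\{e,u_1\}$ and $\{e,u_2\}$ are both admissible), but your construction is forced to pick the unique preferred extension of the sub-framework, which is exactly the wrong set.

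The root cause is that restricting to attacks entirely inside $U$ erases precisely the binding constraints, namely conflicts that straddle $E$ and $U$; any repair of your plan would need some reduct that projects such mixed attacks into $U$, at which point strong symmetry of the reduct and its interaction with the preference reduction become the real (and unresolved) difficulties. The paper's proof avoids sub-frameworks altogether: it grows the preferred-but-not-stable extension $S_0$ one argument at a time, always adding an argument that keeps the \emph{whole} current set conflict-free and is $\succ$-maximal among such candidates, and proves by induction a three-part invariant about attacks the current set does not defend against; acyclicity and transitivity of $\succ$ are used there to derive a contradiction in the form of a preference cycle. Checking conflict-freeness against the entire growing set, rather than inside $U$, is exactly what handles the mixed attacks your decomposition loses.
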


\section{Linking Prioritized KBs and PSETAFs}\label{sec:repairs-extensions}
This section places optimal repairs into a broader context by 
exhibiting a tight relationship between optimal repairs
and extensions of argumentation frameworks.  The 
argumentation connection is then exploited to define a new kind of prioritized repair with favourable computational properties.

Each prioritized KB naturally gives rise to a PSETAF in which the ABox assertions 
play the role of arguments, and the attack relation captures the conflicts. 

\begin{definition}\label{translation}
Given a prioritized KB $\Kmc_\succ$ with 
$\Kmc=\tup{\Tmc,\Amc}$, 
the associated PSETAF is $F_{\Kmc, \succ}= (\Amc, \defeat_{\Kmc}, \succ)$,
where $\defeat_{\Kmc} = \{(C \setminus \{\alpha\}, \alpha) \mid C \in \conflicts{\Kmc}, \alpha \in C\}$. 
\end{definition}

\begin{remark}
Recall that we assume the ABoxes do not contain self-conflicting assertions, hence $C \setminus \{\alpha\} \neq \emptyset$. 
If we choose not to make this assumption, we should omit such assertions when constructing the PSETAF, i.e., 
using $\Amc \setminus \{\alpha \mid \tup{\Tmc, \{\alpha\}} \models \bot\}$ for the set of arguments
and only considering $C \in \conflicts{\Kmc}$ with $|C| \geq 2$ to construct $\defeat_{\Kmc} $. 
\end{remark}

In this section, results are formulated for arbitrary KBs, except where otherwise noted.

\subsection{Optimal Repairs vs. PSETAF Extensions}

Now that we have translated prioritized KBs into argumentation frameworks, 
it is possible to compare preferred repairs and extensions. 
The following 
theorem shows that Pareto-optimal repairs correspond precisely to stable extensions. 

\begin{restatable}{theorem}{thmstable} \label{tstable}
$\Amc' \subseteq \Amc$ is a Pareto-optimal repair of $\Kmc_\succ = (\tup{\Tmc, \Amc}, \succ)$ 
iff $\Amc'$ is a stable extension of 
$F_{\Kmc, \succ}$.
\end{restatable}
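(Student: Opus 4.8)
The plan is to prove both directions by unwinding the definitions, with the key being to relate the notion of Pareto improvement directly to the attack relation $\defeat_\succ$ of the corresponding SETAF. First I would establish the common groundwork: for a set $\Amc'\subseteq\Amc$, note that $\Amc'$ is $\Tmc$-consistent iff $\Amc'$ is conflict-free in $F_{\Kmc,\succ}$ in the \emph{unpreferenced} sense, i.e.\ $\Amc'$ contains no full conflict $C\in\conflicts{\Kmc}$. I would then observe that for a stable extension we need conflict-freeness with respect to $\defeat_\succ$ together with the covering condition $(\Amc')^+ = \Amc\setminus\Amc'$, so the technical crux is understanding exactly when $C\setminus\{\alpha\} \defeat_\succ \alpha$ holds: by the PSETAF reduction, this is precisely when $C\in\conflicts{\Kmc}$, $\alpha\in C$, and $\alpha\not\succ\beta$ for every $\beta\in C\setminus\{\alpha\}$.

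For the forward direction ($\Rightarrow$), suppose $\Amc'$ is a Pareto-optimal repair. I would first argue $\Amc'$ is conflict-free in the corresponding SETAF: since $\Amc'$ is $\Tmc$-consistent it contains no conflict, so no attacking set $S = C\setminus\{\alpha\}$ with target $\alpha$ can have both $S\subseteq\Amc'$ and $\alpha\in\Amc'$. Then for the stability condition I must show every $\alpha\in\Amc\setminus\Amc'$ is attacked by $\Amc'$ via $\defeat_\succ$. The idea is that because $\Amc'$ is a repair (inclusion-maximal consistent), adding $\alpha$ creates a conflict $C\ni\alpha$ with $C\setminus\{\alpha\}\subseteq\Amc'$; if this attack were \emph{disabled} by preference, i.e.\ $\alpha\succ\beta$ for some $\beta\in C\setminus\{\alpha\}$, I would derive a Pareto improvement of $\Amc'$ using $\alpha$ as the newly added preferred assertion, contradicting optimality. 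Making this contradiction precise — choosing the right candidate $\Amc''$ and verifying it satisfies the Pareto-improvement definition (a single $\beta=\alpha$ dominating all removed assertions) — is the main obstacle, since one must handle the consistency of $\Amc''$ carefully and ensure the removed set $\Amc'\setminus\Amc''$ is exactly dominated.

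For the converse ($\Leftarrow$), suppose $\Amc'$ is a stable extension. Conflict-freeness gives $\Tmc$-consistency of $\Amc'$, and the covering condition $(\Amc')^+=\Amc\setminus\Amc'$ gives inclusion-maximality (any $\alpha\notin\Amc'$ is attacked, hence $\Amc'\cup\{\alpha\}$ contains a conflict), so $\Amc'$ is a repair. It then remains to rule out a Pareto improvement: assuming $\Amc''$ is a Pareto improvement witnessed by some $\beta\in\Amc''\setminus\Amc'$ with $\beta\succ\alpha$ for all $\alpha\in\Amc'\setminus\Amc''$, I would use the fact that $\beta\in\Amc\setminus\Amc'$ is attacked by $\Amc'$ under $\defeat_\succ$. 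This yields a conflict $C\ni\beta$ with $C\setminus\{\beta\}\subseteq\Amc'$ and $\beta\not\succ\gamma$ for every $\gamma\in C\setminus\{\beta\}$. Since $\Amc''$ is consistent and contains $\beta$, it cannot contain all of $C\setminus\{\beta\}$, so some $\gamma\in C\setminus\{\beta\}$ lies in $\Amc'\setminus\Amc''$, forcing $\beta\succ\gamma$ by the Pareto condition — contradicting $\beta\not\succ\gamma$. This closes the argument, and I expect it to be routine once the correspondence between disabled attacks and the witness structure of Pareto improvements is set up cleanly in the groundwork step.
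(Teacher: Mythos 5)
Your proposal follows the same route as the paper's proof (a consistency/conflict-freeness bridge, then the two directions with the same constructions), but it has one genuine gap, and it sits exactly in your ``groundwork'' step. You state the bridge as: $\Amc'$ is $\Tmc$-consistent iff it is conflict-free ``in the unpreferenced sense'', i.e.\ contains no full conflict. That equivalence is immediate from the definition of $\conflicts{\Kmc}$, but it is not the one you need. A stable extension of $F_{\Kmc,\succ}$ is, by definition, conflict-free with respect to the \emph{reduced} relation $\defeat_\succ$, which has fewer attacks than $\defeat_{\Kmc}$; hence $\defeat_\succ$-conflict-freeness is a priori a \emph{weaker} property than containing no conflict. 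Your converse direction opens with ``conflict-freeness gives $\Tmc$-consistency of $\Amc'$'', which is precisely the implication from the weaker property to the stronger one, and nothing in your setup justifies it. This is where the acyclicity of $\succ$ must be invoked: if $C\in\conflicts{\Kmc}$ with $C\subseteq\Amc'$, then since $\succ$ is acyclic and $C$ is finite, there exists $\gamma\in C$ with $\gamma\not\succ\delta$ for every $\delta\in C$, and for this $\gamma$ the attack $C\setminus\{\gamma\}\defeat_\succ\gamma$ survives the preference reduction, contradicting $\defeat_\succ$-conflict-freeness. The paper isolates exactly this as a standalone lemma. The step is not a definitional unwinding: if a cycle $a\succ b\succ a$ were allowed inside a binary conflict $\{a,b\}$, both attacks would be disabled and $\{a,b\}$ would be conflict-free (indeed stable) yet $\Tmc$-inconsistent, so the implication genuinely consumes the acyclicity hypothesis.

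Apart from this, your plan is sound and matches the paper. In the forward direction the candidate you hint at, $\Amc''=(\Amc'\setminus\{\beta\mid\alpha\succ\beta\})\cup\{\alpha\}$, is the paper's choice, and its consistency follows just as you anticipate: any conflict inside $\Amc''$ would have to contain $\alpha$ (the rest being a subset of the consistent $\Amc'$), and since by assumption every attack on $\alpha$ from within $\Amc'$ is disabled by preference, that conflict would contain some $\beta$ with $\alpha\succ\beta$ --- but all such $\beta$ were removed. Your converse-direction contradiction (the surviving attack on the Pareto witness $\beta$ must contain some $\gamma\in\Amc'\setminus\Amc''$, forcing $\beta\succ\gamma$ against $\beta\not\succ\gamma$) is likewise the paper's argument, merely phrased directly on the improvement $\Amc''$ rather than through an intermediate characterization of non-optimality.
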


Observe that our translation always produces strongly symmetric PSETAFs, which makes it possible to transfer Theorem \ref{psetaf-coh} when
 the priority relation is transitive. 

\begin{restatable}{theorem}{thmTransCoherent}
\label{trans-coherent} 
If $\succ$ is a transitive priority relation, then
the PSETAF $F_{\Kmc, \succ}$ is coherent. 
\end{restatable}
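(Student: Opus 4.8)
The plan is to combine the two structural facts the paper has already supplied: Theorem~\ref{psetaf-coh} (every strongly symmetric PSETAF with a transitive preference relation is coherent) together with the observation that the translation $F_{\Kmc,\succ}=(\Amc,\defeat_\Kmc,\succ)$ always yields a strongly symmetric PSETAF. So the proof splits into two independent verifications. First I would check that $(\Amc,\defeat_\Kmc)$ is a strongly symmetric SETAF, \ie that it satisfies \textbf{(Irr)} and \textbf{(Symm-2)}. Second, I would note that $\succ$, as a transitive \emph{priority relation}, is in particular a transitive acyclic binary relation over $\Amc$, so it is a legitimate transitive preference relation for the PSETAF. Coherence of $F_{\Kmc,\succ}$ then follows immediately from Theorem~\ref{psetaf-coh}.

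For the strong symmetry check, recall $\defeat_\Kmc=\{(C\setminus\{\alpha\},\alpha)\mid C\in\conflicts{\Kmc},\ \alpha\in C\}$. Property \textbf{(Irr)} asks that there be no attack $S\defeat_\Kmc\gamma$ with $\gamma\in S$; this is immediate from the construction, since every attack has the form $(C\setminus\{\alpha\},\alpha)$ and $\alpha\notin C\setminus\{\alpha\}$. Property \textbf{(Symm-2)} asks that for every attack $S\defeat_\Kmc\beta$ and every $\alpha\in S$, we have $(S\setminus\{\alpha\})\cup\{\beta\}\defeat_\Kmc\alpha$. Here $S=C\setminus\{\beta\}$ for some conflict $C$, so $\alpha\in S$ means $\alpha\in C$, and $(S\setminus\{\alpha\})\cup\{\beta\}=(C\setminus\{\beta,\alpha\})\cup\{\beta\}=C\setminus\{\alpha\}$; hence $(C\setminus\{\alpha\},\alpha)\in\defeat_\Kmc$ is exactly the required attack, witnessed by the same conflict $C$. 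The one point that needs a remark is that this computation uses $\beta\in C\setminus\{\alpha\}$ and $\alpha\neq\beta$, which hold because $\beta$ is the target of the original attack while $\alpha\in S=C\setminus\{\beta\}$; I would just make this explicit. Both properties are therefore purely syntactic consequences of the definition of $\defeat_\Kmc$.

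The remaining ingredient is that $\succ$ is an admissible transitive preference relation. By the definition of a priority relation, $\succ$ is an acyclic binary relation over the assertions of $\Amc$, which is precisely the requirement in the PSETAF definition; and transitivity is assumed in the hypothesis of the theorem. Thus $F_{\Kmc,\succ}$ is a strongly symmetric PSETAF with a transitive preference relation, and Theorem~\ref{psetaf-coh} yields coherence, \ie every preferred extension of $F_{\Kmc,\succ}$ is stable.

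I do not expect any genuine obstacle here: once Theorem~\ref{psetaf-coh} is in hand, this statement is essentially a corollary, and the only work is the routine but necessary verification that the translation lands in the strongly symmetric class. The sole place demanding care is the index bookkeeping in \textbf{(Symm-2)} (checking that the set-difference/union manipulation recovers a genuine attack associated with the \emph{same} conflict $C$), so I would write that step out carefully rather than leave it to the reader.
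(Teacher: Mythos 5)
Your verification that $(\Amc,\defeat_\Kmc)$ is strongly symmetric is correct (and is the same routine observation the paper makes in passing), but the second step of your plan contains a genuine gap: you assert that $\succ$, being a transitive priority relation, ``is in particular a transitive acyclic binary relation over $\Amc$''. Under the paper's definitions this is false. A priority relation may only relate assertions that occur together in some conflict, and accordingly the paper defines \emph{transitive} for priority relations in a conflict-restricted sense: $\alpha_1\succ\dots\succ\alpha_n$ must imply $\alpha_1\succ\alpha_n$ \emph{only when} $\{\alpha_1,\alpha_n\}$ is contained in some conflict. For instance, if $\conflicts{\Kmc}=\{\{\alpha,\beta\},\{\beta,\gamma\}\}$ with $\alpha\succ\beta$ and $\beta\succ\gamma$, then $\succ$ is a transitive priority relation in the paper's sense, yet it is not transitive as a binary relation ($\alpha\succ\gamma$ fails, and indeed cannot be added, since $\alpha$ and $\gamma$ are not co-conflicting). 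Consequently Theorem~\ref{psetaf-coh}, whose hypothesis is ordinary transitivity of the preference relation, cannot be applied directly to $F_{\Kmc,\succ}$; your argument only proves the theorem in the special case where $\succ$ happens to be transitive as a binary relation, which is strictly weaker than the stated result.

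The paper closes exactly this gap with an additional idea. It passes to the transitive closure $\succ'$ of $\succ$, which is acyclic and genuinely transitive, hence an admissible preference relation for a PSETAF (even though it is no longer a priority relation for $\Kmc$, as it may relate non-conflicting assertions); it applies Theorem~\ref{psetaf-coh} to $F_{\mn{cl}}=(\Amc,\defeat_\Kmc,\succ')$; and it then shows that $F_{\Kmc,\succ}$ and $F_{\mn{cl}}$ have the \emph{same corresponding SETAF}, so that coherence transfers back. This last comparison is where the conflict-restricted transitivity hypothesis is actually used: if $S\defeat_\Kmc\alpha$ and $\alpha\succ'\beta$ for some $\beta\in S$, then $\{\alpha,\beta\}\subseteq S\cup\{\alpha\}\in\conflicts{\Kmc}$, so the $\succ$-chain witnessing $\alpha\succ'\beta$ collapses to $\alpha\succ\beta$, whence the two reduced attack relations $\defeat_{\Kmc,\succ}$ and $\defeat_{\Kmc,\succ'}$ coincide. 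You would need to add this closure-and-comparison argument (or an equivalent bridge from conflict-restricted to ordinary transitivity) for your proof to establish the theorem as stated.
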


Theorems \ref{tstable} and \ref{trans-coherent} together show that for transitive $\succ$,
Pareto-repairs coincide also with preferred extensions. 


%

For KBs which contain only binary conflicts (e.g.\  core DL-Lite dialects), the resulting PSETAF is actually a PAF,
so we can drop the transitivity requirement:

\begin{theorem}
For every prioritized KB $\Kmc_\succ$ such that $\Cmc \in \conflicts{\Kmc}$ implies $|\Cmc|=2$, 
$F_{\Kmc, \succ}$ is coherent. Thus,
$\Amc' \in \preps{\Kmc_\succ}$
iff $\Amc'$ is a preferred extension of $F_{\Kmc, \succ}$.
\end{theorem}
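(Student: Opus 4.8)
The plan is to observe that the binary-conflict assumption collapses the PSETAF $F_{\Kmc,\succ}$ to an ordinary \emph{symmetric PAF}, and then to invoke Theorem~\ref{thm:symm-coherent} to obtain coherence without any transitivity requirement. First I would unfold Definition~\ref{translation} under the hypothesis that every conflict has size exactly two. For a conflict $C=\{\alpha,\beta\}$, the attacks contributed to $\defeat_{\Kmc}$ are $(\{\beta\},\alpha)$ and $(\{\alpha\},\beta)$; thus every attacking set is a singleton, so $(\Amc,\defeat_{\Kmc})$ is really an AF rather than a genuine SETAF, and consequently $F_{\Kmc,\succ}$ is a PAF $(\Amc,\defeat_{\Kmc},\succ)$ in the sense of Definition~\ref{red-paf}. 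Moreover this PAF is symmetric: for each conflict $\{\alpha,\beta\}$ we have both $\beta\defeat_{\Kmc}\alpha$ and $\alpha\defeat_{\Kmc}\beta$, so $\defeat_{\Kmc}$ is symmetric, and it is irreflexive since our standing assumption forbids self-conflicting assertions (every conflict has at least two elements).

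Next I would check that the PSETAF reduction agrees with the PAF reduction on singleton attacks: $\{\beta\}\defeat_\succ\alpha$ iff $\{\beta\}\defeat_{\Kmc}\alpha$ and $\alpha\not\succ\beta$, which is exactly $\beta\defeat_\succ\alpha$ in the sense of Definition~\ref{red-paf}. Hence the stable and preferred extensions of $F_{\Kmc,\succ}$ computed as a PSETAF coincide with those computed as a PAF. With this in hand, Theorem~\ref{thm:symm-coherent} directly yields that $F_{\Kmc,\succ}$ is coherent, i.e.\ every preferred extension is stable; and since every stable extension is always preferred (the general fact recalled after Definition~\ref{def:extensions}), the two notions coincide. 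Finally I would combine this with Theorem~\ref{tstable}, which identifies $\preps{\Kmc_\succ}$ with the stable extensions of $F_{\Kmc,\succ}$: coherence upgrades this identification to one with the preferred extensions, giving the claimed equivalence $\Amc'\in\preps{\Kmc_\succ}$ iff $\Amc'$ is a preferred extension of $F_{\Kmc,\succ}$.

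There is essentially no hard obstacle left once Theorem~\ref{thm:symm-coherent} is available: the entire content of the theorem is the removal of the transitivity hypothesis present in Theorem~\ref{trans-coherent}, and that removal has already been carried out at the level of symmetric PAFs. The only points requiring care are bookkeeping ones, namely verifying that the singleton-attack SETAF genuinely behaves as an AF (in particular that the SETAF notions of $A^+$ and of defense specialize to the AF ones when all attacking sets are singletons) and confirming that restricting to the binary case does not interfere with the preference reduction. I do not expect either to cause difficulty.
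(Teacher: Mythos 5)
Your proposal is correct and follows exactly the route the paper intends: under binary conflicts the translation yields singleton attacks, so $F_{\Kmc,\succ}$ is a symmetric (irreflexive) PAF whose PSETAF reduction coincides with the PAF reduction, coherence then follows from Theorem~\ref{thm:symm-coherent}, and combining with Theorem~\ref{tstable} (Pareto-optimal repairs $=$ stable extensions) gives the stated equivalence with preferred extensions. The bookkeeping checks you flag (singleton-attack SETAF notions collapsing to AF notions, and compatibility of the two preference reductions) are precisely the only details needed, and they go through as you describe.
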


Note that it follows from Theorem \ref{tstable} that globally-optimal and completion-optimal repairs correspond to proper subsets of the stable extensions of $F_{\Kmc, \succ}$, but they do not at present have any analog in the argumentation setting. 
For a (SET)AF $F$ that corresponds to a (strongly) symmetric P(SET)AF $P$, it is possible to define new types of extensions that will correspond to globally- and completion-optimal repairs of a KB having $P$ for associated P(SET)AF and to import results from the prioritized KB setting. 
In particular, when $F$ is an AF that respects the conditions of Theorem \ref{thm:SymPAFChar}, we can adapt the algorithms for completion-optimal repairs to tractably enumerate a non-empty subset of the stable/preferred extensions of $F$, while enumerating its stable/preferred extensions is not in \totalp in general.

\subsection{Grounded Semantics for Prioritized KBs}

We next turn to the relationship between grounded extensions
and optimal repair semantics. We start by clarifying the situation  
for the simplest case, in which there are no preferences 
(cf.\ \cite{DBLP:conf/sum/CroitoruV13} for a similar result in a different but related setting).

\begin{restatable}{theorem}{thmGroundedNoPref}
Let $\Kmc=\tup{\Tmc, \Amc}$ be a KB, and let $F_\Kmc$ be the SETAF
corresponding to $F_{\Kmc, \succ_\emptyset}$ (with $\succ_\emptyset$ the empty relation).
Then the  
grounded extension of $F_\Kmc$
coincides with the intersection of the repairs of $\Kmc$. 
\end{restatable}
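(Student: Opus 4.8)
The plan is to show that the grounded extension of $F_\Kmc = (\Amc, \defeat_\Kmc)$ is reached after a single application of the characteristic function and that it equals $\Amc^{\cap} := \bigcap_{\Amc' \in \reps{\Kmc}} \Amc'$. First I would record that, since $\succ_\emptyset$ is empty, the reduction leaves the attack relation untouched, so $F_\Kmc$ is simply $(\Amc, \defeat_\Kmc)$ with $\defeat_\Kmc = \{(C\setminus\{\alpha\}, \alpha) \mid C \in \conflicts{\Kmc},\ \alpha \in C\}$, and all SETAF notions ($\cdot^+$, defense, $\chff$) are taken with respect to this framework.

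The central step, which I would isolate as a lemma, is a purely combinatorial characterization: an assertion $\alpha$ belongs to every repair iff $\alpha$ occurs in no conflict of $\Kmc$. For the contrapositive of one direction, if $\alpha \in C$ for some $C \in \conflicts{\Kmc}$, then $C \setminus \{\alpha\}$ is $\Tmc$-consistent and extends to a repair that must exclude $\alpha$ (otherwise it would contain all of $C$), so $\alpha \notin \Amc^{\cap}$. Conversely, if $\alpha$ lies in no conflict, then adding $\alpha$ to any $\Tmc$-consistent set creates no minimal inconsistent subset, so $\alpha$ belongs to every inclusion-maximal consistent subset, i.e.\ every repair. In $F_\Kmc$, occurring in a conflict is exactly the same as being attacked (each conflict $C \ni \alpha$ yields $C\setminus\{\alpha\} \defeat_\Kmc \alpha$, and each attacker set together with its target is a conflict), and equally the same as belonging to some attacking set. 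Hence $\Amc^{\cap}$ coincides precisely with the set of unattacked arguments of $F_\Kmc$.

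With the lemma in hand, I would compute the grounded extension as the least fixpoint of $\chff$, obtained by the increasing sequence $\emptyset \subseteq \chff(\emptyset) \subseteq \chff(\chff(\emptyset)) \subseteq \cdots$ using monotonicity of $\chff$. Since $\emptyset^+ = \emptyset$, the first step gives $\chff(\emptyset) = \{\alpha \mid \text{no } S \defeat_\Kmc \alpha\}$, which by the lemma is exactly $\Amc^{\cap}$. It then remains to verify $\chff(\Amc^{\cap}) = \Amc^{\cap}$. The crucial point is that no attack originates inside $\Amc^{\cap}$: if $S \defeat_\Kmc \beta$ with $S \subseteq \Amc^{\cap}$, then every element of $S$ lies in the conflict $S \cup \{\beta\}$ and is therefore attacked, contradicting $S \subseteq \Amc^{\cap}$. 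Thus $(\Amc^{\cap})^+ = \emptyset$, so an argument can be defended by $\Amc^{\cap}$ only when it is unattacked, giving $\chff(\Amc^{\cap}) = \Amc^{\cap}$. As $\Amc^{\cap}$ is $\Tmc$-consistent and hence conflict-free, it is the least fixpoint of $\chff$ and therefore the grounded extension of $F_\Kmc$.

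The one subtlety worth flagging is that in a general SETAF the grounded extension may strictly exceed the unattacked arguments through reinstatement; here this cannot happen precisely because, in $F_\Kmc$, being part of an attack, being attacked, and occurring in a conflict are all equivalent. I expect this equivalence, rather than any delicate fixpoint computation, to be the heart of the argument, so I would state it cleanly before running the one-step fixpoint calculation.
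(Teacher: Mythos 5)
Your proof is correct, but it takes a genuinely different route from the paper's for half of the argument. The inclusion $\Amc^\cap \subseteq \Gmc$ is handled essentially identically in both: an assertion lying in every repair occurs in no conflict, hence is unattacked, hence belongs to $\chf{F_\Kmc}(\emptyset) \subseteq \Gmc$. For the reverse inclusion $\Gmc \subseteq \Amc^\cap$, however, the paper computes nothing: it observes that with the empty priority relation $\reps{\Kmc} = \preps{\Kmc_{\succ_\emptyset}}$ and invokes Theorem \ref{ground-pareto} (itself a consequence of the stable-extension characterization of Theorem \ref{tstable}, since stable extensions are complete and $\Gmc$ is the minimal complete extension), so that $\Gmc$ is contained in every repair. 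You instead give a self-contained fixpoint computation: you prove the converse direction of the conflict lemma (an assertion in no conflict belongs to every repair), identify the unattacked arguments with $\Amc^\cap$, and then show $(\Amc^\cap)^+ = \emptyset$, so that $\chf{F_\Kmc}$ stabilizes after a single iteration. Your route is longer but buys strictly more: it shows that the grounded extension of $F_\Kmc$ is exactly the set of unattacked arguments, i.e., that no reinstatement occurs in the preference-free case --- a fact the paper's proof does not exhibit, and one that genuinely fails once preferences are present (cf.\ Example \ref{ex:Elect}, where $\gamma$ is attacked yet defended by $\{\alpha\}$ and so enters $\Gmc$). The paper's route is shorter and shows how the general machinery specializes to the preference-free case, but it is correspondingly less informative about the structure of $\Gmc$ itself; both arguments are sound.
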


For prioritized KBs, we can use Theorem \ref{tstable} and the fact that every 
stable extension is a complete extension to relate 
grounded extensions with Pareto-optimal  
repairs: 

\begin{restatable}{theorem}{thmGroundedPareto}\label{ground-pareto}
If $\Gmc$ is the grounded extension of 
$F_{\Kmc, \succ}$,
then $\Gmc \subseteq \Bmc$ for every $\Bmc \in \preps{\Kmc_\succ}$. 
\end{restatable}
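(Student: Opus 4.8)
The plan is to chain together three ingredients: the identification of Pareto-optimal repairs with stable extensions (Theorem~\ref{tstable}), the standard argumentation inclusion of stable extensions among complete extensions, and the least-fixpoint characterization of the grounded extension. Fix an arbitrary $\Bmc \in \preps{\Kmc_\succ}$; it suffices to show $\Gmc \subseteq \Bmc$.

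First I would apply Theorem~\ref{tstable}: since $\Bmc$ is a Pareto-optimal repair of $\Kmc_\succ$, it is a stable extension of $F_{\Kmc, \succ}$, that is, of the corresponding SETAF $(\Amc, \defeat_\succ)$. Thus the goal reduces to showing that $\Gmc$ is contained in every stable extension of this SETAF.

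Next I would verify that every stable extension $E$ of the SETAF is a complete extension. This is the only step that calls for an explicit check, since the SETAF defense notion differs slightly from the plain-AF one, but it is routine. From $E^+ = \Amc \setminus E$ one reads off conflict-freeness, as no $\alpha \in E$ lies in $E^+$. For $E \subseteq \chff(E)$, given any attack $S \defeat_\succ \alpha$ with $\alpha \in E$, conflict-freeness forces $S \not\subseteq E$, so some $\beta \in S$ lies in $\Amc \setminus E = E^+$, witnessing that $E$ defends $\alpha$. For $\chff(E) \subseteq E$, any defended $\alpha \notin E$ would lie in $E^+$, yielding an attacking $S \subseteq E$ against which $E$ must defend, contradicting conflict-freeness. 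Hence $E = \chff(E)$ and $E$ is complete.

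Finally I would invoke the definition of the grounded extension as the least fixpoint of the characteristic function $\chff$ of the corresponding SETAF. Since $\chff$ is monotone and every complete extension is a fixpoint of $\chff$, the least fixpoint $\Gmc$ is contained in every complete extension; in particular $\Gmc \subseteq E = \Bmc$. As $\Bmc$ was arbitrary, this gives $\Gmc \subseteq \Bmc$ for every $\Bmc \in \preps{\Kmc_\succ}$, as required. The proof is essentially an assembly of established facts, so I do not anticipate a genuine obstacle beyond the brief confirmation that the stable-implies-complete inclusion survives the passage from AFs to SETAFs.
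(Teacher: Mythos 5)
Your proof is correct and follows essentially the same route as the paper's: apply Theorem~\ref{tstable} to identify $\Bmc$ with a stable extension, note that stable extensions are complete, and conclude by the least/minimal property of the grounded extension. The only difference is that you spell out the stable-implies-complete check for SETAFs and invoke the least-fixpoint characterization explicitly (which the paper leaves implicit), a harmless and indeed slightly more rigorous elaboration of the same argument.
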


Thus, the grounded extension contains only assertions common to all
Pareto-optimal repairs. It may not contain all such assertions,
as the following example illustrates:

\begin{example}
Consider a prioritized KB $\Kmc_\succ$ in which $\conflicts{\Kmc}=\{\{\alpha,\beta\},\{\alpha,\gamma\},\{\beta,\gamma\},\{\gamma,\delta\}\}$ and where $\succ$ contains $\alpha\succ\gamma$, $\beta\succ\gamma$, and $\gamma\succ\delta$. Then $\delta$ appears in the two Pareto-optimal repairs of $\Kmc_\succ$, which are $\{\alpha,\delta\}$ and $\{\beta,\delta\}$, while the grounded extension of $F_{\Kmc, \succ}$ is $\emptyset$.

\end{example}

We propose to use the grounded extension to define a new
inconsistency-tolerant semantics for prioritized KBs: 

\begin{definition}
A query $q$  is entailed from $\Kmc_\succ$ under \emph{grounded semantics}, denoted $\Kmc_\succ \models_{GR} q$,
iff $\tup{\Tmc, \Gmc} \models q$, where $\Gmc$ is the grounded extension 
of $F_{\Kmc, \succ}$.
\end{definition}

As follows from Theorem \ref{ground-pareto}, our new semantics provides an under-approximation of the P-IAR semantics:

\begin{theorem}\label{ground-iar-pareto}
For every prioritized KB $\Kmc_\succ$ and query $q$: 
$\Kmc_\succ \models_{GR} q$ implies $\Kmc_\succ \iarmodels{P} q$. 
\end{theorem}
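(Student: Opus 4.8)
The plan is to obtain the result as a short corollary of Theorem~\ref{ground-pareto}, combined with the monotonicity of BCQ entailment under extension of the ABox. Unfolding the definitions, $\Kmc_\succ \models_{GR} q$ abbreviates $\tup{\Tmc, \Gmc} \models q$, where $\Gmc$ is the grounded extension of $F_{\Kmc, \succ}$, whereas $\Kmc_\succ \iarmodels{P} q$ abbreviates $\tup{\Tmc, \Amc^{\cap}} \models q$ with $\Amc^{\cap} = \bigcap_{\Amc' \in \preps{\Kmc_\succ}} \Amc'$. It therefore suffices to prove the single implication $\tup{\Tmc, \Gmc} \models q \Rightarrow \tup{\Tmc, \Amc^{\cap}} \models q$.

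The first step is to establish the inclusion $\Gmc \subseteq \Amc^{\cap}$. By Theorem~\ref{ground-pareto}, the grounded extension $\Gmc$ is contained in every Pareto-optimal repair $\Bmc \in \preps{\Kmc_\succ}$, and intersecting over all such $\Bmc$ gives $\Gmc \subseteq \Amc^{\cap}$ directly. Here I would first note that $\Amc^{\cap}$ is a genuine (non-vacuous) intersection: Pareto-optimal repairs always exist, since completion-optimal repairs are produced by the greedy procedure recalled in Section~\ref{sec:reps} and $\creps{\Kmc_\succ} \subseteq \preps{\Kmc_\succ}$, so $\preps{\Kmc_\succ} \neq \emptyset$.

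The second step is the monotonicity argument. Since $\Gmc \subseteq \Amc^{\cap}$, every model $\Imc$ of $\tup{\Tmc, \Amc^{\cap}}$ satisfies all assertions in $\Amc^{\cap}$, hence all assertions in $\Gmc$ (under the same TBox $\Tmc$), and is thus a model of $\tup{\Tmc, \Gmc}$. Consequently, any BCQ $q$ that holds in all models of $\tup{\Tmc, \Gmc}$ also holds in all models of $\tup{\Tmc, \Amc^{\cap}}$, giving $\tup{\Tmc, \Amc^{\cap}} \models q$, that is, $\Kmc_\succ \iarmodels{P} q$, as required.

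There is no genuinely hard step here: the whole argument rests on Theorem~\ref{ground-pareto}, and the remainder is the routine fact that CQ entailment is preserved when assertions are added to the ABox. The only point deserving a moment's attention is the well-definedness of $\Amc^{\cap}$, which is handled by the existence of completion- (hence Pareto-) optimal repairs.
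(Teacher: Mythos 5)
Your proof is correct and follows essentially the same route as the paper, which presents Theorem~\ref{ground-iar-pareto} as an immediate consequence of Theorem~\ref{ground-pareto}: the inclusion $\Gmc \subseteq \bigcap_{\Amc' \in \preps{\Kmc_\succ}} \Amc'$ plus monotonicity of entailment under ABox extension. Your additional check that $\preps{\Kmc_\succ} \neq \emptyset$ (via completion-optimal repairs) is a sensible precaution but does not change the argument.
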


Importantly, since the grounded extension of a (SET)AF can be computed in polynomial time,
this semantics allows for tractable query answering in some relevant settings.

\begin{theorem}\label{thm:groundedUpper}
Let $\mathcal{L}$ be a DL satisfying \boundedconf, 
 \polyCQ and 
 \polycons. 
Then the problem of BCQ entailment over $\mathcal{L}$-KBs under grounded semantics 
is in \ptime\ w.r.t.\ data complexity. 
\end{theorem}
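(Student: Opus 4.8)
The plan is to show that under grounded semantics, BCQ entailment reduces to (i) constructing the SETAF $F_{\Kmc,\succ}$ associated to the prioritized KB, (ii) computing its grounded extension $\Gmc$, and (iii) evaluating the query over the classical KB $\tup{\Tmc,\Gmc}$; and then to argue each of these three steps runs in polynomial time in data complexity. The definition of grounded semantics tells us exactly that $\Kmc_\succ \models_{GR} q$ iff $\tup{\Tmc,\Gmc}\models q$, so it suffices to bound the cost of producing $\Gmc$ and then invoke the $\polyCQ$ assumption for the final entailment check.

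First I would handle the construction of $F_{\Kmc,\succ}=(\Amc,\defeat_\Kmc,\succ)$. By Definition \ref{translation}, the attacks are determined by the conflicts $\conflicts{\Kmc}$. The subtle point here is that for a DL satisfying only $\polycons$ (not $\boundedconf$), conflicts may have unbounded size and there could be exponentially many of them, so we must \emph{not} attempt to enumerate $\conflicts{\Kmc}$ explicitly. Instead, I would observe that the grounded extension is the least fixpoint of the characteristic function $\chff$, and that to compute $\chff$ we only ever need to test, for a candidate assertion $\alpha$ and a current set $A$, whether every attack against $\alpha$ is countered by $A$. For SETAFs arising from our translation, an assertion $\alpha$ is defended by $A$ precisely when $\tup{\Tmc, A\cup\{\alpha\}}$ together with the surviving assertions admits no conflict escaping $A$ through $\alpha$; this can be reformulated as a consistency question. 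The key move is that defense and the reduced attack relation $\defeat_\succ$ can both be decided via polynomially many consistency checks over sub-ABoxes of $\Amc$, each of which is polynomial by $\polycons$, so we never materialize individual conflicts.

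Next I would compute $\Gmc$ as the least fixpoint of $\chff$ by the standard iteration $\emptyset \subseteq \chff(\emptyset) \subseteq \chff^2(\emptyset)\subseteq\cdots$, which stabilizes after at most $|\Amc|$ rounds since each round that does not reach the fixpoint adds at least one assertion. Each application of $\chff$ examines each of the $|\Amc|$ candidate assertions and, per candidate, performs the defense test described above using a polynomial number of $\polycons$ checks. Hence the whole fixpoint computation is polynomial in $|\Amc|$ with $\Tmc$ fixed. Having obtained $\Gmc$, the final entailment $\tup{\Tmc,\Gmc}\models q$ is decidable in polynomial time by the $\polyCQ$ assumption. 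Composing the three polynomial-time phases yields membership in \ptime in data complexity.

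The main obstacle I anticipate is the second step: precisely characterizing, in the case of \emph{unbounded} conflict size, how to evaluate $\chff$ (equivalently, how to decide "$A$ defends $\alpha$" and how to decide the preference-reduced attacks $\defeat_\succ$) without enumerating conflicts, and verifying that this characterization is faithful to the reduction $\defeat_\succ$ from Definition \ref{red-paf} lifted to PSETAFs. One must be careful that the preference filtering in $S\defeat_\succ\alpha$ — which quantifies over all $\beta\in S$ with $S$ possibly large — is still decidable through consistency tests of polynomial size; I expect this to go through because $\succ$ is part of the fixed-size machinery only insofar as we compare assertions that co-occur in some conflict, and the relevant comparisons can be localized. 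Once this faithful, conflict-free characterization of $\chff$ is in hand, the polynomial-time bound follows routinely from $\polycons$ and $\polyCQ$.
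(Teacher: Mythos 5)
Your high-level decomposition (translate to the PSETAF, compute the grounded extension as the least fixpoint of $\chff$, then invoke \polyCQ for the final entailment test) matches the paper's intended argument, but the step you yourself flag as the ``main obstacle'' is a genuine gap, and your proposed way around it fails. The paper's (implicit) proof takes exactly the route you reject: it materializes the conflict hypergraph and runs the standard polynomial-time fixpoint computation on the resulting \emph{explicit} SETAF. This is legitimate in the settings the result is aimed at because, in data complexity, the TBox is fixed, and for the targeted \dllite dialects (cf.\ the corollary for \dlliterhorn) conflict size is bounded by a constant, so $\conflicts{\Kmc}$ has polynomially many hyperedges, each discoverable by a \polycons test on a bounded-size candidate subset. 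Your ``key move'' --- that defense and $\defeat_\succ$ can be decided ``via polynomially many consistency checks over sub-ABoxes'' without enumerating conflicts --- is only half right. Computing $A^+$ for a conflict-free $A$ is indeed a consistency question: $\beta\in A^+$ iff $(A\setminus\{\gamma\mid\beta\succ\gamma\})\cup\{\beta\}$ is $\Tmc$-inconsistent. But deciding whether $A$ \emph{defends} $\alpha$ requires deciding whether there exists a conflict $C\ni\alpha$ with $C\setminus\{\alpha\}\subseteq\Amc\setminus(A^+\cup\{\gamma\mid\alpha\succ\gamma\})$, and this ambient set is in general itself $\Tmc$-inconsistent, so a consistency check on it together with $\alpha$ reveals nothing: what you need is to decide whether $\alpha$ occurs in \emph{some minimal} inconsistent subset of a possibly inconsistent set, and that is not reducible to consistency tests.

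This is not a presentational gap but a provable impossibility under the hypotheses you allow yourself. Take $\succ$ to be empty; then already the first iteration $\chff(\emptyset)$ is the set of assertions occurring in no conflict, which, by the paper's theorem on the empty priority relation, is exactly the intersection of the repairs. For \elbot --- a DL satisfying both \polycons and \polyCQ but admitting conflicts of unbounded size --- membership in the intersection of repairs (IAR) is \conp-hard in data complexity \cite{DBLP:conf/rweb/BienvenuB16}. Hence no polynomial-time evaluation of $\chff$ of the kind you postulate can exist unless \ptime $=$ \np; your closing hope that the localization ``goes through'' is exactly what cannot be made to work. The conclusion to draw is that Theorem \ref{thm:groundedUpper} must be read (as the paper's neighbouring discussion of the Elect semantics makes explicit) with the tacit assumption that the conflicts are computable in polynomial time, e.g., under \boundedconf, which holds for the \dllite dialects with a fixed TBox. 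The repaired proof is then the paper's: compute $\conflicts{\Kmc}$ explicitly, build $F_{\Kmc,\succ}$, run the fixpoint iteration on the explicit SETAF (at most $|\Amc|$ rounds, each polynomial in the number of attacks), and finish with a single \polyCQ check on $\tup{\Tmc,\Gmc}$.
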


\begin{corollary}
For \dlliterhorn\ KBs, BCQ entailment under grounded semantics 
is in \ptime\ w.r.t.\ data complexity.
\end{corollary}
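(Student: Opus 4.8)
The plan is to obtain this corollary as an immediate instantiation of Theorem~\ref{thm:groundedUpper}, whose hypotheses are exactly \polyCQ\ and \polycons. So the only thing that needs checking is that \dlliterhorn\ satisfies these two properties, after which the result follows by taking $\mathcal{L}=\dlliterhorn$.

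For \polycons, I would recall that consistency checking for \dlliterhorn\ is first-order rewritable (equivalently, reducible to evaluating a fixed union of Boolean queries that detect conflicts), and hence is in \ptime\ in data complexity; the fact that conflicts have bounded size (the \boundedconf\ property, which \dlliterhorn\ enjoys) makes this explicit, since each conflict involves at most the concept and role names occurring in $\Tmc$, so the finitely many conflict patterns can be tested in polynomial time in $|\Amc|$. For \polyCQ, I would invoke the standard FO-rewritability of BCQ entailment for \dlliterhorn\ KBs, which again yields polynomial data complexity. Both of these are classical properties of the \dllite\ family and are already asserted in the discussion preceding Table~\ref{tab:complexity}, where it is noted that \dlliterhorn\ satisfies all of the listed properties except \binaryconf.

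With both hypotheses verified, Theorem~\ref{thm:groundedUpper} applies directly and gives that BCQ entailment under grounded semantics over \dlliterhorn-KBs is in \ptime\ in data complexity, completing the proof. There is essentially no obstacle here: all the technical content---computing the grounded extension of $F_{\Kmc,\succ}$ in polynomial time and then answering the query over $\tup{\Tmc,\Gmc}$---has already been carried out in establishing Theorem~\ref{thm:groundedUpper}, and the corollary merely records that \dlliterhorn\ falls within its scope.
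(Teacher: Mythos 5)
Your proposal is correct and matches the paper's reasoning exactly: the corollary is a direct instantiation of Theorem~\ref{thm:groundedUpper}, using the fact (already noted in the discussion preceding Table~\ref{tab:complexity}) that \dlliterhorn\ satisfies \polyCQ\ and \polycons. Your additional remarks on FO-rewritability and bounded conflicts simply spell out why those two properties hold, which the paper takes as known.
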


The grounded semantics can be compared to another recently proposed
semantics for prioritized KBs, the \emph{Elect semantics}, defined for KBs with binary conflicts by \citeauthor{DBLP:conf/lpnmr/BelabbesBC19} \shortcite{DBLP:conf/lpnmr/BelabbesBC19} then generalized for non-binary conflicts \cite{DBLP:conf/ksem/BelabbesB19}. 
Elect evaluates queries over a set of \emph{elected assertions} and is tractable for DLs for which computing the conflicts and BCQ entailment can be done in polynomial time \wrt data complexity. 

\begin{definition}\cite{DBLP:conf/ksem/BelabbesB19}
An assertion $\alpha\in\Amc$ is \emph{elected} iff for every $C\in\conflicts{\Kmc}$, if $\alpha\in C$, then there exists $\beta\in C$ such that $\alpha\succ\beta$. 
The set of elected assertions is denoted by $\mi{Elect}(\Kmc_\succ)$. \end{definition}

The grounded semantics is more productive than Elect, \ie is a more precise under-approximation of P-IAR.

\begin{restatable}{theorem}{thmElect}\label{thm:elect}
$\mi{Elect}(\Kmc_\succ)\subseteq \Gmc$ where $\Gmc$ is the grounded extension of $F_{\Kmc, \succ}$. 
\end{restatable}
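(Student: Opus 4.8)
The plan is to prove the inclusion argument-by-argument, by showing that every elected assertion is \emph{unattacked} in the SETAF $(\Amc, \sdefeat)$ corresponding to $F_{\Kmc, \succ}$, and that unattacked arguments always lie in the grounded extension. The key observation is that being elected is precisely the condition that causes \emph{all} incoming attacks to be filtered out by the preference relation during the reduction to the corresponding SETAF.

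First I would fix an arbitrary $\alpha \in \mi{Elect}(\Kmc_\succ)$ and examine the attacks targeting $\alpha$ in the underlying SETAF $(\Amc, \defeat_{\Kmc})$. By Definition~\ref{translation}, every such attack has the form $(C \setminus \{\alpha\}, \alpha)$ for some conflict $C \in \conflicts{\Kmc}$ with $\alpha \in C$. Since $\alpha$ is elected, for this $C$ there is some $\beta \in C$ with $\alpha \succ \beta$; as $\succ$ is acyclic and hence irreflexive, $\beta \neq \alpha$, so $\beta \in C \setminus \{\alpha\}$. Recalling that in the corresponding SETAF an attack $S \defeat_{\Kmc} \alpha$ survives as $S \sdefeat \alpha$ only when $\alpha \not\succ \gamma$ for \emph{every} $\gamma \in S$, the presence of $\beta \in C \setminus \{\alpha\}$ with $\alpha \succ \beta$ shows that the attack $(C \setminus \{\alpha\}, \alpha)$ is eliminated. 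As this holds for every conflict containing $\alpha$, the argument $\alpha$ has no attacker in $(\Amc, \sdefeat)$.

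It then remains to transfer ``unattacked'' to membership in the grounded extension. Since there is no $S$ with $S \sdefeat \alpha$, the defence condition for SETAFs holds vacuously for $\alpha$ relative to any set, so in particular $\alpha \in \chf{F_{\Kmc, \succ}}(\emptyset)$. Because $\Gmc$ is the least fixpoint of the (monotone) characteristic function, it contains $\chf{F_{\Kmc, \succ}}(\emptyset)$, whence $\alpha \in \Gmc$. As $\alpha$ was arbitrary, this yields $\mi{Elect}(\Kmc_\succ) \subseteq \Gmc$.

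I do not expect a genuine obstacle here: the entire content is the first step's observation that election removes all attacks on $\alpha$ under the preference-based reduction, together with the routine fact that unattacked arguments already appear at the first iteration $\chf{F_{\Kmc,\succ}}(\emptyset)$ of the grounded computation. The only points needing a little care are that every attacker of $\alpha$ indeed originates from a conflict containing $\alpha$ (immediate from the definition of $\defeat_{\Kmc}$) and that $\beta \neq \alpha$, which follows from acyclicity of $\succ$. Notably, the proof does not require $\succ$ to be transitive and works for arbitrary (possibly non-binary) conflicts.
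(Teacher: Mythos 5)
Your proposal is correct and follows essentially the same route as the paper's proof: show that being elected means every attack $(C\setminus\{\alpha\},\alpha)$ is filtered out by the preference-based reduction, so $\alpha$ is unattacked in the corresponding SETAF and hence lies in $\chf{F_{\Kmc,\succ}}(\emptyset)\subseteq\Gmc$. The only difference is that you spell out two details the paper leaves implicit (that $\beta\neq\alpha$ by acyclicity of $\succ$, and that the least fixpoint contains the first iterate), which is fine.
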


\begin{example}\label{ex:Elect}
Consider a prioritized KB $\Kmc_\succ$ in which $\conflicts{\Kmc}=\{\{\alpha,\beta\},\{\beta,\gamma\}\}$ and where $\succ$ contains $\alpha\succ\beta$ and $\beta\succ\gamma$. 
Then $\mi{Elect}(\Kmc_\succ)=\{\alpha\}$ while $\Gmc=\{\alpha,\gamma\}$.
Indeed, $\alpha\in\Gmc$ and $\gamma$ is defended by $\{\alpha\}$ in the AF corresponding to $F_{\Kmc, \succ}$, so $\gamma\in \Gmc$.
\end{example}

Our next result establishes that reasoning under grounded semantics is \ptime-complete in the settings of Theorem \ref{thm:groundedUpper}.
\begin{restatable}{theorem}{thmGroundedHardness}
IQ entailment under grounded semantics is \ptime-hard \wrt data complexity for \dllitecore. 
\end{restatable}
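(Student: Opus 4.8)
The final statement to prove is that IQ entailment under grounded semantics is \ptime-hard w.r.t.\ data complexity for \dllitecore. Since the upper bound (membership in \ptime) is already established in Theorem~\ref{thm:groundedUpper}, this matching lower bound will establish \ptime-completeness. The plan is to reduce from a canonical \ptime-complete problem, most naturally the Monotone Circuit Value Problem (MCVP): given a Boolean circuit built from AND and OR gates (with known input values), decide whether the output gate evaluates to \true. This is a classic \ptime-complete problem under logspace (indeed \aczero) reductions, which is the appropriate reduction notion for data complexity since the TBox and query must remain fixed.

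First I would recall the characterization of the grounded extension of $F_{\Kmc,\succ}$ as the least fixpoint of the characteristic function $\chff$, so membership of an assertion in $\Gmc$ is exactly the statement that it is iteratively \emph{defended}. The key insight, illustrated by Example~\ref{ex:Elect}, is that defense propagates: an assertion $\gamma$ enters the grounded extension once every attacker of $\gamma$ is itself attacked by something already in $\Gmc$. The plan is to encode the gates of the circuit as ABox assertions and to design conflicts (hyperedges, or even binary conflicts since \dllitecore\ has \binaryconf) together with a priority relation $\succ$ so that ``gate $g$ evaluates to \true'' corresponds precisely to ``the assertion $\alpha_g$ belongs to the grounded extension $\Gmc$.'' The output assertion would then be queried by a fixed IQ, so that $\Kmc_\succ\models_{GR} q$ iff the circuit outputs \true.

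The core of the construction is to realize Boolean gate behaviour through the defense mechanism. The plan is to introduce, for each gate $g$, a ``true'' assertion and an auxiliary ``false/blocker'' assertion, with conflicts between them, and to orient the priority relation $\succ$ along the wires so that defense flows from inputs to outputs in the direction matching gate semantics. An OR gate is the easier case: its true-assertion should be defended (enter $\Gmc$) as soon as \emph{some} input's true-assertion is in $\Gmc$, which is naturally captured by the existential nature of defense in SETAFs (defense requires attacking \emph{some} element of each attacking set). An AND gate is more delicate, since it requires \emph{all} inputs to be true; here I would use collective attacks (the hyperedge structure of the PSETAF, available because \dllitecore\ conflicts, while binary in general, can be arranged via auxiliary assertions and concept names in the TBox to simulate the required conjunctive blocking) so that the blocker of an AND gate is only neutralized when all input true-assertions are simultaneously present in $\Gmc$. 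I would hard-wire the input gates by making their true-assertions conflict-free (hence trivially in $\Gmc$) when the input is \true, and attacked-but-undefended when the input is \false.

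The main obstacle I anticipate is the faithful simulation of AND gates within \dllitecore, which supports only binary conflicts: the conjunctive ``all inputs true'' condition is inherently a collective-attack phenomenon, so I must simulate it using only pairwise conflicts plus carefully staged auxiliary assertions and priorities, ensuring no spurious defense paths arise that would let a gate become true prematurely. I would verify correctness by an induction on the topological order of the gates, proving the invariant that $\alpha_g\in\Gmc$ iff $g$ evaluates to \true; the base case handles input gates and the inductive step treats AND and OR gates separately. A secondary concern is ensuring $\succ$ is a legitimate priority relation (acyclic, and only relating assertions sharing a conflict), which I would guarantee by orienting all priority edges consistently with the acyclic circuit structure.
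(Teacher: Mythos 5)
You correctly pick an interchangeable source problem (MCVP is essentially the paper's choice, namely membership of a variable in the minimal model of a definite Horn formula) and the right target invariant (gate/atom evaluates to true iff a corresponding assertion lies in the grounded extension, with a fixed TBox and a fixed IQ on the output assertion). But the proposal has a genuine gap exactly at the point you flag as ``the main obstacle'': the AND gates. You never give that construction, and the direction you propose for it --- treating conjunction as ``inherently a collective-attack phenomenon'' that must be simulated by hyperedges or auxiliary machinery mimicking them --- misreads how defense works. In the grounded semantics an argument is defended only when \emph{every} one of its attackers is counter-attacked, and each attacker is counter-attacked when \emph{some} member of the current extension attacks it. This built-in $\forall/\exists$ alternation is exactly the Horn/monotone-gate pattern, so conjunction comes for free from plain \emph{binary} attacks: you simply attach one separate binary attacker per conjunct. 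No collective attacks are needed, which is fortunate since \dllitecore only yields binary conflicts.

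That is precisely what the paper does. For each rule $r\colon b_1\wedge\dots\wedge b_k\rightarrow h$ it creates an assertion $\mi{SatBodHead}(r,h)$ and, for each body atom $b_j$, a blocker $\mi{FalseInBod}(b_j,r)$ in binary conflict with it; the priority $\mi{SatBodHead}(r',b_j)\succ \mi{FalseInBod}(b_j,r)$ lets any rule $r'$ with head $b_j$ attack that blocker one-directionally. Hence $\mi{SatBodHead}(r,h)$ enters the grounded extension iff each blocker $\mi{FalseInBod}(b_j,r)$ is attacked by some already-accepted $\mi{SatBodHead}(\cdot,b_j)$, i.e., iff all body atoms are derivable (AND over attackers, OR over defenders); the fixed IQ queries an assertion $T(z)$ that is in mutual conflict with the blockers $\mi{FalseInBod}(z,\cdot)$. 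The TBox is fixed (three axioms), all conflicts are binary, and $\succ$ is trivially acyclic since it always points from a $\mi{SatBodHead}$ assertion to a $\mi{FalseInBod}$ assertion. The one real delicacy, which you only partially address via acyclicity of $\succ$, is that the attack graph must be \emph{realizable} as symmetric conflicts pruned by an acyclic priority (in particular, no one-directional attack cycles); this is why the paper adapts the known reduction of \cite{DBLP:journals/flap/DvorakD17} rather than using it off the shelf. Without the ``one binary attacker per conjunct'' idea your plan stalls at the AND gates, and the hyperedge-simulation route you sketch is solving a problem that does not arise.
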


Dung (\citeyear{DBLP:journals/ai/Dung95}) observed that the grounded extension can be computed via the well-founded semantics of a simple logic program.  
The following lemma generalizes this result to $k$-SETAFs, i.e., SETAFs for which $S \defeat \beta$ implies $|S| \leq k$. 

\begin{theorem}\label{lemma:accp}
Let $F=(\args,\defeat)$ be a $k$-SETAF. Then $\alpha$ is in the grounded extension of $F$ iff 
$\accp(\alpha)$ belongs to the well-founded model of the following normal logic program:
\begin{align*}
&\{\defp(x) \!\leftarrow\! \mathtt{att}_i\!(y_1\!,.., y_i,x), \! \accp(y_1),\!...,\!\accp(y_i)\! \mid \!1 \!\leq \! i \leq\! k\} \\
&\cup \{\accp(x) \leftarrow \mathtt{arg}(x), \nnot \defp(x)\} \cup \{\mathtt{arg}(\alpha) \mid \alpha \in \args\} \\
&\cup \{\mathtt{att}_i(\alpha_1, \ldots, \alpha_i, \beta) \leftarrow \, \mid \{\alpha_1, \ldots, \alpha_i\} \defeat \beta\}
\end{align*}
\end{theorem}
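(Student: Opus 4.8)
The plan is to reduce both sides of the equivalence to fixpoints of a single antimonotone operator on $2^{\args}$ and then invoke the standard order-theoretic relationship between the least and greatest fixpoints of such an operator. Define $\phi : 2^{\args} \to 2^{\args}$ by $\phi(D) = \{\beta \mid S \defeat \beta \text{ for some } S \text{ with } S \cap D = \emptyset\}$; intuitively $\phi(D)$ collects the arguments that still carry an attack once the arguments in $D$ are discarded. Enlarging $D$ can only destroy attacking sets, so $\phi$ is antimonotone, and hence $g := \phi \circ \phi$ is monotone and possesses both a least and a greatest fixpoint. First I would recast the grounded extension in these terms. Using the SETAF definitions of $A^+$ and of defense, one verifies the two identities $A^+ = \phi(\args \setminus A)$ and $\chff(A) = \args \setminus \phi(A^+)$, whence $\args \setminus \chff(A) = g(\args \setminus A)$. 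Consequently the complements of the stages $\emptyset \subseteq \chff(\emptyset) \subseteq \chff^2(\emptyset) \subseteq \cdots$ of the grounded construction are exactly $\args \supseteq g(\args) \supseteq g^2(\args) \supseteq \cdots$, so the grounded extension $\Gmc$ satisfies $\Gmc = \args \setminus \mathrm{gfp}(g)$.

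Next I would analyse the logic program through the alternating-fixpoint characterisation of the well-founded model. The facts $\mathtt{arg}$ and $\mathtt{att}_i$ occur in every reduct and faithfully encode $\args$ and $\defeat$ (here the bound $k$ is used only to ensure the program has finitely many rule schemas and bounded predicate arities), so the only atoms that matter are the $\accp$- and $\defp$-atoms. Computing the Gelfond--Lifschitz operator $\Gamma$ on an interpretation whose defeated set is $D$, the reduct keeps $\accp(x)$ for each $x \notin D$, and the $\defp$-rules then derive exactly $\defp(x)$ for $x \in \phi(D)$; thus $\Gamma$ maps (accepted, defeated) to $(\args \setminus D, \phi(D))$ and depends only on $D$. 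Iterating, the defeated component evolves under $\Gamma^2$ by $D \mapsto g(D)$, and the accepted component of a fixpoint $(\,\cdot\,, D)$ is $\args \setminus \phi(D)$. Since the true atoms of the well-founded model form $\mathrm{lfp}(\Gamma^2)$ (computed from the empty interpretation), the set of arguments $\alpha$ with $\accp(\alpha)$ true is $\args \setminus \phi(\mathrm{lfp}(g))$.

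It then remains to connect $\args \setminus \phi(\mathrm{lfp}(g))$ with $\args \setminus \mathrm{gfp}(g)$, i.e.\ to prove $\phi(\mathrm{lfp}(g)) = \mathrm{gfp}(g)$. This is the order-theoretic crux: because $\phi$ is antimonotone and $g = \phi \circ \phi$, the map $\phi$ sends fixpoints of $g$ to fixpoints of $g$ and reverses inclusions, so it interchanges the least and greatest fixpoints. A short squeeze argument makes this precise: $\phi(\mathrm{lfp}(g))$ is itself a fixpoint of $g$, hence $\phi(\mathrm{lfp}(g)) \subseteq \mathrm{gfp}(g)$; conversely, since $\phi(\mathrm{gfp}(g))$ is a fixpoint we have $\mathrm{lfp}(g) \subseteq \phi(\mathrm{gfp}(g))$, and applying the antimonotone $\phi$ yields $\phi(\mathrm{lfp}(g)) \supseteq g(\mathrm{gfp}(g)) = \mathrm{gfp}(g)$. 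Combining the three steps shows that $\accp(\alpha)$ is true in the well-founded model iff $\alpha \in \args \setminus \mathrm{gfp}(g) = \Gmc$, as required.

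The main obstacle I anticipate is the second step: justifying rigorously, rather than via the informal reading ``$\defp$ means $\phi(D)$'', that the well-founded model's true $\accp$-atoms are precisely $\args \setminus \phi(\mathrm{lfp}(g))$. This requires pinning down the exact form of the reduct and its least model in the presence of collective ($i$-ary) attacks, checking that $\Gamma$ is well defined and antimonotone in this setting, and confirming that the auxiliary fact-predicates $\mathtt{arg}$ and $\mathtt{att}_i$ do not interfere with the fixpoint iteration. Once $\Gamma$ is shown to act as $(A,D) \mapsto (\args \setminus D, \phi(D))$, the remainder is the clean fixpoint algebra above; as a sanity check one can verify the construction on small instances such as a mutual two-argument attack, where $\mathrm{lfp}(g) \neq \mathrm{gfp}(g)$ and both the grounded extension and the true $\accp$-atoms come out empty.
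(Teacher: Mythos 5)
Your proof is correct, and it closes the argument by a genuinely different route than the paper, although the two share their computational core. Both rest on the observation that applying the operator $\mathbf{I} \mapsto mm(\Pi|_{\mathbf{I}})$ to an instance whose $\defp$-atoms encode a set $D$ yields accepted atoms $\args\setminus D$ and defeated atoms $\phi(D)$; the ``main obstacle'' you flag is exactly the computation the paper carries out. The difference is what happens next. The paper proves by a stage-matching induction that the even-indexed instances $\mathbf{I}_{2j+2}$ contain the $\accp$-atoms for precisely $\chff^{j+1}(\emptyset)$ and the $\defp$-atoms for $(\chff^{j+1}(\emptyset))^+$, so the limit $\mathbf{I}_*$ is read off directly with no further lemma, and the full well-founded model (including false atoms) is identified --- which the paper later reuses for its $\chff^{d}(\emptyset)$ approximation. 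You instead turn both sides into fixpoint computations of $g=\phi\circ\phi$: the grounded extension becomes $\args\setminus\mathrm{gfp}(g)$ (iterating downward from $\args$), the true $\accp$-atoms become $\args\setminus\phi(\mathrm{lfp}(g))$ (the program iterates upward from $\bot$), and you reconcile them via the exchange lemma $\phi(\mathrm{lfp}(g))=\mathrm{gfp}(g)$, whose squeeze proof is correct; this is essentially the alternating-fixpoint duality underlying well-founded semantics, and it buys modularity at the cost of an extra layer of lattice algebra that the paper's stage alignment avoids. One simplification available to you: since attacking sets are nonempty, $\phi(\args)=\emptyset$, hence $g^{j+1}(\args)=\phi^{2j+1}(\emptyset)$, so the upward and downward iterations interleave into a single chain and the exchange lemma could be bypassed entirely.
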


Intuitively, the preceding logic program computes the sets of defeated ($\defp$) and accepted ($\accp$) arguments starting from the original sets of arguments ($\mathtt{arg}$) 
and attacks ($\mathtt{att}_i$ encodes $i$-ary attacks).   
By adding 
rules to compute conflicts and populate the attack relations $\mathtt{att}_i$, 
we can show that the grounded extension of $F_{\Kmc, \succ}$ can be computed via logic programming. 
While stated for \dlliterhorn, the next theorem holds for any ontology (or constraint) language for which inconsistency can be characterized by a finite set of BCQs.

\begin{restatable}{theorem}{Wfhorn} \label{thm:wfgrnd}
For every \dlliterhorn\ TBox $\Tmc$, there exists a normal logic program $\Pi_\Tmc$ 
such that for every ABox $\Amc$, priority relation $\succ$ for $\Kmc=\tup{\Tmc, \Amc}$, and assertion $\alpha$, the following are equivalent:
\begin{itemize}
\item $\alpha$ belongs to the grounded extension of the PSETAF $F_{\Kmc, \succ}$
\item $\accp(id(\alpha))$ belongs to the well-founded model of $\Pi_\Tmc \cup \{\gamma \leftarrow \, \mid \gamma \in \Amc_{id}\}
\cup \{ \prefp(id(\alpha), id(\beta)) \leftarrow \, \mid \alpha \succ \beta\}$,
\end{itemize}
where $\Amc_{id}$ is obtained from $\Amc$ by adding an extra argument to every assertion $\beta$ containing a unique id, denoted $id(\beta)$. 
\end{restatable}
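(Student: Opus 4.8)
The plan is to combine the general construction for $k$-SETAFs from Theorem~\ref{lemma:accp} with a DL-Lite$^{\mathcal{H}}_{\mn{horn}}$-specific mechanism for computing conflicts from the ABox, and then to show that the reduction to the preference-adjusted attack relation $\defeat_\succ$ can be carried out inside the logic program. The first key observation is that for a fixed \dlliterhorn\ TBox $\Tmc$, inconsistency is characterized by a finite set of BCQs: there is a finite set of ``conflict patterns'' (obtained from the negative inclusions and their interaction with the positive inclusions) such that every conflict $C \in \conflicts{\Kmc}$ is an image of one of these patterns under a homomorphism into $\Amc$. Since conflicts have bounded size (bounded by the number of concept and role names in $\Tmc$, as noted in the \boundedconf\ discussion), $F_{\Kmc, \succ}$ is a $k$-SETAF for a constant $k$ depending only on $\Tmc$.

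\medskip\noindent\textbf{Construction of $\Pi_\Tmc$.} I would build $\Pi_\Tmc$ in three layers. First, a set of rules deriving an auxiliary predicate $\confp$ that flags, for each tuple of assertion-ids, that the corresponding assertions form a conflict. Concretely, for each minimal inconsistency pattern of $\Tmc$ I add a rule whose body is the conjunction of the (id-annotated) ABox atoms matching that pattern and whose head records the conflict together with the ids of its members. Here the id-annotation of $\Amc_{id}$ is essential: it lets the program refer to individual assertions as first-class objects, so that a conflict can be represented as a fixed-arity tuple of ids. Second, I populate the attack relations: for each conflict $C$ and each member $\alpha \in C$, the attack $(C \setminus \{\alpha\}, \alpha)$ must be added, but only after applying the preference reduction. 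This is where the $\prefp$ facts enter: following Definition~\ref{red-paf} generalized to PSETAFs, the attack $S \defeat_\succ \alpha$ survives iff $\alpha \not\succ \beta$ for every $\beta \in S$, i.e.\ iff $id(\alpha)$ is not $\succ$-above \emph{any} member of $S$. So the rule deriving $\mathtt{att}_i(id(\beta_1), \ldots, id(\beta_i), id(\alpha))$ from a conflict $\{\alpha, \beta_1, \ldots, \beta_i\}$ should fire only when none of $\prefp(id(\alpha), id(\beta_j))$ holds --- which I encode with negated $\prefp$ atoms in the rule body. Third, I append the generic grounded-extension rules of Theorem~\ref{lemma:accp} (the $\defp$ and $\accp$ clauses for each arity $1 \le i \le k$).

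\medskip\noindent\textbf{Correctness.} Having assembled $\Pi_\Tmc$, the equivalence follows by composing two facts. On one side, by construction the $\mathtt{att}_i$ facts derived in the well-founded model encode exactly the attack relation $\defeat_\succ$ of the SETAF corresponding to $F_{\Kmc, \succ}$: soundness and completeness of the conflict-pattern rules give precisely $\conflicts{\Kmc}$, and the negated $\prefp$ guards implement exactly the reduction $\defeat_\succ$. Crucially, since $\prefp$ is defined only by the input facts $\{\prefp(id(\alpha), id(\beta)) \leftarrow \mid \alpha \succ \beta\}$ and does not depend on $\accp$ or $\defp$, it is fully determined before the fixpoint computation for $\accp/\defp$ begins, so stratification is respected and the negation $\nnot\,\prefp$ behaves classically. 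On the other side, Theorem~\ref{lemma:accp} states that for this attack relation, $\accp(id(\alpha))$ is in the well-founded model iff $\alpha$ is in the grounded extension. Chaining these gives the desired biconditional.

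\medskip\noindent\textbf{The main obstacle} will be verifying that the conflict-pattern rules are both sound and complete for \dlliterhorn\ while remaining independent of $\Amc$ --- i.e.\ that a finite, TBox-only set of rules captures $\conflicts{\Kmc}$ exactly for every ABox. This requires establishing that \dlliterhorn\ inconsistency is characterized by finitely many BCQs (the hypothesis flagged before the theorem) and, more delicately, that each \emph{minimal} inconsistent subset is recognized as such, so that we populate attacks for genuine conflicts and not for supersets of conflicts. A secondary subtlety is ensuring the program's use of negation stays stratified: I must confirm that $\prefp$ is an extensional predicate (defined solely by input facts and the acyclic $\succ$) so that $\nnot\,\prefp$ in the attack rules does not create a dependency cycle with $\accp$, and that the only genuine non-stratified negation is the $\accp$/$\defp$ interaction handled by the well-founded semantics exactly as in Theorem~\ref{lemma:accp}.
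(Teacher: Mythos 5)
Your overall architecture matches the paper's: a finite BCQ characterization of \dlliterhorn\ inconsistency, id-annotated conflict rules, attack rules guarded by $\nnot\,\prefp$, and the generic $k$-SETAF program of Theorem~\ref{lemma:accp}. However, there are two genuine gaps, and they sit exactly where the paper's proof does its real work.

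First, the minimality issue you defer as ``the main obstacle'' cannot be handled by positive pattern-matching rules, even for minimal inconsistency patterns: a homomorphic image of such a pattern need not be a minimal inconsistent subset. For instance, with $\Tmc = \{A \sqcap \exists R \sqsubseteq \neg B,\ \exists R^- \sqsubseteq \neg B\}$, the non-redundant unsatisfiability query $A(x) \wedge R(x,y) \wedge B(x)$ has the match $\{A(a), R(a,a), B(a)\}$, which properly contains the conflict $\{R(a,a), B(a)\}$ and hence is not in $\conflicts{\Kmc}$. Spurious attacks coming from non-minimal inconsistent sets genuinely change the grounded extension: an assertion lying outside every conflict but inside a non-minimal match becomes ``attacked'' and drops out of $\chff(\emptyset)$, so the final answer is wrong. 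The paper enforces subset-minimality inside the program by adding negated $\confp_j$ atoms over all sub-tuples to each conflict rule; your sketch names the requirement but supplies no mechanism for it.

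Second, and more importantly, once those minimality negations are present your composition argument collapses. You claim that because $\prefp$ is extensional and the conflict layer does not depend on $\accp/\defp$, that layer is ``fully determined before the fixpoint computation for $\accp/\defp$ begins, so stratification is respected.'' The combined program is not stratified ($\accp$ and $\defp$ form a negative cycle), and the well-founded semantics does not evaluate a stratified subprogram first: the alternating fixpoint computes all predicates simultaneously, and in the early over-approximating stages the minimality negations are resolved incorrectly, producing spurious $\confp$ facts, hence spurious $\attackp$ and $\defp$ facts, which interfere with the derivation of $\accp$ facts. Whether the alternation self-corrects is a substantive question that you neither pose nor answer; the paper explicitly flags this danger and resolves it with Lemma~\ref{lem:combine}, a dedicated delay construction (the $\oddp/\nextp$ counter program, whose literals are added as guards to every argumentation rule) which guarantees that the argumentation rules cannot fire before the stratified conflict program has provably stabilized, after which Lemma~\ref{lemma:attackp} and Theorem~\ref{lemma:accp} can be chained. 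That lemma---or some modularity/splitting theorem for the well-founded semantics playing its role---is the missing ingredient without which your chain ``conflicts correct $\Rightarrow$ attacks correct $\Rightarrow$ Theorem~\ref{lemma:accp} applies'' does not go through.
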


As the well-founded semantics is implemented in logic programming systems like XSB \cite{DBLP:conf/lpnmr/RaoSSWF97}, the preceding theorem presents a method for implementing grounded semantics. Alternatively, the grounded extension can be naturally under-approximated by fixing $d>0$ and considering $\chff^d(\emptyset)$ rather than the least fixpoint of $\chff$. The logic program from Theorem \ref{thm:wfgrnd} can be modified to yield a non-recursive stratified program that computes $\Gamma^d_{F_{\Kmc, \succ}}(\emptyset)$, which in turn can be expressed as a first-order ($\sim$ SQL) query and evaluated using a relational database system.
%

\section{Related Work}
Our complexity results are similar to those obtained for prioritized databases with functional dependencies \cite{DBLP:journals/amai/StaworkoCM12,DBLP:conf/icdt/KimelfeldLP17,DBLP:conf/pods/LivshitsK17}. 
Compared to this work, we additionally consider IAR and brave semantics,  study the impact of 
transitivity on all reasoning tasks, and establish connections with argumentation. 

Inconsistency-tolerant semantics based on other kinds of preferred repairs have been investigated both in the database and DL or Datalog$^\pm$ contexts, often with a focus on repairs that have maximal cardinality or weight
 \cite{DBLP:conf/icdt/LopatenkoB07,DBLP:journals/kais/DuQS13,DBLP:conf/jelia/BagetBBCMPRT16,DBLP:conf/aaai/LukasiewiczMV19}. 
However, such global optimality criteria lead to a higher computational complexity (typically $\Delta^P_2[O(log\ n)]$-hard). 
By contrast, \citeauthor{DBLP:conf/aaai/BienvenuBG14} \shortcite{DBLP:conf/aaai/BienvenuBG14} show that for preferred repairs based on priority levels and set-inclusion (corresponding to optimal repairs in the score-structured case), BCQ entailment remains in the first level of the polynomial hierarchy for DL-Lite. 
Our results show that even if the priority relation over facts is not score-structured, and not even transitive, these computational properties are retained for Pareto- and completion-optimal repairs.

When $\succ$ is score-structured, we can compare the grounded extension and the intersection of optimal repairs 
to the proposals of \citeauthor{DBLP:conf/ijcai/BenferhatBT15} \shortcite{DBLP:conf/ijcai/BenferhatBT15} for selecting a single preferred \Tmc-consistent subset of the ABox. 
In particular, the \emph{non-defeated repair} is the union of the intersections of the repairs of $S_1, S_1\cup S_2, \dots, S_1\cup\dots \cup S_n$, where $S_1,\dots, S_n$ is a partition of $\Amc$ into priority levels. 
Since Elect coincides with non-defeated semantics for score-structured priorities \cite{DBLP:conf/ksem/BelabbesB19}, 
 the non-defeated repair is included in the grounded extension by Theorem \ref{thm:elect}; as the priority relation in Example \ref{ex:Elect} is score-structured, the inclusion may be strict. 
Another proposal 
is the \emph{prioritized inclusion-based non-defeated repair}, which coincides with the intersection of the optimal repairs.  
The grounded extension thus lies 
between the non-defeated and prioritized inclusion-based non-defeated repairs. 
Other proposals are either included in non-defeated, or are not sound approximations of the intersection of the optimal repairs. 
For partially preordered ABoxes, \citeauthor{DBLP:conf/ksem/BelabbesB19} \shortcite{DBLP:conf/ksem/BelabbesB19} propose to go beyond Elect by intersecting the optimal repairs w.r.t. the score-structured priorities  obtained by extending the original partial preorder into a total preorder.  
Interestingly, the obtained set of assertions corresponds exactly to the intersection of the completion-optimal repairs of the prioritized KB underpinned by the partially preordered KB (cf.\ appendix D).

Argumentation frameworks derived from inconsistent KBs have been considered before. 
Arguments are generally defined as pairs of a support (subset of the KB) and a conclusion (consequence of the support), with various attack relations (see \eg \cite{DBLP:journals/jar/AmgoudC02,DBLP:journals/ai/GorogiannisH11}). 
A series of papers starting with \cite{DBLP:conf/sum/CroitoruV13} 
links argumentation and inconsistency-tolerant querying of KBs,
where the support of the argument is a subset of the ABox, the conclusion is a conjunction of facts entailed from the support and $\Tmc$,
and 
 $\alpha$ attacks $\beta$ iff the conclusion of $\alpha$ is \Tmc-inconsistent with the support of~$\beta$ (in some papers, the argument consists of a whole derivation sequence, not just the support and conclusion). 
Stable and preferred extensions of such AFs correspond to repairs in the following sense: every such extension contains all those arguments whose supports are included in some particular repair. 
The work by \citeauthor{DBLP:conf/prima/CroitoruTV15} \shortcite{DBLP:conf/prima/CroitoruTV15} shows that it is possible to use a preference relation over facts to define notions of optimal extensions for AFs induced by inconsistent KBs in this fashion, and observe that such extensions correspond to optimal repairs. 
While touching on similar topics, 
an essential difference between this line of work and our own lies in the definition of the AF. 
Indeed, in the AFs of Croitoru et al.
, even if we group together arguments with the same supports, we would still have an exponential number of arguments. 
By contrast, our translation can be carried out in polynomial time under reasonable assumptions, enabling us to import tractability results from argumentation to OMQA. Moreover, since we use assertions as arguments, each priority relation yields a preference relation over arguments, 
enabling a transparent reduction to preference-based (SET)AFs. \citeauthor{DBLP:conf/prima/CroitoruTV15} do not discuss any connections to preference-based, nor set-based AFs, and 
%
%
%
%
our results on the equivalence between Pareto-optimal repairs and stable and preferred extensions of P(SET)AFs 
do not follow from their results. 

\section{Conclusion and Future Work}
We have explored the problem of how to repair and query inconsistent KBs 
while taking into account a priority relation over facts. 
By leveraging connections to database repairs and abstract argumentation,
we obtained a number of novel results for our setting, while at the same time 
contributing new results and research questions to these two areas.

After importing the notions of Pareto-, globally-, and completion-optimal repairs into the OMQA setting, 
our first contribution was a data complexity study that showed, unsurprisingly, that reasoning with optimal repairs is typically intractable and more challenging than for classical repairs. Nevertheless, there are several cases which are `only' \np / \conp, which suggests that it may be interesting to devise practical SAT-based procedures, as has been successfully done for some other forms of repair \cite{DBLP:conf/aaai/BienvenuBG14}. 
It would also be relevant to implement and experiment 
our proposed grounded semantics as well as the first-order approximations we suggested. Our complexity study could be expanded to 
include further ontology languages as well as the combined complexity measure. 

In order to provide an elegant translation of prioritized KBs to argumentation frameworks, we were naturally led to consider preference-based (SET)AFs. We believe that the new results we established in Section \ref{sec:arg}, in particular, Theorem \ref{psetaf-coh}, should be of interest to the argumentation community. Moreover, the connections we established between extensions of PSETAFs and repairs of prioritized KBs motivate a more detailed study of PSETAFs (a concrete open question is whether Theorem \ref{psetaf-coh} holds in the absence of transitivity). Furthermore, this correspondence could be leveraged to explore new notions of extension for (strongly) symmetric P(SET)AFs inspired by completion- or globally-optimal repairs, or to design benchmarks for (P)(SET)AFs via a translation from inconsistent KBs.

An important question that has not yet been satisfactorily answered in the database literature is which of the three forms of optimal repair is most natural, independently of their computational costs. While we do not claim to provide a definitive answer to this question, we believe that our result showing that Pareto-optimal repairs coincide with stable extensions (and often also preferred extensions) speaks to the interest of adopting Pareto-optimal repairs. We should emphasize that while phrased for KBs, the connection between Pareto-optimal repairs and stable / preferred extensions holds equally well for databases with denial constraints (such as functional dependencies). Moreover, the tractable grounded semantics we proposed can be applied not just to KBs but also to prioritized databases. 


\section*{Acknowledgements}
This work was supported by Camille Bourgaux's CNRS PEPS grant and the ANR AI Chair INTENDED.

\bibliographystyle{kr}
\bibliography{cqa-priority}

\begin{thebibliography}{}

\bibitem[\protect\citeauthoryear{Abiteboul, Hull, and
  Vianu}{1995}]{AbiteboulHV95}
Abiteboul, S.; Hull, R.; and Vianu, V.
\newblock 1995.
\newblock {\em Foundations of Databases}.
\newblock Addison-Wesley.

\bibitem[\protect\citeauthoryear{Amgoud and
  Cayrol}{2002}]{DBLP:journals/jar/AmgoudC02}
Amgoud, L., and Cayrol, C.
\newblock 2002.
\newblock Inferring from inconsistency in preference-based argumentation
  frameworks.
\newblock {\em J. Autom. Reasoning} 29(2):125--169.

\bibitem[\protect\citeauthoryear{Amgoud and
  Vesic}{2014}]{DBLP:journals/ijar/AmgoudV14}
Amgoud, L., and Vesic, S.
\newblock 2014.
\newblock Rich preference-based argumentation frameworks.
\newblock {\em Int. J. Approx. Reasoning} 55(2):585--606.

\bibitem[\protect\citeauthoryear{Artale \bgroup et al\mbox.\egroup
  }{2009}]{DBLP:journals/jair/ArtaleCKZ09}
Artale, A.; Calvanese, D.; Kontchakov, R.; and Zakharyaschev, M.
\newblock 2009.
\newblock The {DL-Lite} family and relations.
\newblock {\em J. Artif. Intell. Res.} 36:1--69.

\bibitem[\protect\citeauthoryear{Baader \bgroup et al\mbox.\egroup
  }{2017}]{DBLP:books/daglib/0041477}
Baader, F.; Horrocks, I.; Lutz, C.; and Sattler, U.
\newblock 2017.
\newblock {\em An Introduction to Description Logic}.
\newblock Cambridge University Press.

\bibitem[\protect\citeauthoryear{Baget \bgroup et al\mbox.\egroup
  }{2016}]{DBLP:conf/jelia/BagetBBCMPRT16}
Baget, J.; Benferhat, S.; Bouraoui, Z.; Croitoru, M.; Mugnier, M.; Papini, O.;
  Rocher, S.; and Tabia, K.
\newblock 2016.
\newblock Inconsistency-tolerant query answering: Rationality properties and
  computational complexity analysis.
\newblock In {\em {JELIA}}.

\bibitem[\protect\citeauthoryear{Belabbes and
  Benferhat}{2019}]{DBLP:conf/ksem/BelabbesB19}
Belabbes, S., and Benferhat, S.
\newblock 2019.
\newblock Inconsistency handling for partially preordered ontologies: Going
  beyond {Elect}.
\newblock In {\em {KSEM}}.

\bibitem[\protect\citeauthoryear{Belabbes, Benferhat, and
  Chomicki}{2019}]{DBLP:conf/lpnmr/BelabbesBC19}
Belabbes, S.; Benferhat, S.; and Chomicki, J.
\newblock 2019.
\newblock Elect: An inconsistency handling approach for partially preordered
  lightweight ontologies.
\newblock In {\em {LPNMR}}.

\bibitem[\protect\citeauthoryear{Benferhat, Bouraoui, and
  Tabia}{2015}]{DBLP:conf/ijcai/BenferhatBT15}
Benferhat, S.; Bouraoui, Z.; and Tabia, K.
\newblock 2015.
\newblock How to select one preferred assertional-based repair from
  inconsistent and prioritized {DL-Lite} knowledge bases?
\newblock In {\em {IJCAI}}.

\bibitem[\protect\citeauthoryear{Bienvenu and
  Bourgaux}{2016}]{DBLP:conf/rweb/BienvenuB16}
Bienvenu, M., and Bourgaux, C.
\newblock 2016.
\newblock Inconsistency-tolerant querying of description logic knowledge bases.
\newblock In {\em Reasoning Web}.

\bibitem[\protect\citeauthoryear{Bienvenu and
  Ortiz}{2015}]{DBLP:conf/rweb/BienvenuO15}
Bienvenu, M., and Ortiz, M.
\newblock 2015.
\newblock Ontology-mediated query answering with data-tractable description
  logics.
\newblock In {\em Reasoning Web}.

\bibitem[\protect\citeauthoryear{Bienvenu and
  Rosati}{2013}]{DBLP:conf/ijcai/BienvenuR13}
Bienvenu, M., and Rosati, R.
\newblock 2013.
\newblock Tractable approximations of consistent query answering for robust
  ontology-based data access.
\newblock In {\em {IJCAI}}.

\bibitem[\protect\citeauthoryear{Bienvenu, Bourgaux, and
  Goasdou{\'{e}}}{2014}]{DBLP:conf/aaai/BienvenuBG14}
Bienvenu, M.; Bourgaux, C.; and Goasdou{\'{e}}, F.
\newblock 2014.
\newblock Querying inconsistent description logic knowledge bases under
  preferred repair semantics.
\newblock In {\em {AAAI}}.

\bibitem[\protect\citeauthoryear{Bienvenu, Bourgaux, and
  Goasdou{\'{e}}}{2019}]{DBLP:journals/jair/BienvenuBG19}
Bienvenu, M.; Bourgaux, C.; and Goasdou{\'{e}}, F.
\newblock 2019.
\newblock Computing and explaining query answers over inconsistent {DL-Lite}
  knowledge bases.
\newblock {\em J. Artif. Intell. Res.} 64:563--644.

\bibitem[\protect\citeauthoryear{Bienvenu}{2019}]{DBLP:conf/dlog/Bienvenu19}
Bienvenu, M.
\newblock 2019.
\newblock Inconsistency handling in ontology-mediated query answering: {A}
  progress report.
\newblock In {\em {DL}}.

\bibitem[\protect\citeauthoryear{Boros \bgroup et al\mbox.\egroup
  }{2000}]{DBLP:journals/ppl/BorosEGK00}
Boros, E.; Elbassioni, K.~M.; Gurvich, V.; and Khachiyan, L.
\newblock 2000.
\newblock An efficient incremental algorithm for generating all maximal
  independent sets in hypergraphs of bounded dimension.
\newblock {\em Parallel Processing Letters} 10(4):253--266.

\bibitem[\protect\citeauthoryear{Bourgaux}{2016}]{DBLP:phd/hal/Bourgaux16}
Bourgaux, C.
\newblock 2016.
\newblock {\em Inconsistency Handling in Ontology-Mediated Query Answering.}
\newblock Ph.D. Dissertation, University of Paris-Saclay, France.

\bibitem[\protect\citeauthoryear{Calvanese \bgroup et al\mbox.\egroup
  }{2007}]{calvaneseetal:dllite}
Calvanese, D.; De~Giacomo, G.; Lembo, D.; Lenzerini, M.; and Rosati, R.
\newblock 2007.
\newblock Tractable reasoning and efficient query answering in description
  logics: The {DL-Lite} family.
\newblock {\em J. Autom. Reasoning} 39(3):385--429.

\bibitem[\protect\citeauthoryear{Coste{-}Marquis, Devred, and
  Marquis}{2005}]{DBLP:conf/ecsqaru/Coste-MarquisDM05}
Coste{-}Marquis, S.; Devred, C.; and Marquis, P.
\newblock 2005.
\newblock Symmetric argumentation frameworks.
\newblock In {\em {ECSQARU}}.

\bibitem[\protect\citeauthoryear{Croitoru and
  Vesic}{2013}]{DBLP:conf/sum/CroitoruV13}
Croitoru, M., and Vesic, S.
\newblock 2013.
\newblock What can argumentation do for inconsistent ontology query answering?
\newblock In {\em {SUM}}.

\bibitem[\protect\citeauthoryear{Croitoru, Thomopoulos, and
  Vesic}{2015}]{DBLP:conf/prima/CroitoruTV15}
Croitoru, M.; Thomopoulos, R.; and Vesic, S.
\newblock 2015.
\newblock Introducing preference-based argumentation to inconsistent
  ontological knowledge bases.
\newblock In {\em {PRIMA}}.

\bibitem[\protect\citeauthoryear{Diller \bgroup et al\mbox.\egroup
  }{2020}]{symm-setaf}
Diller, M.; {Keshavarzi Zafarghandi}, A.; Linsbichler, T.; and Woltran, S.
\newblock 2020.
\newblock Investigating subclasses of abstract dialectical frameworks.
\newblock {\em Argument \& Computation}.
\newblock To appear.

\bibitem[\protect\citeauthoryear{Du, Qi, and
  Shen}{2013}]{DBLP:journals/kais/DuQS13}
Du, J.; Qi, G.; and Shen, Y.
\newblock 2013.
\newblock Weight-based consistent query answering over inconsistent {SHIQ}
  knowledge bases.
\newblock {\em Knowl. Inf. Syst.} 34(2):335--371.

\bibitem[\protect\citeauthoryear{Dung}{1995}]{DBLP:journals/ai/Dung95}
Dung, P.~M.
\newblock 1995.
\newblock On the acceptability of arguments and its fundamental role in
  nonmonotonic reasoning, logic programming and n-person games.
\newblock {\em Artif. Intell.} 77(2):321--358.

\bibitem[\protect\citeauthoryear{Dvor{\'{a}}k and
  Dunne}{2017}]{DBLP:journals/flap/DvorakD17}
Dvor{\'{a}}k, W., and Dunne, P.~E.
\newblock 2017.
\newblock Computational problems in formal argumentation and their complexity.
\newblock {\em {FLAP}} 4(8).

\bibitem[\protect\citeauthoryear{Eiter and
  Gottlob}{1995}]{DBLP:journals/siamcomp/EiterG95}
Eiter, T., and Gottlob, G.
\newblock 1995.
\newblock Identifying the minimal transversals of a hypergraph and related
  problems.
\newblock {\em {SIAM} J. Comput.} 24(6):1278--1304.

\bibitem[\protect\citeauthoryear{Fagin \bgroup et al\mbox.\egroup
  }{2016}]{DBLP:journals/tods/FaginKRV16}
Fagin, R.; Kimelfeld, B.; Reiss, F.; and Vansummeren, S.
\newblock 2016.
\newblock Declarative cleaning of inconsistencies in information extraction.
\newblock {\em {ACM} Trans. Database Syst.} 41(1):6:1--6:44.

\bibitem[\protect\citeauthoryear{Fagin, Kimelfeld, and
  Kolaitis}{2015}]{DBLP:conf/pods/FaginKK15}
Fagin, R.; Kimelfeld, B.; and Kolaitis, P.~G.
\newblock 2015.
\newblock Dichotomies in the complexity of preferred repairs.
\newblock In {\em {PODS}}.

\bibitem[\protect\citeauthoryear{Flouris and
  Bikakis}{2019}]{DBLP:journals/ijar/FlourisB19}
Flouris, G., and Bikakis, A.
\newblock 2019.
\newblock A comprehensive study of argumentation frameworks with sets of
  attacking arguments.
\newblock {\em Int. J. Approx. Reasoning} 109:55--86.

\bibitem[\protect\citeauthoryear{Gorogiannis and
  Hunter}{2011}]{DBLP:journals/ai/GorogiannisH11}
Gorogiannis, N., and Hunter, A.
\newblock 2011.
\newblock Instantiating abstract argumentation with classical logic arguments:
  Postulates and properties.
\newblock {\em Artif. Intell.} 175(9-10):1479--1497.

\bibitem[\protect\citeauthoryear{Johnson, Papadimitriou, and
  Yannakakis}{1988}]{DBLP:journals/ipl/JohnsonP88}
Johnson, D.~S.; Papadimitriou, C.~H.; and Yannakakis, M.
\newblock 1988.
\newblock On generating all maximal independent sets.
\newblock {\em Inf. Process. Lett.} 27(3):119--123.

\bibitem[\protect\citeauthoryear{Kaci, van~der Torre, and
  Villata}{2018}]{DBLP:conf/comma/KaciTV18}
Kaci, S.; van~der Torre, L. W.~N.; and Villata, S.
\newblock 2018.
\newblock Preference in abstract argumentation.
\newblock In {\em {COMMA}}.

\bibitem[\protect\citeauthoryear{Kaci, van~der Torre, and
  Weydert}{2006}]{DBLP:conf/ecai/KaciTW06}
Kaci, S.; van~der Torre, L. W.~N.; and Weydert, E.
\newblock 2006.
\newblock Acyclic argumentation: Attack = conflict + preference.
\newblock In {\em {ECAI}}.

\bibitem[\protect\citeauthoryear{Kimelfeld, Livshits, and
  Peterfreund}{2017}]{DBLP:conf/icdt/KimelfeldLP17}
Kimelfeld, B.; Livshits, E.; and Peterfreund, L.
\newblock 2017.
\newblock Detecting ambiguity in prioritized database repairing.
\newblock In {\em {ICDT}}.

\bibitem[\protect\citeauthoryear{Lembo \bgroup et al\mbox.\egroup
  }{2010}]{LemboLRRS10}
Lembo, D.; Lenzerini, M.; Rosati, R.; Ruzzi, M.; and Savo, D.~F.
\newblock 2010.
\newblock Inconsistency-tolerant semantics for description logics.
\newblock In {\em {RR}}.

\bibitem[\protect\citeauthoryear{Lembo \bgroup et al\mbox.\egroup
  }{2015}]{DBLP:journals/ws/LemboLRRS15}
Lembo, D.; Lenzerini, M.; Rosati, R.; Ruzzi, M.; and Savo, D.~F.
\newblock 2015.
\newblock Inconsistency-tolerant query answering in ontology-based data access.
\newblock {\em Journal Web Sem.} 33:3--29.

\bibitem[\protect\citeauthoryear{Livshits and
  Kimelfeld}{2017}]{DBLP:conf/pods/LivshitsK17}
Livshits, E., and Kimelfeld, B.
\newblock 2017.
\newblock Counting and enumerating (preferred) database repairs.
\newblock In {\em {PODS}}.

\bibitem[\protect\citeauthoryear{Lopatenko and
  Bertossi}{2007}]{DBLP:conf/icdt/LopatenkoB07}
Lopatenko, A., and Bertossi, L.~E.
\newblock 2007.
\newblock Complexity of consistent query answering in databases under
  cardinality-based and incremental repair semantics.
\newblock In {\em {ICDT}}.

\bibitem[\protect\citeauthoryear{Lukasiewicz, Malizia, and
  Vaicenavicius}{2019}]{DBLP:conf/aaai/LukasiewiczMV19}
Lukasiewicz, T.; Malizia, E.; and Vaicenavicius, A.
\newblock 2019.
\newblock Complexity of inconsistency-tolerant query answering in {Datalog}+/-
  under cardinality-based repairs.
\newblock In {\em {AAAI}}.

\bibitem[\protect\citeauthoryear{Martinez \bgroup et al\mbox.\egroup
  }{2014}]{DBLP:journals/ijar/MartinezPPSS14}
Martinez, M.~V.; Parisi, F.; Pugliese, A.; Simari, G.~I.; and Subrahmanian,
  V.~S.
\newblock 2014.
\newblock Policy-based inconsistency management in relational databases.
\newblock {\em Int. J. Approx. Reason.} 55(2):501--528.

\bibitem[\protect\citeauthoryear{Motik \bgroup et al\mbox.\egroup
  }{2012}]{profiles}
Motik, B.; Cuenca~Grau, B.; Horrocks, I.; Wu, Z.; Fokoue, A.; and Lutz, C.
\newblock 2012.
\newblock {OWL} 2 {W}eb {O}ntology {L}anguage profiles.
\newblock {W3C} {R}ecommendation.
\newblock Available at \url{http://www.w3.org/TR/owl2-profiles/}.

\bibitem[\protect\citeauthoryear{Nielsen and
  Parsons}{2006}]{DBLP:conf/argmas/NielsenP06}
Nielsen, S.~H., and Parsons, S.
\newblock 2006.
\newblock A generalization of {Dung}'s abstract framework for argumentation:
  Arguing with sets of attacking arguments.
\newblock In {\em {ArgMAS}}.

\bibitem[\protect\citeauthoryear{{OWL Working Group}}{2009}]{owl2}
{OWL Working Group}, W.
\newblock 2009.
\newblock {\em OWL~2 Web Ontology Language: Document overview}.
\newblock W3C Recommendation.
\newblock Available at \url{https://www.w3.org/TR/owl2-overview/}.

\bibitem[\protect\citeauthoryear{Poggi \bgroup et al\mbox.\egroup
  }{2008}]{DBLP:journals/jods/PoggiLCGLR08}
Poggi, A.; Lembo, D.; Calvanese, D.; {De Giacomo}, G.; Lenzerini, M.; and
  Rosati, R.
\newblock 2008.
\newblock Linking data to ontologies.
\newblock {\em Journal of Data Semantics} 10:133--173.

\bibitem[\protect\citeauthoryear{Rao \bgroup et al\mbox.\egroup
  }{1997}]{DBLP:conf/lpnmr/RaoSSWF97}
Rao, P.; Sagonas, K.; Swift, T.; Warren, D.~S.; and Freire, J.
\newblock 1997.
\newblock {XSB:} {A} system for effciently computing {WFS}.
\newblock In {\em LPNMR}.

\bibitem[\protect\citeauthoryear{Staworko, Chomicki, and
  Marcinkowski}{2012}]{DBLP:journals/amai/StaworkoCM12}
Staworko, S.; Chomicki, J.; and Marcinkowski, J.
\newblock 2012.
\newblock Prioritized repairing and consistent query answering in relational
  databases.
\newblock {\em Ann. Math. Artif. Intell.} 64(2-3):209--246.

\bibitem[\protect\citeauthoryear{Tanon, Bourgaux, and
  Suchanek}{2019}]{DBLP:conf/www/TanonBS19}
Tanon, T.~P.; Bourgaux, C.; and Suchanek, F.~M.
\newblock 2019.
\newblock Learning how to correct a knowledge base from the edit history.
\newblock In {\em {WWW}}.

\bibitem[\protect\citeauthoryear{Trivela, Stoilos, and
  Vassalos}{2018}]{DBLP:conf/aaai/TrivelaSV18}
Trivela, D.; Stoilos, G.; and Vassalos, V.
\newblock 2018.
\newblock A framework and positive results for {IAR}-answering.
\newblock In {\em {AAAI}}.

\bibitem[\protect\citeauthoryear{Tsalapati \bgroup et al\mbox.\egroup
  }{2016}]{DBLP:conf/ijcai/TsalapatiSSK16}
Tsalapati, E.; Stoilos, G.; Stamou, G.~B.; and Koletsos, G.
\newblock 2016.
\newblock Efficient query answering over expressive inconsistent description
  logics.
\newblock In {\em {IJCAI}}.

\bibitem[\protect\citeauthoryear{Xiao \bgroup et al\mbox.\egroup
  }{2018}]{DBLP:conf/ijcai/XiaoCKLPRZ18}
Xiao, G.; Calvanese, D.; Kontchakov, R.; Lembo, D.; Poggi, A.; Rosati, R.; and
  Zakharyaschev, M.
\newblock 2018.
\newblock Ontology-based data access: {A} survey.
\newblock In {\em {IJCAI}}.

\end{thebibliography}

\appendix
\clearpage
\onecolumn

\section{Proofs for Section \ref{sec:complexity}}

\PropComplexityRepairCheckEntailment*
\begin{proof}
For repair checking, the proofs are similar to the database case. Verifying that $\Amc'$ belongs to $\reps{\Kmc}$ can be done in polynomial time. Then if $\Amc'\in\reps{\Kmc}$, to show that $\Amc'$ is not globally-optimal, it suffices to guess a global improvement of $\Amc'$ and verify it in polynomial time; to show that $\Amc'$ is not Pareto-optimal, it suffices to check for each assertion $\beta\notin\Amc'$ whether $\Amc''=\Amc'\cup\{\beta\}\setminus\{\alpha\mid \beta\succ\alpha\}$ is $\Tmc$-consistent, which implies that it is a Pareto improvement of $\Amc'$. 
Finally, to check that $\Amc'$ is completion-optimal, we can use the greedy procedure described in Section \ref{sec:reps} by restricting the choice of facts to the intersection of $\Amc'$ and the not-yet-considered facts that are maximal \wrt $\succ$. 

For AR, IAR and brave entailment, we use the following standard algorithms to decide (non-)entailment. The upper complexity bounds then follow from the complexity of repair checking and BCQ entailment. 
To decide $\Kmc_\succ\not\armodels{X} q$ (resp.\ $\Kmc_\succ\bravemodels{X} q$), guess $\Amc'\in\xreps{\Kmc_\succ}$ such that $\tup{\Tmc,\Amc'}\not\models q$ (resp. $\tup{\Tmc,\Amc'}\models q$).  
To decide $\Kmc_\succ\not\iarmodels{X} q$, guess a set of assertions $\Bmc=\{\alpha_1,\dots,\alpha_n\}\subseteq\Amc$ such that $\tup{\Tmc,\Amc\setminus\Bmc}\not\models q$ together with $\Amc'_1,\dots,\Amc'_n$ in $\xreps{\Kmc_\succ}$ such that $\alpha_i\notin\Amc'_i$ for $1\leq i\leq n$. Since $\bigcap_{\Amc'\in\xreps{\Kmc_\succ}}\Amc'\subseteq \Amc\setminus\Bmc$, this implies that $\Kmc_\succ\not\iarmodels{X}q$.%
\end{proof}

Given a propositional formula $\varphi$ and a partial valuation $\nu$ of its variables, $\nu(\varphi)$ denotes the formula obtained by replacing each variable $x$ in the domain of $\nu$ by $\nu(x)$ in~$\varphi$. 

\ThmHardnessGlobal*
\begin{proof}
\mypar{\conp-hardness of globally-optimal repair checking}
We show \conp-hardness of repair checking for globally-optimal repairs by reduction from UNSAT. 
Let $\Phi=c_1\wedge \dots\wedge c_k$ be a conjunction of clauses $c_1,\dots, c_k$ over variables $x_1,\dots, x_n$. 
We define a KB $\Kmc=\tup{\Tmc,\Amc}$ and a priority relation $\succ$ over $\Amc$ as follows.
\begin{align*}
\Tmc=&\{\exists P^-\sqsubseteq\neg\exists N^-,
\exists P\sqsubseteq\neg \exists\mi{Unsat}^-,
\exists N\sqsubseteq\neg \exists\mi{Unsat}^-,\exists\mi{Block}_\exists^-\sqsubseteq\neg \exists P^-,
\exists\mi{Block}_\exists^-\sqsubseteq\neg \exists N^-\}
\\
&\cup\{
\mi{Exist}\sqsubseteq\neg\exists\mi{Block}_\exists, 
\mi{Exist}\sqsubseteq\neg\exists\mi{Unsat}
\}\\
\Amc=&\{\mi{Unsat}(a,c_i)\mid 1\leq i\leq k\}\cup\{P(c_i,x_j)\mid x_j\in c_i\}\cup\{N(c_i,x_j)\mid \neg x_j\in c_i\}\\
&\cup\{\mi{Block}_\exists(a,x_j)\mid 1\leq j\leq n\}\cup\{\mi{Exist}(a)\}
\end{align*}
\begin{align*}
\mi{Exist}(a)&\succ\mi{Block}_\exists(a,x_j)\\
\mi{Block}_\exists(a,x_j)&\succ P(c_i,x_j)\\
\mi{Block}_\exists(a,x_j)&\succ N(c_i,x_j)\\
P(c_i,x_j)&\succ \mi{Unsat}(a,c_i)\\
N(c_i,x_j)&\succ \mi{Unsat}(a,c_i)
\end{align*}

We show that $\Phi$ is satisfiable iff $\Amc'=\{\mi{Unsat}(a,c_i)\mid 1\leq i\leq k\}\cup\{\mi{Block}_\exists(a,x_j)\mid 1\leq j\leq n\}\notin\greps{\Kmc_\succ}$. It is easy to verify that $\Amc'\in\reps{\Kmc}$, \ie is an inclusion-maximal $\Tmc$-consistent subset of $\Amc$.

\noindent($\Rightarrow$) Assume that $\Phi$ is satisfiable and let $\nu$ be a valuation of $x_1,\dots,x_n$ that satisfies $\Phi$. Let $$\Amc_\nu=\{P(c_i,x_j)\mid x_j\in c_i, \nu(x_j)=\true\}\cup\{N(c_i,x_j)\mid \neg x_j\in c_i, \nu(x_j)=\false\}\cup\{\mi{Exist}(a)\}.$$ 
Since $\nu$ satisfies $\Phi$, then $\nu$ satisfies every clause $c_i$ so every $c_i$ has an outgoing $P$ or $N$ edge in $\Amc_\nu$. It is easy to see that $\Amc_\nu$ is $\Tmc$-consistent (since no $x_j$ has both a $P$ and a $N$ incoming edge). Moreover, since $\mi{Exist}(a)\succ\mi{Block}_\exists(a,x_j)$, $P(c_i,x_j)\succ \mi{Unsat}(a,c_i)$, and $N(c_i,x_j)\succ \mi{Unsat}(a,c_i)$, it follows that $\Amc_\nu$ is a global improvement of $\Amc'$, so $\Amc'\notin\greps{\Kmc_\succ}$.\smallskip\\
\noindent($\Leftarrow$) Assume that $\Amc'\notin\greps{\Kmc_\succ}$. There exists a global improvement $\Amc_\nu$ of $\Amc'$, \ie a $\Tmc$-consistent $\Amc_\nu\subseteq \Amc$, $\Amc_\nu\neq\Amc'$,  
such that for every $\alpha\in\Amc'\setminus\Amc_\nu$, there exists $\beta\in\Amc_\nu\setminus\Amc'$ such that $\beta\succ\alpha$. 
Assume for a contradiction that for every $x_j$, $\mi{Block}_\exists(a,x_j)\notin \Amc'\setminus\Amc_\nu$. Since $\Amc_\nu\neq\Amc'$ and $\Amc'$ is a maximal \Tmc-consistent subset of $\Amc$, then $\Amc'\not\subseteq\Amc_\nu$, so there exists $\alpha\in\Amc'\setminus\Amc_\nu$. Thus there exists $c_i$ such that $\mi{Unsat}(a,c_i)\in\Amc'\setminus\Amc_\nu$. 
By construction of $\succ$, there must then be some $P(c_i,x_j)$ or $N(c_i,x_j)$ in $\Amc_\nu$. 
However, since we assumed $\mi{Block}_\exists(a,x_j)\notin \Amc'\setminus\Amc_\nu$, then $\Amc_\nu$ is $\Tmc$-inconsistent. 
It follows that there exists $x_j$ such that $\mi{Block}_\exists(a,x_j)\in \Amc'\setminus\Amc_\nu$. 
By construction of $\succ$, it follows that $\mi{Exist}(a)\in\Amc_\nu$. 
Since $\Amc_\nu$ is $\Tmc$-consistent, there is thus no $\mi{Block}_\exists(a,x_j)$ and no $\mi{Unsat}(a,c_i)$ in $\Amc_\nu$.  
Hence for every $i$, $\mi{Unsat}(a,c_i)\in\Amc'\setminus\Amc_\nu$, so there must be some $P(c_i,x_j)$ or $N(c_i,x_j)$ in $\Amc_\nu$. 
Let $\nu$ be the valuation of $x_1,\dots, x_n$ such that $\nu(x_j)=\true$ iff $P(c_i,x_j)\in\Amc_\nu$. 
Since $\Amc_\nu$ is $\Tmc$-consistent, then every $x_j$ has only $P$ or $N$ incoming edges. 
It follows that if $P(c_i,x_j)\in\Amc_\nu$, then $x_j\in c_i$ and $\nu(x_j)=\true$ so $\nu(c_i)=\true$, and if $N(c_i,x_j)\in\Amc_\nu$, then $\neg x_j\in c_i$ and $\nu(x_j)=\false$ so $\nu(c_i)=\true$. 
Hence $\nu$ satisfies~$\Phi$.

\mypar{\piptwo-hardness of G-AR and G-IAR IQ entailment} 
We show \piptwo-hardness of G-AR and G-IAR IQ entailment by reduction from 2QBF. 
Let $\Psi=\forall x_1\dots x_n\exists x_{n+1}\dots x_{n+m} \Phi$ where $\Phi=c_1\wedge \dots\wedge c_k$ is a conjunction of clauses over variables $x_1,\dots, x_n, x_{n+1},\dots, x_{n+m}$. We assume that every variable occurs in both positive and negative literals (this is w.l.o.g. since $\Psi'$ obtained by replacing $\Phi$ by $\Phi'=c_1\wedge \dots\wedge c_{k}\wedge (x_1\vee\neg x_1)\wedge\dots\wedge (x_{m+n}\vee\neg x_{m+n})$ is valid iff $\Psi$ is valid). We define a KB $\Kmc=\tup{\Tmc,\Amc}$ and a priority relation $\succ$ over $\Amc$ as follows.
\begin{align*}
\Tmc=&\{\exists P^-\sqsubseteq\neg\exists N^-,
\exists P\sqsubseteq\neg \exists\mi{Unsat}^-,
\exists N\sqsubseteq\neg \exists\mi{Unsat}^-\}\cup\{\exists\mi{Block}_Q^-\sqsubseteq\neg \exists P^-,
\exists\mi{Block}_Q^-\sqsubseteq\neg \exists N^-\mid Q\in\{\exists,\forall\} \}\\&
\cup\{
\mi{Exist}\sqsubseteq\neg\exists\mi{Block}_\exists, 
\mi{Exist}\sqsubseteq\neg\exists\mi{Unsat},
\mi{Univ}\sqsubseteq\neg\exists\mi{Block}_\forall,
\mi{Exist}\sqsubseteq\neg\mi{Univ}, \mi{Univ}\sqsubseteq \neg\mi{NoUniv}
\}\\
&\cup\{\exists\mi{Block}_\exists\sqsubseteq\neg\exists\mi{Block}_\forall\}\\
\Amc=&\{\mi{Unsat}(a,c_i)\mid 1\leq i\leq k\}\cup\{P(c_i,x_j)\mid x_j\in c_i\}\cup\{N(c_i,x_j)\mid \neg x_j\in c_i\}\\
&\cup\{\mi{Block}_\forall(a,x_j)\mid 1\leq j\leq n\}\cup\{\mi{Block}_\exists(a,x_j)\mid n+1\leq j\leq n+m\}\cup\{\mi{Exist}(a),\mi{Univ}(a), \mi{NoUniv}(a)\}
\end{align*}
\begin{align*}
\mi{Block}_{\forall}(a,x_j) &\succ \mi{Block}_{\exists}(a, x_{j'})\\
\mi{Block}_{Q}(a,x_j) &\succ P(c_i, x_j)&Q\in\{\exists,\forall\}\\
\mi{Block}_{Q}(a,x_j) &\succ N(c_i, x_j)&Q\in\{\exists,\forall\}\\
P(c_i,x_j) &\succ \mi{Unsat}(a,c_i)\\
N(c_i,x_j) &\succ \mi{Unsat}(a,c_i)\\
\mi{Exist}(a)&\succ\mi{Univ}(a)
\end{align*}
We show that $\Kmc_\succ\armodels{G}\mi{NoUniv}(a)$ (resp. $\Kmc_\succ\iarmodels{G}\mi{NoUniv}(a)$) iff $\Psi$ is valid. 

\noindent($\Rightarrow$) Assume that $\Psi$ is not valid: there exists a valuation $\nu_\forall$ of $x_1,\dots,x_n$ such that for every valuation $\nu_\exists$ of $x_{n+1},\dots,x_{n+m}$, $\nu_\forall(\nu_\exists(\Phi))$ evaluates to false. 
Let
\begin{align*}
\Amc_\forall=&
\{\mi{Univ}(a)\}\cup\{\mi{Block}_\exists(a,x_j)\mid n+1\leq j\leq n+m\}\\&
\cup\{P(c_i,x_j)\mid 1\leq j\leq n, x_j\in c_i, \nu_\forall(x_j)=\true\}\\&
\cup\{N(c_i,x_j)\mid 1\leq j\leq n, \neg x_j\in c_i, \nu_\forall(x_j)=\false\}
\\&
\cup\{\mi{Unsat}(a,c_i)\mid \nu_\forall(c_i)\text{ does not evaluate to }\true\}.
\end{align*} 
It is easy to check that $\Amc_\forall$ is a repair. 
In particular, every $x_j$ has only incoming $P$ or $N$ edges in $\Amc_\forall$ and every $c_i$ has an incoming $\mi{Unsat}$ edge only if it has no outgoing $P$ or $N$ edges, so $\Amc_\forall$ is $\Tmc$-consistent. Moreover, it is not possible to add an assertion from $\Amc\setminus\Amc_\forall$ while staying $\Tmc$-consistent. In particular, the $P$ and $N$ assertions that involve some $x_j$ with $j>n$ are in conflict with $\mi{Block}_\exists(a,x_j)\in\Amc_\forall$. We show that $\Amc_\forall\in\greps{\Kmc_\succ}$, which implies $\Kmc_\succ\not\armodels{G}\mi{NoUniv}(a)$ (and thus $\Kmc_\succ\not\iarmodels{G}\mi{NoUniv}(a)$).

Assume for a contradiction that there exists a $\Tmc$-consistent $\Amc'\subseteq\Amc$, $\Amc'\neq\Amc_\forall$, such that for every $\alpha\in\Amc_\forall\setminus\Amc'$, there exists $\beta\in\Amc'\setminus\Amc_\forall$ such that $\beta\succ\alpha$. 
If $\mi{Exist}(a)\notin\Amc'$, then $\Amc'$ contains $\mi{Univ}(a)$ by  construction of $\succ$. Since $\mi{Univ}(a)$ is inconsistent with the $\mi{Block}_\forall(a,x_j)$, then (i) for every $j\leq n$, $\mi{Block}_\forall(a,x_j)\notin\Amc'$ and (ii) $\Amc'$ contains all the $\mi{Block}_\exists(a,x_j)$ by  construction of $\succ$. It follows from (ii) that for every $j>n$, $x_j$ has no incoming $P$ or $N$ edge in $\Amc'$, and from (i) that for every $j\leq n$, $\Amc'$ contains the same $P(c_i,x_j)$ and $N(c_i,x_j)$ as $\Amc_\forall$  by construction of $\succ$. Hence, by construction of $\succ$, $\Amc'$ contains the same $\mi{Unsat}(a,c_i)$ as $\Amc_\forall$. It follows that $\Amc'=\Amc_\forall$, which contradicts our assumptions on $\Amc'$. Hence $\mi{Exist}(a)\in\Amc'$. 
Since $\mi{Exist}(a)\in\Amc'$, then $\mi{Univ}(a)\notin\Amc'$, for every $j>n$, $\mi{Block}_\exists(a,x_j)\notin\Amc'$, and for every $i$, $\mi{Unsat}(a,c_i)\notin\Amc'$. 
By construction of $\succ$, for each $\mi{Unsat}(a,c_i)\in\Amc_\forall$, there must then be some $P(c_i,x_j)$ or $N(c_i,x_j)$ in $\Amc'\setminus\Amc_\forall$. 
Moreover, since $\Psi$ is not valid, one of these $P(c_i,x_j)$ or $N(c_i,x_j)$ in $\Amc'\setminus\Amc_\forall$ is such that $j\leq n$.
Indeed, we otherwise define the valuation $\nu_\exists$ of $x_{n+1},\dots,x_{n+m}$ by $\nu_\exists(x_j)=\true$ iff $x_j$ has an incoming $P$ edge in $\Amc'$. It is easy to see that  $\nu_\exists$ satisfies all clauses of $\Phi$ that were not already satisfied by $\nu_\forall$, so that $\nu_\exists(\nu_\forall(\Phi))$ evaluates to true.  
Assume w.l.o.g. that there is $j\leq n$ such that $P(c_i,x_j)\in\Amc'\setminus\Amc_\forall$. Since $P(c_i,x_j)\notin\Amc_\forall$, $\nu_\forall(x_j)=\false$. 
Since we assumed that both $x_j$ and $\neg x_j$ occur in $\Phi$, it follows that $N(c_{i'},x_j)\in\Amc_\forall$. 
Hence, since $\Amc'$ is \Tmc-consistent, $N(c_{i'},x_j)\in\Amc_\forall\setminus\Amc'$ so there must be $\beta\in\Amc'$ such that $\beta\succ N(c_{i'},x_j)$. However, every $\beta\in\Amc$ such that $\beta\succ N(c_{i'},x_j)$ is inconsistent with $P(c_i,x_j)\in\Amc'$. 
We obtain a contradiction so there does not exist any global improvement of $\Amc_\forall$, and $\Amc_\forall\in\greps{\Kmc_\succ}$. \smallskip\\
\noindent($\Leftarrow$) In the other direction, assume that $\Psi$ is valid: for every valuation $\nu_\forall$ of $x_1,\dots,x_n$, there exists a valuation $\nu_\exists$ of $x_{n+1},\dots,x_{n+m}$ such that $\nu_\forall(\nu_\exists(\Phi))$ evaluates to true. 
Assume for a contradiction that there exists $\Amc'\in\greps{\Kmc_\succ}$ such that $\mi{NoUniv}(a)\notin\Amc'$, \ie $\mi{Univ}(a)\in\Amc'$ by maximality of $\Amc'$. 
By \Tmc-consistency of $\Amc'$, $\mi{Exist}(a)\notin\Amc'$ and there is no $\mi{Block}_{\forall}(a,x_j)$ in $\Amc'$. Since $\mi{Block}_{\exists}(a,x_j) \succ P(c_i, x_j)$ and $\mi{Block}_{\exists}(a,x_j)\succ N(c_i, x_j)$, then $\{\mi{Block}_{\exists}(a,x_j)\mid n+1\leq j\leq n+m\}\subseteq\Amc'$. Finally, since $P(c_i,x_j) \succ \mi{Unsat}(a,c_i)$ and $N(c_i,x_j) \succ \mi{Unsat}(a,c_i)$, $\Amc'$ contains a maximal \Tmc-consistent subset of the $P(c_i,x_j)$ and $N(c_i,x_j)$ with $j\leq n$, so 
there exists a valuation $\nu_\forall$ of $x_1,\dots,x_n$ such that 
\begin{align*}
\Amc'=&\{\mi{Univ}(a)\}\cup\{\mi{Block}_{\exists}(a,x_j)\mid n+1\leq j\leq n+m\}\\
&\cup\{P(c_i,x_j)\mid j\leq n, P(c_i,x_j)\in\Amc, \nu_\forall(x_j)=\true\}\\
&\cup\{N(c_i,x_j)\mid j\leq n, N(c_i,x_j)\in\Amc, \nu_\forall(x_j)=\false\}\\
&\cup\{\mi{Unsat}(a,c_i)\mid \nu_\forall(c_i)\text{ does not evaluate to true}\}.
\end{align*}

Since $\Psi$ is valid, there exists a valuation $\nu_\exists$ of $x_{n+1},\dots,x_{n+m}$ such that $\nu_\forall(\nu_\exists(\Phi))$ evaluates to true. 
Let 
\begin{align*}
\Amc_\nu=&\{\mi{Exist}(a)\}\cup\{\mi{Block}_{\forall}(a,x_j)\mid1\leq j\leq n\}\cup\\&\{P(c_i,x_j)\mid j>n,P(c_i,x_j)\in\Amc,\nu_\exists(x_j)=\true \}\cup\\&\{N(c_i,x_j)\mid j>n,N(c_i,x_j)\in\Amc,\nu_\exists(x_j)=\false \}. 
\end{align*}
Each $c_i$ such that $\mi{Unsat}(a,c_i)\in\Amc'$ has a $P$ or $N$ outgoing edge in $\Amc_\nu$ because $\nu_\forall(\nu_\exists(\Phi))$ evaluates to true and $\mi{Unsat}(a,c_i)\in\Amc'$ only if $\nu_\forall(c_i)$ does not evaluate to true, so that $\nu_\exists(c_i)$ evaluates to true. 
Moreover, $\mi{Exist}(a)\succ\mi{Univ}(a)$, $\mi{Block}_{\forall}(a,x_j)\succ\mi{Block}_{\exists}(a,x_{j'})$, $\mi{Block}_{\forall}(a,x_j) \succ P(c_i, x_j)$, $\mi{Block}_{\forall}(a,x_j) \succ N(c_i, x_j)$, $P(c_i,x_j)\succ\mi{Unsat}(a,c_i)$, and $N(c_i,x_j)\succ\mi{Unsat}(a,c_i)$. 
It follows that $\Amc_\nu$ is a global improvement of $\Amc'$. 
We conclude that every globally-optimal repair contains $\mi{NoUniv}(a)$, \ie $\Kmc_\succ\iarmodels{G} \mi{NoUniv}(a)$.

\mypar{\sigmaptwo-hardness of G-brave IQ entailment} We use the above reduction to show that $\Kmc_\succ\bravemodels{G}\mi{Univ}(a)$ iff $\Psi$ is not valid. 

\noindent($\Rightarrow$) If $\Psi$ is valid, the globally-optimal repairs contain $\mi{NoUniv}(a)$ so do not contain $\mi{Univ}(a)$ and  $\Kmc_\succ\not\bravemodels{G}\mi{Univ}(a)$.
 
\noindent($\Leftarrow$) If $\Psi$ is not valid, $\Amc_\forall\in\greps{\Kmc_\succ}$ and $\mi{Univ}(a)\in\Amc_\forall$ so $\Kmc_\succ\bravemodels{G}\mi{Univ}(a)$. 
\end{proof}

\ThmUniqueness*
\begin{proof}
\mypar{Upper bounds}
To show that $|\xreps{\Kmc_\succ}|\neq 1$, guess $\Amc'\neq\Amc''$ and check that $\Amc'\in\xreps{\Kmc_\succ}$ and $\Amc''\in\xreps{\Kmc_\succ}$. For DLs with polynomial time consistency checking, repair checking is in \ptime for $X=P$ and in \conp for $X=G$ by Theorem~\ref{Prop:complexityRepairCheckEntailment}.

\mypar{\conp-hardness of uniqueness for Pareto-optimal repairs} 
We show \conp-hardness of uniqueness for Pareto-optimal repairs by reduction from UNSAT. 
Let $\Phi=c_1\wedge \dots\wedge c_k$ be a conjunction of clauses $c_1,\dots, c_k$ over variables $x_1,\dots, x_n$. 
We define a KB $\Kmc=\tup{\Tmc,\Amc}$ and a priority relation $\succ$ over $\Amc$ as follows.
\begin{align*}
\Tmc=&\{\exists P^-\sqsubseteq\neg\exists N^-,
\exists P\sqsubseteq\neg \exists\mi{Unsat}^-,
\exists N\sqsubseteq\neg \exists\mi{Unsat}^-,\exists\mi{Block}_\exists^-\sqsubseteq\neg \exists P^-,
\exists\mi{Block}_\exists^-\sqsubseteq\neg \exists N^-
\}
\\
&\cup\{
\mi{Exist}\sqsubseteq\neg\exists\mi{Block}_\exists, 
\mi{Exist}\sqsubseteq\neg\exists\mi{Unsat},
\mi{Exist}\sqsubseteq\neg\mi{NoExist}
\}\\
\Amc=&\{\mi{Unsat}(a,c_i)\mid 1\leq i\leq k\}\cup\{P(c_i,x_j)\mid x_j\in c_i\}\cup\{N(c_i,x_j)\mid \neg x_j\in c_i\}
\\
&\cup\{\mi{Block}_\exists(a,x_j)\mid 1\leq j\leq n\}
\cup\{\mi{Exist}(a),\mi{NoExist}(a)\}
\end{align*}
\begin{align*}
\mi{Unsat}(a,c_i)\succ& \mi{Exist}(a)\\
\mi{Exist}(a)\succ&\mi{Block}_\exists(a,x_j)\\
\mi{Block}_\exists(a,x_j)\succ& P(c_i,x_j)\\
\mi{Block}_\exists(a,x_j)\succ& N(c_i,x_j)
\end{align*}
Let $\Amc'=\{\mi{NoExist}(a)\}\cup\{\mi{Block}_\exists(a,x_j)\mid 1\leq j\leq n\}\cup\{\mi{Unsat}(a,c_i)\mid 1\leq i\leq k\}$. It is easy to check that $\Amc'\in\preps{\Kmc_\succ}$: $\Amc'$ is $\Tmc$-consistent, and it is not possible to add any assertion $\alpha\in\Amc\setminus\Amc'$ to $\Amc'$ while staying $\Tmc$-consistent without removing an assertion $\beta\in\Amc'$ such that $\alpha\not\succ\beta$. 
We show that $\Phi$ is unsatisfiable iff $|\preps{\Kmc_\succ}|=1$, \ie $\Amc'$ is the unique Pareto-optimal repair. \smallskip\\
\noindent($\Leftarrow$) Assume that $\Phi$ is satisfiable: there exists a valuation $\nu$ that satisfies $\Phi$.
Let $\Amc_\nu=\{P(c_i,x_j)\mid x_j\in c_i, \nu(x_j)=\true\}\cup\{N(c_i,x_j)\mid \neg x_j\in c_i, \nu(x_j)=\false\}
\cup\{\mi{Exist}(a)\}$. 
 It is easy to see that $\Amc_\nu$ is $\Tmc$-consistent since every $x_j$ has only incoming $P$ or $N$ edges. 
 We show that $\Amc_\nu\in\preps{\Kmc_\succ}$. 
It is not possible to add $\mi{NoExist}(a)$ without removing $\mi{Exist}(a)$ and $\mi{NoExist}(a)\not\succ\mi{Exist}(a)$. 
Since $\nu$ satisfies $\Phi$, every $c_i$ has an outgoing $P$ or $N$ edge in $\Amc_\nu$ so it is not possible to add an assertion of the form $\mi{Unsat}(a,c_i)$ without removing an assertion of the form $P(c_i,x_j)$ or $N(c_i,x_j)$ and $\mi{Unsat}(a,c_i)\not\succ P(c_i,x_j)$, $\mi{Unsat}(a,c_i)\not\succ N(c_i,x_j)$. 
It is not possible to add a $\mi{Block}_\exists(a,x_j)$ assertion without removing $\mi{Exist}(a)$ and  $\mi{Block}_\exists(a,x_j)\not\succ\mi{Exist}(a)$. 
Finally, it is not possible to add $P(c_i,x_j)$ without removing some $N(c_{i'},x_j)$ and $P(c_i,x_j)\not\succ N(c_{i'},x_j)$. Similarly it is not possible to add $N(c_i,x_j)$. 
It follows that $\Amc_\nu\in\preps{\Kmc_\succ}$ and $|\preps{\Kmc_\succ}|\geq 2$.\smallskip\\
\noindent($\Rightarrow$) Assume that there exists a Pareto-optimal repair $\Amc''\neq\Amc'$. 
Assume for a contradiction that $\mi{Exist}(a)\notin\Amc''$. 
By maximality, $\mi{NoExist}(a)\in\Amc''$. 
Moreover, $\Amc''$ contains no $P$ or $N$ 
assertions: otherwise, all $P$ or $N$ 
edges incoming in some $x_j$ could be replaced by $\mi{Block}_\exists(a,x_j)$ to obtain a Pareto-improvement of $\Amc''$. 
 By maximality, $\Amc''$ thus contains all $\mi{Block}_\exists$ and $\mi{Unsat}$ assertions. We obtain that $\Amc''=\Amc'$, which contradicts our assumption. 
It follows that $\mi{Exist}(a)\in\Amc''$. 
Hence $\mi{NoExist}(a)\notin\Amc''$ and there is no $\mi{Block}_\exists$ or $\mi{Unsat}$ assertions in $\Amc''$. Thus the remaining assertions of $\Amc''$ form 
 a maximal subset of the $P$ and $N$ assertions such that no $x_j$ has both $P$ and $N$ incoming edges. 
Let $\nu$ be the valuation of $x_1,\dots, x_n$ such that $\nu(x_j)=\true$ iff $P(x_j)\in\Amc''$.  
We show that $\nu$ satisfies $\Phi$. Otherwise, there would be a $c_i$ without $P$ or $N$ outgoing edge and $(\Amc''\setminus\{\mi{Exist}(a)\})\cup\{\mi{Unsat}(a,c_i)\}$ would be a Pareto improvement of $\Amc''$ since $\mi{Unsat}(a,c_i)\succ \mi{Exist}(a)$. It follows that $\Phi$ is satisfiable. 

\mypar{\piptwo-hardness of uniqueness for globally-optimal repairs} 
We show \piptwo-hardness of uniqueness for globally-optimal repairs by reduction from 2QBF. Let $\Psi=\forall x_1\dots x_n\exists x_{n+1}\dots x_{n+m} \Phi$ where $\Phi=c_1\wedge \dots\wedge c_k$ is a conjunction of clauses over variables $x_1,\dots, x_n, x_{n+1},\dots, x_{n+m}$. 
We define a KB $\Kmc=\tup{\Tmc,\Amc}$ and a priority relation $\succ$ over $\Amc$ as follows.
\begin{align*}
\Tmc=&\{
\exists P^-\sqsubseteq\neg\exists N^-,
\exists P\sqsubseteq\neg \exists\mi{Unsat}^-,
\exists N\sqsubseteq\neg \exists\mi{Unsat}^-\}\\
&\cup\{\exists P_\forall^-\sqsubseteq\neg\exists N_\forall^-,
\exists P_\forall^-\sqsubseteq\neg\exists N^-,
\exists P^-\sqsubseteq\neg\exists N_\forall^-\}
\\&\cup
\{\exists P_\forall\sqsubseteq\neg \exists P,
\exists  P_\forall\sqsubseteq\neg \exists N,
\exists  N_\forall\sqsubseteq\neg \exists P,
\exists  N_\forall\sqsubseteq\neg \exists N\}\\
&\cup\{\exists\mi{Block}_Q^-\sqsubseteq\neg \exists P^-,
\exists\mi{Block}_Q^-\sqsubseteq\neg \exists N^-\mid Q\in\{\exists,
\mn{all}\} \}\\
&\cup\{\exists\mi{Block}_Q^-\sqsubseteq\neg \exists P_\forall^-,
\exists\mi{Block}_Q^-\sqsubseteq\neg \exists N_\forall^-\mid Q\in\{\forall,
\mn{all}\} \}\\
&\cup\{\exists\mi{Block}_\exists\sqsubseteq\neg\exists\mi{Block}_\forall, \exists\mi{Block}_\forall\sqsubseteq\neg\exists\mi{Block}_\mn{all},\exists\mi{Block}_\mn{all}\sqsubseteq\neg\exists\mi{Block}_\exists \}\\
&\cup\{\exists\mi{Block}_Q\sqsubseteq\neg \mi{Exist}\mid Q\in\{\exists,
\mn{all}\} \}\\
&\cup\{\exists\mi{Block}_Q\sqsubseteq\neg \mi{Univ}, \exists\mi{Block}_Q\sqsubseteq\neg \exists\mi{Unsat}\mid Q\in\{\forall,
\mn{all}\} \}\\
&\cup\{\exists\mi{Block}_Q\sqsubseteq\neg \mi{NoExist}\mid Q\in\{\exists,
\forall\} \}\\
&\cup\{
\mi{Exist}\sqsubseteq\neg\mi{Univ},\mi{Univ}\sqsubseteq\neg\mi{NoExist},\mi{NoExist}\sqsubseteq\neg\mi{Exist}
\}\\
&\cup\{
\mi{NoExist}\sqsubseteq\neg\exists\mi{Unsat}, 
\mi{Exist}\sqsubseteq\neg\exists\mi{Unsat}
\}
\end{align*}
\begin{align*}
\Amc=&\{\mi{Unsat}(a,c_i)\mid 1\leq i\leq k\}\cup\{P(c_i,x_j)\mid x_j\in c_i\}\cup\{N(c_i,x_j)\mid \neg x_j\in c_i\}\\
&\cup\{P_\forall(c_i,x_j)\mid x_j\in c_i, j\leq n\}\cup\{N_\forall(c_i,x_j)\mid \neg x_j\in c_i, j\leq n\}\\
&\cup\{\mi{Block}_\forall(a,x_j),\mi{Block}_\exists(a,x_j),\mi{Block}_\mn{all}(a,x_j)\mid 1\leq j\leq n+m\}\\
&\cup\{\mi{NoExist}(a),\mi{Exist}(a),\mi{Univ}(a)\}
\end{align*}
\begin{align*}
\mi{NoExist}(a)&\succ\mi{Exist}(a)\\
\mi{Exist}(a)&\succ\mi{Univ}(a)\\
\mi{Block}_\mn{all}(a,x_j)&\succ\mi{Block}_\forall(a,x_{j'})\\
\mi{Block}_\forall(a,x_j)&\succ\mi{Block}_\exists(a,x_{j'})\\
\mi{Block}_Q(a,x_j)&\succ P(c_i,x_j) &Q\in\{\mn{all},\exists\}\\
\mi{Block}_Q(a,x_j)&\succ N(c_i,x_j)&Q\in\{\mn{all},\exists\}\\
P(c_i,x_j)&\succ \mi{Unsat}(a,c_i)\\
N(c_i,x_j)&\succ \mi{Unsat}(a,c_i)\\
P(c_i,x_j)&\succ P_\forall(c_i,x_j)\\
N(c_i,x_j)&\succ N_\forall(c_i,x_j)
\end{align*}
Let $\Amc'=\{\mi{NoExist}(a)\}\cup\{\mi{Block}_\mn{all}(a,x_j)\mid 1\leq j\leq n+m\}$. It is easy to check that $\Amc'\in\greps{\Kmc_\succ}$: $\Amc'$ is a maximal $\Tmc$-consistent subset of $\Amc$ and there is no assertion preferred to $\mi{NoExist}(a)$ or $\mi{Block}_\mn{all}(a,x_j)$. 
We show that $\Psi$ is valid iff $|\greps{\Kmc_\succ}|=1$, \ie $\Amc'$ is the unique globally-optimal repair. \medskip\\
\noindent($\Leftarrow$) Assume that $\Psi$ is not valid: there exists a valuation $\nu_\forall$ of $x_1,\dots,x_n$ such that for every valuation $\nu$ of $x_{1},\dots,x_{n+m}$ that extends $\nu_\forall$, $\nu(\Phi)$ evaluates to false. 
Let
\begin{align*}
\Amc_\forall=&
\{\mi{Univ}(a)\}\cup\{\mi{Block}_\exists(a,x_j)\mid 1\leq j\leq n+m\}\\&
\cup\{P_\forall(c_i,x_j)\mid 1\leq j\leq n, x_j\in c_i, \nu_\forall(x_j)=\true\}\\&
\cup\{N_\forall(c_i,x_j)\mid 1\leq j\leq n, \neg x_j\in c_i, \nu_\forall(x_j)=\false\}
\\&
\cup\{\mi{Unsat}(a,c_i)\mid 1\leq i\leq k\}.
\end{align*} 
It is easy to check that $\Amc_\forall$ is a repair. 
In particular, every $x_j$ has only incoming $P_\forall$ or $N_\forall$ edges in $\Amc_\forall$, so $\Amc_\forall$ is $\Tmc$-consistent. Moreover, it is not possible to add an assertion from $\Amc\setminus\Amc_\forall$ while staying $\Tmc$-consistent. In particular, the $P$ and $N$ assertions are in conflict with the $\mi{Block}_\exists$ assertions. We show that $\Amc_\forall\in\greps{\Kmc_\succ}$, which implies $|\greps{\Kmc_\succ}|\geq 2$.\smallskip\\
Assume for a contradiction that there exists a global improvement $\Amc''$ of $\Amc_\forall$, \ie a $\Tmc$-consistent $\Amc''\subseteq\Amc$, $\Amc''\neq\Amc_\forall$, such that for every $\alpha\in\Amc_\forall\setminus\Amc''$, there exists $\beta\in\Amc''\setminus\Amc_\forall$ such that $\beta\succ\alpha$. 
If $\mi{Exist}(a)\notin\Amc''$, then $\Amc''$ contains $\mi{Univ}(a)$ by  construction of $\succ$. 
Hence $\Amc''$ does not contain any $\mi{Block}_\forall(a,x_j)$, and by  construction of $\succ$ it follows that $\Amc''$ contains all $\mi{Block}_\exists(a,x_j)$. 
Thus there is no $P$ or $N$ assertion in $\Amc''$. Hence by construction of $\succ$, $\Amc''$ contains the same $P_\forall$, $N_\forall$, and $\mi{Unsat}$ assertions as $\Amc_\forall$, so $\Amc''=\Amc_\forall$, which contradicts our assumption on $\Amc''$. It follows that $\mi{Exist}(a)\in\Amc''$. 
Hence $\mi{Unsat}(a,c_i)\notin\Amc''$ for every $i$. By construction of $\succ$, each $c_i$ has thus an outgoing $P$ or $N$ edge in $\Amc''$. 
It follows that $\Amc''$ does not contain any  $P_\forall$ or $N_\forall$ assertion. Hence, by construction of $\succ$, for each $P_\forall(c_i,x_j)\in\Amc_\forall$ (resp. $N_\forall(c_i,x_j)\in\Amc_\forall$), $P(c_i,x_j)\in\Amc''$ (resp. $N(c_i,x_j)\in\Amc''$). 
Moreover, since $\Amc''$ is $\Tmc$-consistent, every $x_j$ has only $P$ or $N$ incoming edges. Let $\nu$ be the valuation of $x_{1},\dots,x_{n+m}$ defined by $\nu(x_j)=\true$ iff $x_j$ has an incoming $P$ edge in $\Amc''$. It is easy to see that  $\nu$ satisfies all clauses of $\Phi$ and that $\nu$ extends $\nu_\forall$. 
This contradicts our assumption on $\nu_\forall$, so we conclude that $\Amc_\forall\in\greps{\Kmc_\succ}$. \smallskip\\
\noindent($\Rightarrow$) In the other direction, assume that $\Psi$ is valid: for every valuation $\nu_\forall$ of $x_1,\dots,x_n$, there exists a valuation $\nu$ of $x_{1},\dots,x_{n+m}$ that extends $\nu_\forall$ and such that $\nu(\Phi)$ evaluates to true. 
Suppose for a contradiction that there is a globally-optimal repair $\Amc''\neq\Amc'$. First note that every repair of $\Amc$ contains either $\mi{Exist}(a)$, or $\mi{NoExist}(a)$, or $\mi{Univ}(a)$. Indeed, it is not possible to contradict these three assertions with a $\Tmc$-consistent subset of $\Amc$. 
We consider these three cases below.

If $\mi{Exist}(a)\in\Amc''$, then $\mi{NoExist}(a)\notin\Amc''$, $\mi{Univ}(a)\notin\Amc''$, and there is no $\mi{Block}_\exists$, $\mi{Block}_\mn{all}$ or $\mi{Unsat}$ assertions in $\Amc''$. 
There is then no $P_\forall$ or $N_\forall$ assertions in $\Amc''$: otherwise, $\Amc''\setminus(\{P_\forall(c_i,x_j)\mid P_\forall(c_i,x_j)\in\Amc''\}\cup\{N_\forall(c_i,x_j)\mid N_\forall(c_i,x_j)\in\Amc''\})\cup\{P(c_i,x_j)\mid P_\forall(c_i,x_j)\in\Amc''\}\cup\{N(c_i,x_j)\mid N_\forall(c_i,x_j)\in\Amc''\}$ is a global improvement of $\Amc''$. 
Hence, by maximality, $\Amc''$ contains all $\mi{Block}_\forall$ assertions and a maximal $\Tmc$-consistent subset of the $P$ and $N$ assertions. Hence, since $\mi{NoExist}(a)\succ\mi{Exist}(a)$, $\mi{Block}_\mn{all}(a,x_j)\succ\mi{Block}_\forall(a,x_{j'})$, $\mi{Block}_\mn{all}(a,x_j)\succ P(c_i,x_j)$ and  $\mi{Block}_\mn{all}(a,x_j)\succ N(c_i,x_j)$, $\Amc'$ is a global improvement of $\Amc''$.

If $\mi{NoExist}(a)\in\Amc''$, then $\mi{Exist}(a)\notin\Amc''$, $\mi{Univ}(a)\notin\Amc''$, and there is no $\mi{Block}_\exists$, $\mi{Block}_\forall$ or $\mi{Unsat}$ assertions in $\Amc''$. Moreover, 
there is no $P_\forall$ or $N_\forall$ assertions in $\Amc''$ (otherwise, $\Amc''$ can be improved by replacing the $P_\forall$ or $N_\forall$ assertions by the corresponding $P$ or $N$ assertions), and there is no $P$ or $N$ assertions in $\Amc''$ (otherwise, $\Amc''$ can be improved by replacing the $P$ or $N$ assertions by $\mi{Block}_\mn{all}$ assertions). 
It follows that $\Amc''=\Amc'$ by maximality of repairs.

If $\mi{Univ}(a)\in\Amc''$, then $\mi{NoExist}(a)\notin\Amc''$, $\mi{Exist}(a)\notin\Amc''$, and there is no $\mi{Block}_\forall$ or $\mi{Block}_\mn{all}$ assertions in $\Amc''$. Since $\mi{Block}_\exists(a,x_j)\succ P(c_i,x_j)$ and 
$\mi{Block}_\exists(a,x_j)\succ N(c_i,x_j)$, $\Amc''$ does not contain any $P$ or $N$ assertions (otherwise $\Amc''$ could be improved by replacing them by some $\mi{Block}_\exists$ assertions). 
Let $\nu_\forall$ be the valuation of $x_1,\dots,x_n$ such that $\nu_\forall(x_j)=\true$ iff $P_\forall(c_i,x_j)\in\Amc''$. 
Let $\nu$ be a valuation that extends $\nu_\forall$ and satisfies $\Phi$. Consider 
\begin{align*}
\Amc_\nu=&\{\mi{Exist}(a)\}\cup\{\mi{Block}_\forall(a,x_j)\mid 1\leq j\leq n+m\}
\\&
\cup\{P(c_i,x_j)\mid x_j\in c_i, \nu(x_j)=\true\}
\\&\cup\{N(c_i,x_j)\mid \neg x_j\in c_i, \nu(x_j)=\false\}.
\end{align*}
We show that $\Amc_\nu$ is a global improvement of $\Amc''$. Indeed, (i) $\mi{Exist}(a)\succ\mi{Univ}(a)$, (ii) $\mi{Block}_\forall(a,x_j)\succ\mi{Block}_\exists(a,x_{j'})$, (iii) since $\nu$ satisfies $\Phi$ every $c_i$ has an outgoing $P$ or $N$ edge in $\Amc_\nu$ and $P(c_i,x_j)\succ \mi{Unsat}(a,c_i)$, 
$N(c_i,x_j)\succ \mi{Unsat}(a,c_i)$, and (iv) since $\nu$ extends $\nu_\forall$, for every $P_\forall$ or $N_\forall$ assertion in $\Amc''$, the corresponding $P$ or $N$ assertion is in $\Amc_\nu$ and 
$P(c_i,x_j)\succ P_\forall(c_i,x_j)$, 
$N(c_i,x_j)\succ N_\forall(c_i,x_j)$. 

All cases lead to a contradiction, so there is no $\Amc''\neq \Amc'$ such that $\Amc''\in\greps{\Kmc_\succ}$, \ie $|\greps{\Kmc_\succ}|=1$.
\end{proof}

\ThmTransitivity*
\begin{proof}
\citeauthor{DBLP:conf/icdt/KimelfeldLP17} \shortcite{DBLP:conf/icdt/KimelfeldLP17} have shown that when the priority relation is transitive, deciding uniqueness for globally-optimal repairs can be done in polynomial time when the conflict hypergraph is given. 
This result is applicable to all DLs with bounded conflicts, for which the conflict hypergraph is computable in polynomial time by checking the consistency of all subsets of size bounded by the maximal conflict size. 

To show that the lower bounds of the other problems remain the same when $\succ$ is transitive, some proofs have to be adapted. 

\mypar{Hardness of enumeration of globally-optimal repairs} 
The fact that enumeration of globally-optimal repairs is not in \totalp is not a corollary of the complexity of uniqueness as in the case where $\succ$ is not transitive. We show that the associated decision problem of deciding whether a given set of ABox subsets is exactly $\greps{\Kmc_\succ}$ is \conp-hard. 
We show this by reduction from UNSAT. Let $\Phi=c_1\wedge \dots\wedge c_k$ be a conjunction of clauses over variables $x_1,\dots, x_m$. 
We define a KB $\Kmc=\tup{\Tmc,\Amc}$ and a transitive priority relation $\succ$ over $\Amc$ as follows.

\begin{align*}
\Tmc=&\{\exists P^-\sqsubseteq\neg\exists N^-,
\exists P\sqsubseteq\neg \exists\mi{Unsat}^-,
\exists N\sqsubseteq\neg \exists\mi{Unsat}^-\}\\
&\cup\{\exists \mi{Block}_Q^-\sqsubseteq\neg\exists P^-,\exists \mi{Block}_
Q^-\sqsubseteq\neg\exists N^-,\mi{Exist}\sqsubseteq\neg \exists \mi{Block}_Q
\mid Q\in\{\exists,\mn{all}\}
\}\\
&\cup\{\exists \mi{Block}_\exists\sqsubseteq\neg \exists \mi{Block}_\mn{all},
 \mi{All}\sqsubseteq\neg\exists\mi{Block}_\mn{all},
\mi{Exist}\sqsubseteq\neg\mi{All},\mi{All}\sqsubseteq\neg\mi{Sat}, 
\mi{Sat}\sqsubseteq\neg\exists\mi{Unsat}
\}\\
\Amc=&\{\mi{Unsat}(a,c_i)\mid 1\leq i\leq k\}\cup\{P(c_i,x_j)\mid x_j\in c_i\}\cup\{N(c_i,x_j)\mid \neg x_j\in c_i\}\\
&\cup\{\mi{Block}_{\exists}(a,x_j),\mi{Block}_{\mn{all}}(a,x_j)\mid 1\leq j\leq m\}\cup\{\mi{Exist}(a),\mi{Sat}(a),\mi{All}(a)\}
\end{align*}
\begin{align*}
\mi{Sat}(a)&\succ\mi{All}(a)\\
\mi{Exist}(a)&\succ\mi{Block}_{\exists}(a,x_j) \\
\mi{Block}_{\mn{all}}(a,x_j) &\succ \mi{Exist}(a)\\
\mi{Block}_{\mn{all}}(a,x_j) &\succ \mi{Block}_{\exists}(a, x_{j'})\\
\mi{Block}_Q(a,x_j) &\succ P(c_i, x_j) &Q\in\{\mn{all},\exists\}\\
\mi{Block}_Q(a,x_j) &\succ N(c_i, x_j)&Q\in\{\mn{all},\exists\}\\
P(c_i,x_j) &\succ \mi{Unsat}(a,c_i)\\
N(c_i,x_j) &\succ \mi{Unsat}(a,c_i)
\end{align*}

We show that $\Phi$ is unsatisfiable iff $\greps{\Kmc_\succ}=\{\Amc_1,\Amc_2,\Amc_3\}$ where 
\begin{align*}
\Amc_1=&\{\mi{Sat}(a)\}\cup\{\mi{Block}_{\mn{all}}(a,x_j)\mid 1\leq j\leq m\},\\ 
\Amc_2=&\{\mi{Block}_{\mn{all}}(a,x_j)\mid 1\leq j\leq m\}\cup\{\mi{Unsat}(a,c_i)\mid 1\leq i\leq k\},\\ 
\Amc_3=&\{\mi{All}(a)\}\cup\{\mi{Block}_{\exists}(a,x_j)\mid 1\leq j\leq m\}\cup\{\mi{Unsat}(a,c_i)\mid 1\leq i\leq k\}.
\end{align*}
\noindent($\Leftarrow$) Assume that $\Phi$ is satisfiable. Then $\Amc_3\notin\greps{\Kmc_\succ}$. 
Indeed, let $\nu$ be a valuation that satisfies $\Phi$ and 
\begin{align*}
\Amc_\nu=&\{\mi{Sat}(a),\mi{Exist}(a)\}\cup\{P(c_i,x_j)\mid P(c_i,x_j)\in\Amc,\nu(x_j)=\true \}\cup\{N(c_i,x_j)\mid N(c_i,x_j)\in\Amc,\nu(x_j)=\false \}. 
\end{align*}
Since $\mi{Sat}(a)\succ\mi{All}(a)$, $\mi{Exist}(a)\succ\mi{Block}_{\exists}(a,x_j)$, $P(c_i,x_j)\succ\mi{Unsat}(a,c_i)$, $N(c_i,x_j)\succ\mi{Unsat}(a,c_i)$, and each $c_i$ has a $P$ or $N$ outgoing edge in $\Amc_\nu$ because $\nu$ satisfies $\Phi$, $\Amc_\nu$ is a global improvement of $\Amc_3$. 
It follows that $\greps{\Kmc_\succ}\neq\{\Amc_1,\Amc_2,\Amc_3\}$.\smallskip\\
\noindent($\Rightarrow$) Assume that $\Phi$ is unsatisfiable. 
We first show that $\{\Amc_1,\Amc_2,\Amc_3\}\subseteq \greps{\Kmc_\succ}$. 
It is easy to check that $\Amc_1,\Amc_2,\Amc_3$ are repairs of $\Kmc_\succ$. We show that they are globally-optimal repairs.
\begin{itemize}
\item $\Amc_1\in\greps{\Kmc_\succ}$ because none of its assertions is dominated by any other. 

\item $\Amc_2\in\greps{\Kmc_\succ}$ because its $\mi{Block}_{\mn{all}}$ assertions are not dominated by any others and the only assertions of greater priority than the $\mi{Unsat}$ assertions are the $P$, $N$ which are in conflict with the $\mi{Block}_{\mn{all}}$ assertions. 

\item For $\Amc_3$, assume for a contradiction that there exists a $\Tmc$-consistent $\Amc''\subseteq\Amc$, $\Amc''\neq\Amc_3$ such that for every $\alpha\in\Amc_3\setminus\Amc''$, there exists $\beta\in\Amc''\setminus\Amc_3$ such that $\beta\succ\alpha$. 
First note that since $\Phi$ is unsatisfiable, it is not possible that $\Amc''$ contains a $\Tmc$-consistent set of $P$ and $N$ assertions such that every $c_i$ has an outgoing $P$ or $N$ edge in $\Amc''$. By construction of $\succ$, it thus follows that $\Amc''$ contains at least one $\mi{Unsat}$ assertion. 
If $\mi{All}(a)\notin \Amc''$, then by construction of $\succ$, $\mi{Sat}(a)\in \Amc''$, which contradicts the $\Tmc$-consistency of $\Amc''$ since $\mi{Sat}\sqsubseteq \neg \exists\mi{Unsat}\in\Tmc$. 
It follows that $\mi{All}(a)\in \Amc''$. By $\Tmc$-consistency of $\Amc''$, $\mi{Exist}(a)$ and the  $\mi{Block}_{\mn{all}}$ assertions are thus not in $\Amc''$. Since they are the only assertions dominating the $\mi{Block}_\exists$ assertions, it follows that $\mi{Block}_\exists$ assertions are in $\Amc'$, so that there is no $P$ or $N$ assertions in $\Amc'$, and $\Amc''=\Amc_3$.
\end{itemize}
We now show that $\greps{\Kmc_\succ}\subseteq\{\Amc_1,\Amc_2,\Amc_3\}$. Let $\Amc'\in\greps{\Kmc_\succ}$. 
\begin{itemize}
\item Assume that $\mi{Sat}(a)\in\Amc'$. Then $\Amc'$ does not contain $\mi{All}(a)$, or any $\mi{Unsat}(a,c_i)$. Assume for a contradiction that 
$\Amc'\neq\Amc_1$. 
Since $\mi{Block}_{\mn{all}}(a,x_j) \succ P(c_i, x_j)$, 
$\mi{Block}_{\mn{all}}(a,x_j) \succ N(c_i, x_j)$, $\mi{Block}_{\mn{all}}(a,x_j) \succ \mi{Block}_\exists(a,x_{j'})$, and $\mi{Block}_{\mn{all}}(a,x_j) \succ \mi{Exist}(a)$, 
then $\Amc_1$ is a global improvement of $\Amc'$. 
It follows that $\Amc'=\Amc_1$.

\item Assume that $\mi{All}(a)\in\Amc'$. Then $\Amc'$ does not contain $\mi{Exist}(a)$, $\mi{Sat}(a)$, or any $\mi{Block}_{\mn{all}}(a, x_j)$. 
Assume for a contradiction that $\Amc'\neq\Amc_3$. 
If $\Amc'\subseteq\Amc_3$, it is clear that $\Amc'$ is not a repair, so $\Amc'\not\subseteq\Amc_3$. Thus $\Amc'$ must contain some $P$ or $N$ assertions. But in this case we obtain a global improvement of $\Amc'$ by replacing the $P$ and $N$ assertions by the conflicting $\mi{Block}_{\exists}$ assertions. 

\item In the case where $\mi{All}(a)\notin\Amc'$ and $\mi{Sat}(a)\notin\Amc'$, we show that $\Amc'=\Amc_2$. 
First note that since $\mi{All}(a)\notin\Amc'$ and the $\mi{Block}_{\mn{all}}$ assertions are of greater priority than any other assertions they are in conflict with, $\Amc'$ contains all the $\mi{Block}_{\mn{all}}$ assertions. 
By \Tmc-consistency and maximality of $\Amc'$, it follows that $\Amc'=\Amc_2$.
\end{itemize}
We have thus shown that $\greps{\Kmc_\succ}=\{\Amc_1,\Amc_2,\Amc_3\}$.

\mypar{Hardness of uniqueness and enumeration for Pareto-optimal repairs}
We adapt the reduction given in the proof of Theorem~\ref{ThmUniqueness}: since in this reduction, imposing transitivity on $\succ$ would lead to $\mi{Unsat}(a,c_i)\succ P(c_i,x_j)$ and $\mi{Unsat}(a,c_i)\succ N(c_i,x_j)$, we prevent this by creating two versions (identified by indexes 1 and 2) of all concepts and roles and letting $\mi{Unsat}_1$ in conflict with $P_2$, $N_2$ instead of $P_1$, $N_1$, and similarly for $\mi{Unsat}_2$. 
Given a conjunction of clauses $\Phi=c_1\wedge \dots\wedge c_k$ over variables $x_1,\dots, x_n$, we construct $\Kmc=\tup{\Tmc,\Amc}$ and $\succ$ as follows.
\begin{align*}
\Tmc=&\{\exists P_l^-\sqsubseteq\neg\exists N_l^-,\exists\mi{Block}_l^-\sqsubseteq\neg \exists P_l^-,
\exists\mi{Block}_l^-\sqsubseteq\neg \exists N_l^-
\mid l\in\{1,2\}\}\\
&\cup\{
\exists P_l\sqsubseteq\neg \exists\mi{Unsat}_r^-,
\exists N_l\sqsubseteq\neg \exists\mi{Unsat}_r^-\mid (l,r)\in\{(1,2),(2,1)\}\}
\\
&\cup\{
\mi{Exist}_l\sqsubseteq\neg\exists\mi{Block}_l, 
\mi{Exist}_l\sqsubseteq\neg\exists\mi{Unsat}_l,
\mi{Exist}_l\sqsubseteq\neg\mi{NoExist}_l
\mid l\in\{1,2\}\}\\
\Amc=&\{\mi{Unsat}_l(a,c_i)\mid 1\leq i\leq k, l\in\{1,2\}\}\cup\{P_l(c_i,x_j)\mid x_j\in c_i, l\in\{1,2\}\}\cup\{N_l(c_i,x_j)\mid \neg x_j\in c_i, l\in\{1,2\}\}
\\
&\cup\{\mi{Block}_l(a,x_j)\mid 1\leq j\leq n, l\in\{1,2\}\}
\cup\{\mi{Exist}_l(a),\mi{NoExist}_l(a)\mid l\in\{1,2\} \}
\end{align*}
\begin{align*}
\mi{Unsat}_l(a,c_i)\succ& \mi{Exist}_l(a) & \,l\in\{1,2\}\\
\mi{Exist}_l(a)\succ&\mi{Block}_l(a,x_j) & l\in\{1,2\}\\
\mi{Block}_l(a,x_j)\succ& P_l(c_i,x_j) &l\in\{1,2\}\\
\mi{Block}_l(a,x_j)\succ& N_l(c_i,x_j) &l\in\{1,2\}
\end{align*}
Note that $\succ$ is transitive (in particular, because the $\mi{Unsat}_1$ assertions are in conflict with the $P_2$ and $N_2$ assertions but not with the $P_1$ and $N_1$, and vice versa). Let $$\Amc'=\{\mi{NoExist}_1(a), \mi{NoExist}_2(a)\}\cup\{\mi{Block}_1(a,x_j), \mi{Block}_2(a,x_j)\mid 1\leq j\leq n\}\cup\{\mi{Unsat}_1(a,c_i), \mi{Unsat}_2(a,c_i)\mid 1\leq i\leq k\}.$$
It is easy to check that $\Amc'\in\preps{\Kmc_\succ}$: $\Amc'$ is \Tmc-consistent, and it is not possible to add any assertion $\alpha\in\Amc\setminus\Amc'$ to $\Amc'$ while staying \Tmc-consistent without removing an assertion $\beta\in\Amc'$ such that $\alpha\not\succ\beta$.
We show that $\Phi$ is unsatisfiable iff $\Amc'$ is the unique Pareto-optimal repair. \smallskip\\
\noindent($\Leftarrow$) Assume that $\Phi$ is satisfiable: there exists a valuation $\nu$ that satisfies $\Phi$.
Let $\Amc_\nu=\{P_1(c_i,x_j), P_2(c_i,x_j)\mid x_j\in c_i, \nu(x_j)=\true\}\cup\{N_1(c_i,x_j), N_2(c_i,x_j)\mid \neg x_j\in c_i, \nu(x_j)=\false\}
\cup\{\mi{Exist}_1(a), \mi{Exist}_2(a)\}$. 
Using arguments similar to those used in the case where $\succ$ was not transitive, we can show that $\Amc_\nu\in\preps{\Kmc_\succ}$, so that $|\preps{\Kmc_\succ}|\geq 2$.
\smallskip\\
\noindent($\Rightarrow$) In the other direction, assume that there exists a Pareto-optimal repair $\Amc''\neq\Amc'$. 
First we show that for $l\in\{1,2\}$, if $\mi{Exist}_l(a)\notin\Amc''$, then $\Amc''$ contains $\mi{NoExist}_l(a)$ and all $\mi{Block}_l$ and $\mi{Unsat}_l$ assertions, but no $P_l$ or $N_l$ assertions. 
Indeed, if $\mi{Exist}_l(a)\notin\Amc''$, 
by maximality, $\mi{NoExist}_l(a)\in\Amc''$. 
Moreover, if $\Amc''$ contains some $P_l$ or $N_l$ 
assertions, then all $P_l$ or $N_l$ 
edges incoming in some $x_j$ can be replaced by $\mi{Block}_l(a,x_j)$ to obtain a Pareto improvement of $\Amc''$ and $\Amc''\notin\preps{\Kmc_\succ}$. Thus $\Amc''$ contains no $P_l$ or $N_l$ 
assertions. By maximality, it thus contains all $\mi{Block}_l$ and $\mi{Unsat}_l$ assertions. 

Thus if $\mi{Exist}_1(a)\notin\Amc''$ and $\mi{Exist}_2(a)\notin\Amc''$, then $\Amc''=\Amc'$. 
It follows that $\mi{Exist}_l(a)\in\Amc''$ for $l=1$ or $l=2$. Assume w.l.o.g. that $\mi{Exist}_1(a)\in\Amc''$.  
Assume for a contradiction that $\mi{Exist}_2(a)\notin\Amc''$. It follows from the preceding paragraph that $\Amc''$ contains no $P_2$ or $N_2$ assertions. 
Thus $\mi{Exist}_1(a)$ can be replaced by the $\mi{Unsat}_1$ assertions to obtain a Pareto improvement of $\Amc''$ and $\Amc''\notin\preps{\Kmc_\succ}$. It follows that for $l\in\{1,2\}$, $\mi{Exist}_l(a)\in\Amc''$. 
Hence $\mi{NoExist}_l(a)\notin\Amc''$ and there is no $\mi{Block}_l$ or $\mi{Unsat}_l$ assertions in $\Amc''$. Thus the remaining assertions of $\Amc''$ form 
 a maximal subset of the $P_1$, $N_1$ and $P_2$, $N_2$ assertions such that no $x_j$ has both $P_l$ and $N_l$ incoming edges for $l\in\{1,2\}$. The valuation of the $x_j$ defined by $\nu(x_j)=\true$ iff $x_j$ has an incomming $P_1$ edge in $\Amc''$ satisfies $\Phi$. Otherwise, there would be a $c_i$ without $P_1$ or $N_1$ outgoing edge and it would be possible to add $\mi{Unsat}_2(a,c_i)$ and remove $\mi{Exist}_2(a)$ to improve $\Amc''$ since $\mi{Unsat}_2(a,c_i)\succ \mi{Exist}_2(a)$. It follows that $\Phi$ is satisfiable.

\mypar{Hardness of P-AR, P-IAR, P-brave, C-AR, C-IAR, C-brave IQ entailment}
For Pareto-optimal and completion-optimal repairs, \conp-hardness (resp. \np-hardness) of AR and IAR (resp. brave) IQ entailment in \dllitecore follow from the case where the priority relation is given by priority levels, in which the three families of optimal repairs coincide \cite{DBLP:conf/aaai/BienvenuBG14,DBLP:phd/hal/Bourgaux16}. 
These results apply since when $\succ$ is given by priority levels, $\succ$ is transitive: if $\alpha_1\succ\dots\succ\alpha_n$, then 
$s(\alpha_1)>\dots>s(\alpha_n)$, so if $\{\alpha_1,\alpha_n\}\subseteq C\in\conflicts{\Kmc}$, then $\alpha_1\succ\alpha_n$.

\mypar{Hardness of G-AR, G-IAR, G-brave IQ entailment} The reduction used in the proof Theorem~\ref{ThmHardnessGlobal} is such that $\succ$ is transitive.

\mypar{Hardness of globally-optimal repair checking} 
We use the reduction given for the lower bound of globally-optimal repairs enumeration: we have shown that $\Amc_3\in\greps{\Kmc_\succ}$ iff $\Phi$ is unsatisfiable.
\end{proof}

\section{Proofs for Section \ref{sec:arg}}

\thmSymmCoherent*
\begin{proof}
Let $P=(\args,\defeat, \succ)$ be a symmetric PAF. 
Take some preferred extension $E$ of $P$. Then $E$ is a preferred extension of the corresponding AF 
$F=(\args, \sdefeat)$. Suppose for a contradiction that $E$ is not a stable extension of the PAF $P$, which 
means $E$ is not a stable extension of $F$, \ie $E^+\neq\args\setminus E$. Since $E$ is conflict-free, it must then be the case that there is some $\beta \in \args \setminus E$
such that $\beta \not \in E^+$. Because $\succ$ is acyclic, we can w.l.o.g. choose $\beta$ such that there is no $\beta' \in (\args \setminus E) \setminus E^+$  with  $\beta' \succ \beta$.  
Since $E$ is a preferred extension, $E \cup \{\beta\}$ cannot be admissible in $F$, and since $E \cup \{\beta\}$ is conflict-free, 
it follows that there is an undefended attack on $E \cup \{\beta\}$. 
Since $E$ defends itself, the attack must be on~$\beta$, \ie $\beta \not \in \chff(E \cup \{\beta\})$. We can thus find an attack $\gamma \sdefeat \beta$ such that $\gamma \not \in E \cup \{\beta\}$ and $\gamma \not \in  (E \cup \{\beta\})^+$. In particular, $\beta \not \sdefeat \gamma$. As $P$ is a symmetric PAF, we know that $\beta \defeat \gamma$, and since $\beta \not \sdefeat \gamma$, it must be the case that $\gamma \succ \beta$. But we now have $\gamma \in (\args \setminus E) \setminus E^+$ and $\gamma \succ \beta$, which contradicts our choice of $\beta$, concluding our argument. 
\end{proof}

\thmSymPAFChar*
\begin{proof}

Let $F=(\args, \defeat_\succ)$ be the corresponding AF of a symmetric PAF $P=(\args,\defeat, \succ)$, and assume that  $\alpha_1 \defeat_\succ \alpha_2\defeat_\succ \dots\defeat_\succ \alpha_n\defeat_\succ \alpha_1$. Since $F$ is the corresponding AF of $P$, then $\alpha_1 \defeat \alpha_2\defeat \dots\defeat \alpha_n\defeat \alpha_1$. Moreover, since $P$ is symmetric, $(\args,\defeat)$ is symmetric, and it follows that $\alpha_2\defeat \alpha_1$, ..., $\alpha_n \defeat \alpha_{n-1}$, and $\alpha_1\defeat\alpha_n$ 
Finally, since $\succ$ is acyclic, there must exist $(j,i)\in\{(1,2),\dots,(n-1,n),(n,1)\}$ such that $\alpha_j\not\succ\alpha_i$, which implies that $\alpha_i\defeat_\succ\alpha_j$.

In the other direction, let $F=(\args, \defeat)$ be an AF such that for any cycle $\alpha_1 \defeat \alpha_2\defeat \dots\defeat \alpha_n\defeat \alpha_1$, there exists $(j,i)\in\{(1,2),\dots,(n-1,n),(n,1)\}$ with $\alpha_i\defeat\alpha_j$. 
Define $P=(\args,\defeat^P, \succ)$ by $\defeat^P=\{(\alpha,\beta)\mid \alpha\defeat\beta\}\cup\{(\alpha,\beta)\mid \beta\defeat\alpha\}$ and $\alpha\succ\beta$ iff $\alpha\defeat\beta$ and $\beta\not\defeat\alpha$. 
We first show that $\succ$ is acyclic. If it was not the case, there would be a cycle $\alpha_1 \succ \alpha_2 \succ \dots \succ \alpha_n \succ \alpha_1$. By construction of $\succ$, it would then be the case that $\alpha_1 \defeat \alpha_2\defeat \dots\defeat \alpha_n\defeat \alpha_1$ and for every $(j,i)\in\{(1,2),\dots,(n-1,n),(n,1)\}$, $\alpha_i\not\defeat\alpha_j$, a contradiction. 
It is then easy to verify that $F$ is the corresponding AF of $P$: 
\begin{itemize}
\item if $\alpha \defeat \beta$ then  $\alpha \defeat^P \beta$ by construction of $\defeat^P$ and $\beta \not \succ \alpha$ otherwise $\alpha\not\defeat\beta$ by construction of $\succ$; 
\item if (i) $\alpha \defeat^P \beta$ and (ii) $\beta \not \succ \alpha$, then it follows from (i) that $\alpha \defeat \beta$ or $\beta\defeat\alpha$, and from (ii) that $\beta\not\defeat\alpha$ or $\alpha\defeat \beta$, so $\alpha\defeat \beta$.\qedhere
\end{itemize}
\end{proof}

\thmStrongSymmSETAF*
\begin{proof}
Let $F=(\args,\defeat)$ be a strongly symmetric SETAF. Suppose for a contradiction that there is a preferred extension $E$ that is not a stable extension.  Then $(\args \setminus E) \setminus E^+ \neq \emptyset$. Let $B=\{\beta_1, \ldots, \beta_n\}$
be a $\subseteq$-maximal subset of $(\args \setminus E) \setminus E^+ $ that is conflict-free. We show with the following two claims that $E \cup B$ is admissible, which contradicts the fact that $E$ is a preferred extension, and therefore allows us to conclude that $E$ is stable. \medskip\\
\textbf{Claim}: $E \cup B$ is conflict-free. \\
\emph{Proof of claim}. Suppose that $E \cup B$ is not conflict-free. As $E$ is conflict-free and $B$ is conflict-free, any attack contained in $E \cup B$ must contain elements from both $E$ and $B$. Moreover, since $F$ is strongly symmetric, we may assume that the attack targets an element of $E$. Thus, there exists $S \subseteq E \cup B$ with $S \cap B \neq \emptyset$ and $\varepsilon \in E$ such that $S \defeat \varepsilon$. Since $E$ is a preferred extension, we must have $(S \setminus E) \cap E^+ \neq \emptyset$, which implies $B \cap E^+ \neq \emptyset$. This is a contradiction, since $B \subseteq \args \setminus E^+$. \medskip\\
\textbf{Claim}: $E \cup B \subseteq \chff(E \cup B)$. \\
\emph{Proof of claim}. Suppose for a contradiction that $E \cup B \not \subseteq \chff(E \cup B)$. Then there exists an attack 
$S \defeat \alpha$ such that $\alpha \in E \cup B$ and $S \cap (E \cup B)^+ = \emptyset$. Since $E$ is a preferred extension, we know that $E \subseteq \chff(E) \subseteq \chff(E \cup B)$, so $\alpha \in B$. Since $E \cup B$ is conflict-free (by the preceding claim), there must exist $\sigma \in S \setminus (E \cup B)$. As $S \cap (E \cup B)^+ = \emptyset$, we must have 
$\sigma \in (\args \setminus E) \setminus E^+$. From the maximality of $B$, we know that $B \cup \{\sigma\}$ is not conflict-free, 
and since $B$ is conflict-free, there must be an attack that involves $\sigma$. Moreover, as $F$ is strongly symmetric, 
we can assume that $\sigma$ is the target of the attack, which yields $\sigma \in B^+$, again contradicting $S \cap (E \cup B)^+ = \emptyset$. \qedhere
\end{proof}

\thmPSETAF*
\begin{proof}
Let $P=(\args, \defeat, \succ)$ be a strongly symmetric PSETAF such that $\succ$ is transitive,
and let $F=(\args, \sdefeat)$ be the corresponding SETAF.  Suppose for a contradiction that $S_0$ 
is a preferred extension of $P$ but not a stable extension of $P$. It follows that $S_0$ is a preferred extension of $F$
but not a stable extension of $F$. To obtain a contradiction, we will construct an admissible set which is a proper superset of $S_0$. To initiate the construction, we will require the following claim.  \medskip\\
\textbf{Claim 1}. There exists $\alpha \in (\args \setminus S_0)$ such that $S_0 \cup \{\alpha\}$ is conflict-free in $F$.\\
\emph{Proof of claim}. Since $S_0$ is preferred but not stable, there is $\alpha \in (\args \setminus S_0)$ such that
$\alpha \not \in S_0^+$. Suppose for a contradiction that $S_0 \cup \{\alpha\}$ is not conflict-free. As $\alpha \not \in S_0^+$,  it follows that there is some $S' \subseteq S_0$ and $\sigma \in S_0$ such that $S' \cup \{\alpha\} \sdefeat \sigma$.  However, since $S_0$ is preferred, it must defend itself against this attack, and since $S_0$ is conflict-free, the only possibility is that $\alpha \in S_0^+$, which yields a contradiction. (end proof of claim)\medskip\\
We now proceed to construct a finite sequence $S_0, S_1, S_2, \ldots, S_n$ of subsets of $\args$ as follows:
\begin{itemize}
\item  While there exists $\alpha$ such that (i) $\alpha \in (\args \setminus S_i)$ and (ii) $S_i \cup \{\alpha\}$ is conflict-free
\begin{itemize}
\item Pick some $\alpha_i$ that satisfies (i) and (ii) and such that there is no $\alpha_i'$ satisfying (i) and (ii) with $\alpha_i' \succ \alpha_i$ (such an $\alpha_i$ must exist due to the acyclicity of $\succ$). 
\item Set $S_{i+1} = S_i \cup \{\alpha_i\}$
\end{itemize}
\end{itemize}
Since for every $i \geq 0$, we have $S_i \subseteq \args$ and $S_i \subsetneq S_{i+1}$, the construction will terminate after a finite number of iterations. We observe that each $S_i$ is conflict-free. We will show that the final set $S_n$ is conflict-free and defends itself against all attacks, which contradicts the fact that $S_0$ is a preferred extension. The main ingredient in showing this is the following claim: \medskip\\
\textbf{Claim 2}. For each $0 \leq i \leq n$, the following property holds: 
\begin{quote}
For each attack $T \sdefeat \sigma$  on $S_i$ such that $T \cap S_i^+ = \emptyset$ (i.e.\ $S_i$ does not defend against this attack): 
\begin{description}
\item[(a)] if $i \geq 1$ and $\tau \in T \setminus S_{i}$, then the set  $S_{i-1} \cup \{\tau\}$ is conflict-free;
\item[(b)] for every $\tau \in T\setminus S_i$, $\tau \not \succ \sigma'$ for all $\sigma' \in S_i \setminus S_0$;
\item[(c)] there exists $\tau \in T\setminus S_i$ such that $\tau \not \succ \tau'$ for every $\tau' \in T \cup \{\sigma\} \setminus \{\tau\} $.
\end{description}
\end{quote}
\emph{Proof of claim}. The base case ($i=0$) is trivial, as $S_0$ is a preferred extension, and so there is no undefended attack. We will suppose that we have already shown the property to hold for $i=0, \ldots, k$ and prove it still holds for $i=k+1$. 
Let us consider some attack $T \sdefeat \sigma$ with $\sigma \in S_{k+1}$
such that $T \cap S_{k+1}^+ = \emptyset$. 
We need to show that conditions (a), (b), and (c) are satisfied. We prove each condition in turn. 
\medskip\\
\textbf{Claim 2.1}: Condition (a) holds: 
for every $\tau \in T \setminus S_{k+1}$, the set  $S_k \cup \{\tau\}$ is conflict-free. \\
\emph{Proof of claim.} Suppose for a contradiction that $S_k \cup \{\tau\}$ is not conflict-free. The attack contained in $S_k \cup \{\tau\}$ cannot target $\tau$, since otherwise we would have $\tau \in T \cap S_k^+$, contradicting the fact that 
$T\cap S_{k+1}^+=\emptyset$. The conflict must however involve $\tau$ since $S_{k}$ is conflict-free. Thus, there must exist $S^* \subseteq S_k$ and $\sigma^* \in S_k$ such that  \mbox{$S^* \cup \{\tau\} \sdefeat \sigma^*$}. 
If 
$ (S^* \cup \{\tau\})\cap S_{k+1}^+ \neq \emptyset$, then we must have $\tau \in S_{k+1}^+$, which again gives a contradiction. 
Thus, $ (S^* \cup \{\tau\})\cap S_{k+1}^+ = \emptyset$, which implies  $ (S^* \cup \{\tau\})\cap S_{k}^+ = \emptyset$. 
We can therefore apply the induction hypothesis for $i=k$. As $(S^* \cup \{\tau\}) \setminus S_k = \{\tau\}$, it follows from condition (c) applied to the undefended attack $S^* \cup \{\tau\} \sdefeat \sigma^*$
that  $\tau \not \succ \beta$ for all $\beta \in S^* \cup \{\sigma^*\}$. Since $P$ is strongly symmetric, it follows that we also have the attack $S^* \cup \{\sigma^*\} \sdefeat \tau$. But this means that $T \cap S_k^+ \neq \emptyset$, so $T \cap S_{k+1}^+ \neq \emptyset$,  a contradiction. (end proof of claim)
\medskip\\
\textbf{Claim 2.2}: Condition (b) holds: 
 for every $\tau \in T\setminus S_{k+1}$, $\tau \not \succ \sigma'$ for all $\sigma' \in S_{k+1} \setminus S_0$.\\
\emph{Proof of claim.} Suppose for a contradiction that the condition does not hold. Since $S_{k+1} \setminus S_0= \{\alpha_0, \ldots, \alpha_{k-1}, \alpha_{k}\}$, this means that there exists $\tau \in T\setminus S_{k+1}$ and $0 \leq j \leq k$ such that 
$\tau \succ \alpha_j$. Since $\alpha_j$ was selected for addition to $S_j$ despite $\tau \succ \alpha_j$,
it must be the case that $S_j \cup \{\tau\}$ is not conflict-free. As $S_j \subseteq S_k$, the set $S_k \cup \{\tau\}$ must also contain some attack, 
which contradicts Claim 2.1, which states that $S_k \cup \{\tau\}$ is conflict-free. 
(end proof of claim)\medskip\\
\textbf{Claim 2.3}: Condition (c) holds: 
there exists $\tau \in T\setminus S_{k+1}$ such that $\tau \not \succ \tau'$ for every $\tau' \in T \cup \{\sigma\} \setminus \{\tau\}$. \\
\emph{Proof of claim.} Suppose for a contradiction that condition (c) does not hold. Then for every $\tau \in T\setminus S_{k+1}$, there exists $\tau' \in (T \cup \{\sigma\} \setminus \{\tau\})$ such that $\tau \succ \tau'$. However, due to Claim 2.2, we know that $\tau' \not \in S_{k+1} \setminus S_0$. Moreover, we also know that $\sigma \in S_{k+1} \setminus S_0$, since $T \cap S_{k+1}^+ = \emptyset$, while every attack on $S_0$ is defended ($S_0$ being a preferred extension). Thus, we have:
\begin{equation}\label{prop1}
\text{For every $\tau \in T\setminus S_{k+1}$, there exists $\tau' \in (T \setminus (S_{k+1} \setminus S_0)) \setminus \{\tau\}$ such that $\tau \succ \tau'$.}
\end{equation}
Define $S_0^*$ as the restriction of $S_0$ to those elements $s_0 \in S_0$ such that $s_0 \not \succ \tau'$ for every 
$\tau' \in S_{k+1} \setminus S_0$. We aim to show that if $\tau \in T\setminus S_{k+1}$ and $s_0 \in S_0 \setminus S_0^*$, then $\tau \not \succ s_0$. Suppose for a contradiction that we have $\tau \succ s_0$ for such a $\tau$ and $s_0$. 
From $s_0 \in S_0 \setminus S_0^*$, we know that there exists $\tau'  \in S_{k+1} \setminus S_0$ such that $s_0 \succ \tau'$. 
But then by transitivity of $\succ$, we must also have $\tau \succ \tau'$, which contradicts Claim 2.2.
We can thus refine the preceding statement as follows:
\begin{equation}\label{prop2}
\text{For every $\tau \in T\setminus S_{k+1}$, there exists $\tau' \in (T \setminus (S_{k+1}\cup \{\tau\})) \cup (T \cap S_0^*)$ such that $\tau \succ \tau'$.}
\end{equation}
Next consider $\sigma_0 \in T \cap S_0^*$, and suppose for a contradiction that there is no $\tau' \in T \cup \{\sigma\}$ such that $\sigma_0 \succ \tau'$.
Since $P$ is strongly symmetric and $T \sdefeat \sigma$, we must have $T \setminus \{\sigma_0\} \cup \{\sigma\} \sdefeat \sigma_0$. We know that 
$(T \setminus \{\sigma_0\} \cup \{\sigma\}) \cap S_0^+ \neq \emptyset$
since $S_0$ is a preferred extension of $F$. As $\sigma \in S_{k+1}$ and $S_{k+1}$ is conflict-free, 
it must be the case that $(T \setminus S_{k+1}) \cap S_0^+ \neq \emptyset$. But this means that $T \cap S_{k+1}^+ \neq \emptyset$, a contradiction. When combined with the definition of $S_0^*$ and the transitivity of $\succ$, we get:
\begin{equation}\label{prop3}
\text{For every $\sigma_0 \in T \cap S_0^*$, there exists $\tau' \in T \setminus (S_{k+1} \setminus S_0^*)$ such that $\sigma_0 \succ \tau'$.}
\end{equation}
Since $S_{k+1}$ is conflict-free, we know that $T\setminus S_{k+1} \neq \emptyset$. 
This means that the set $(T \setminus S_{k+1}) \cup (T \cap S_0^*)$ is non-empty. Moreover, 
statements (\ref{prop2}) and (\ref{prop3}) show that for every $\tau \in (T \setminus S_{k+1}) \cup (T \cap S_0^*)$ 
there exists $\tau' \in (T \setminus S_{k+1}) \cup (T \cap S_0^*)$ such that $\tau \succ \tau'$. 
This implies the existence of a cycle in the preference relation $\succ$, contradicting the fact that $\succ$ is acyclic. (end proof of claim)\medskip\\
\indent We have thus established Claim 2. Now consider the final set $S_n$, which is, by construction, conflict-free. Suppose for a contradiction that $S_n \not \subseteq \chff(S_n)$. Then there is an attack $U \sdefeat \sigma_n$ on some $\sigma_n\in S_n$
such that $U \cap S_n^+ = \emptyset$. Since $S_n$ is conflict-free, $U \setminus S_n \neq \emptyset$. Consider some $\upsilon \in U \setminus S_n$. Since $\upsilon \not \in S_n$, the set $S_n \cup \{\upsilon\}$ is not conflict-free, and so there must be some attack that involves but does not target $\upsilon$. Let $S_n' \cup \{\upsilon\} \sdefeat \sigma_n'$ be such an attack, where $S_n' \subseteq S_n$ and $\sigma_n' \in S_n$. Since $S_n$ is conflict-free and $\upsilon \not \in S_n^+$, this attack is not defended by $S_n$, i.e. $(S_n' \cup \{\upsilon\}) \cap S_n^+ = \emptyset$. We can thus apply Claim 2, condition (c), to obtain $\upsilon \not \succ \beta$ for all $\beta \in S_n' \cup \{\sigma_n'\}$. 
It follows that we also have the attack $S_n' \cup \{\sigma_n'\} \sdefeat \upsilon$, which contradicts $U \cap S_n^+ = \emptyset$.

We have thus exhibited a set $S_n \supsetneq S_0$ that is conflict-free and such that $S_n \subseteq \chff(S_n)$, which contradicts that fact that $S_0$ is a preferred extension. Having obtained the desired contradiction, we can conclude that every preferred extension of a strongly symmetric PSETAF with a transitive preference relation is also a stable extension. 
\end{proof}

\section{Proofs for Section \ref{sec:repairs-extensions}}

\begin{lemma}\label{lem:conflict-free}
Let $\Kmc_\succ$ be a prioritized KB with $\Kmc=\tup{\Tmc, \Amc}$ and $\Amc'\subseteq\Amc$. The set $\Amc'$ is \Tmc-consistent iff $\Amc'$ is conflict-free in the associated PSETAF $F_{\Kmc, \succ}$.
\end{lemma}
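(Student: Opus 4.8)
The plan is to unfold the definitions on both sides and reduce the biconditional to two contrapositive implications, the harder of which rests on a single observation about how the priority relation filters attacks. First I would make the meaning of ``conflict-free in the PSETAF'' explicit: since all PSETAF notions are inherited from the corresponding SETAF, this means conflict-free in $(\Amc, \defeat_{\Kmc,\succ})$, where $S \defeat_{\Kmc,\succ} \alpha$ iff $S \defeat_\Kmc \alpha$ and $\alpha \not\succ \beta$ for every $\beta \in S$. Recall that $S \defeat_\Kmc \alpha$ holds exactly when $S = C \setminus \{\alpha\}$ for some $C \in \conflicts{\Kmc}$ with $\alpha \in C$. Hence $\Amc'$ fails to be conflict-free iff there exist $\alpha \in \Amc'$ and $S \subseteq \Amc'$ with $S \defeat_{\Kmc,\succ} \alpha$.

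The easy direction ($\Tmc$-consistent $\Rightarrow$ conflict-free) I would prove by contraposition. If $\Amc'$ is not conflict-free, fix a surviving attack $S \defeat_{\Kmc,\succ} \alpha$ with $S \cup \{\alpha\} \subseteq \Amc'$. Since the preference filtering only deletes attacks, $\defeat_{\Kmc,\succ} \subseteq \defeat_\Kmc$, so $S \defeat_\Kmc \alpha$, and therefore $C := S \cup \{\alpha\}$ is a conflict with $C \subseteq \Amc'$. As conflicts are $\Tmc$-inconsistent, $\Amc'$ is $\Tmc$-inconsistent.

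For the converse (conflict-free $\Rightarrow$ $\Tmc$-consistent) the subtlety is that $\defeat_{\Kmc,\succ}$ may discard many of the attacks induced by a single conflict, so I must verify that at least one of them survives. This is the key step: given $C \in \conflicts{\Kmc}$, the restriction of $\succ$ to $C$ is acyclic (as a subrelation of the acyclic $\succ$), and a finite acyclic relation always admits a sink, i.e.\ some $\alpha \in C$ with $\alpha \not\succ \beta$ for all $\beta \in C \setminus \{\alpha\}$. Using our standing assumption that $|C| \geq 2$, the set $C \setminus \{\alpha\}$ is a legitimate (non-empty) attacker, and for this $\alpha$ the attack $C \setminus \{\alpha\} \defeat_{\Kmc,\succ} \alpha$ is preserved. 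Arguing again by contraposition: if $\Amc'$ is $\Tmc$-inconsistent it contains some conflict $C$; the surviving attack $C \setminus \{\alpha\} \defeat_{\Kmc,\succ} \alpha$ with $C \subseteq \Amc'$ then witnesses $\alpha \in \Amc' \cap (\Amc')^+$, so $\Amc'$ is not conflict-free.

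The main obstacle is precisely this sink argument. It is the only place where acyclicity of $\succ$ is genuinely needed (the other direction uses merely $\defeat_{\Kmc,\succ} \subseteq \defeat_\Kmc$), and it is essential: without it, preference filtering could conceivably strip a conflict of all its attacks, breaking the correspondence between $\Tmc$-inconsistency and the presence of an internal attack.
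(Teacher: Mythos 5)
Your proof is correct and follows essentially the same route as the paper's: the easy direction uses $\defeat_{\Kmc,\succ}\subseteq\defeat_\Kmc$ to recover a conflict inside $\Amc'$, and the converse uses acyclicity of $\succ$ restricted to a conflict $C$ to find a sink $\alpha$ whose attack $C\setminus\{\alpha\}\defeat_{\Kmc,\succ}\alpha$ survives the preference filtering. Your explicit remarks about the $|C|\geq 2$ assumption and about conflict-freeness being defined via the corresponding SETAF are welcome clarifications, but the argument is the same.
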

\begin{proof}
Assume for a contradiction that $\Amc'$ is \Tmc-consistent and not conflict-free. There exists $\alpha\in \Amc'\cap\Amc'^+$, so there exists $A\subseteq \Amc'$ such that $A\defeat_\Kmc\alpha$. However, it follows that $(A\cup\{\alpha\})\in\conflicts{\Kmc}$, contradicting the \Tmc-consistency of $\Amc'$.

If $\Amc'$ is \Tmc-inconsistent, there exists 
 $C\in\conflicts{\Kmc}$ such that $C\subseteq\Amc'$. 
Since $\succ$ is acyclic and $C$ is finite, there exists $\gamma\in C$ such that for every $\delta\in C$, $\gamma\not\succ\delta$. It follows that $C\setminus\{\gamma\}\defeat_\Kmc\gamma$. Thus $\gamma\in C\cap C^+$, \ie $C$ is not conflict-free, hence neither is~$\Amc'$. 
\end{proof}

\thmstable*
\begin{proof}
Assume for a contradiction that $\Amc'\in\preps{\Kmc_\succ}$  and is not a stable extension of $F_{\Kmc, \succ}$.  
By Lemma \ref{lem:conflict-free}, $\Amc'$ is conflict-free so ${\Amc'}^+\subseteq \Amc\setminus\Amc'$. 
Since $\Amc'$ is not stable, there must then exist $\alpha\in \Amc\setminus\Amc'$ such that $\alpha\notin {\Amc'}^+$. Since $\Amc'$ is a repair, $\Amc'\cup\{\alpha\}$ is \Tmc-inconsistent, and since $\alpha\notin {\Amc'}^+$, then for every $B\subseteq\Amc'$ such that $B\cup\{\alpha\}$ is \Tmc-inconsistent, there exists $\beta\in B$ such that $\alpha\succ\beta$. 
Then $\Amc'\setminus\{\beta\mid \alpha\succ\beta\}\cup\{\alpha\}$ is a Pareto improvement of $\Amc'$. 
It follows that $\Amc'\notin\preps{\Kmc}$.

In the other direction, assume for a contradiction that $\Amc'$ is a stable extension of $F_{\Kmc, \succ}$ and $\Amc'\notin\preps{\Kmc_\succ}$. 
By Lemma~\ref{lem:conflict-free}, since $\Amc'$ is stable, $\Amc'$ is $\Tmc$-consistent. 
Since $\Amc'\notin\preps{\Kmc_\succ}$, it follows that 
there exists $\alpha\in\Amc\setminus\Amc'$ such that $\alpha$ is \Tmc-consistent and for every $B\subseteq\Amc'$ such that $B\cup\{\alpha\}$ is \Tmc-inconsistent, there exists $\beta\in B$ such that $\alpha\succ\beta$. 
Since $\Amc'$ is stable, there exists $A\subseteq \Amc'$ such that $A\defeat_\Kmc\alpha$ (otherwise $\alpha\notin{\Amc'}^+$). 
But then $A\cup\{\alpha\}$ is \Tmc-inconsistent, so there exists $\beta\in A$ such that $\alpha\succ\beta$, which contradicts $A\defeat_\Kmc\alpha$. 
\end{proof}

\thmTransCoherent*
\begin{proof}
Let $\Kmc_\succ$ be a prioritized KB with 
$\Kmc=\tup{\Tmc,\Amc}$,  and a transitive priority relation $\succ$.
The associated PSETAF is $F_{\Kmc, \succ}= (\Amc, \defeat_{\Kmc}, \succ)$,
where $\defeat_{\Kmc} = \{(C \setminus \{\alpha\}, \alpha) \mid C \in \conflicts{\Kmc}, \alpha \in C\}$.

Let $F_\mn{cl}= (\Amc, \defeat_{\Kmc}, \succ')$ where $\succ'$ is the transitive closure of $\succ$. The transitive closure of an acyclic binary relation is acyclic, so $\succ'$ is an acyclic binary relation over $\Amc$, so $F_\mn{cl}$ is a PSETAF. 
Moreover, observe that $F_\mn{cl}$ is strongly symmetric and $\succ'$ is transitive, so by Theorem \ref{psetaf-coh}, $F_\mn{cl}$ is coherent. 

We show that $F_{\Kmc, \succ}$ and $F_\mn{cl}$ have the same corresponding SETAF, so that  $F_{\Kmc, \succ}$ is coherent. 
The corresponding SETAF of $F_{\Kmc, \succ}$ is $(\Amc,\defeat_{\Kmc,\succ})$, where 
$S \defeat_{\Kmc,\succ} \alpha$ iff $S \defeat_\Kmc \alpha$ and $\alpha \not \succ \beta$ for every $\beta \in S$, and the corresponding SETAF of $F_\mn{cl}$ is $(\Amc,\defeat_{\Kmc,\succ'})$, where 
$S \defeat_{\Kmc,\succ'} \alpha$ iff $S \defeat_\Kmc \alpha$ and $\alpha \not \succ' \beta$ for every $\beta \in S$. 
Since $\succ\subseteq\succ'$, we thus have $\defeat_{\Kmc,\succ'}\subseteq\defeat_{\Kmc,\succ}$. 
Assume for a contradiction that there exists $S\defeat_{\Kmc,\succ}\alpha$ such that $S\not\defeat_{\Kmc,\succ'}\alpha$. There must be $\beta\in S$ such that $\alpha\succ'\beta$. Since $\succ'$ is the transitive closure of $\succ$, it follows that there exists $\gamma_1,\dots,\gamma_n$ such that $\alpha\succ\gamma_1\succ\dots\succ\gamma_n\succ\beta$. But then, since $\{\alpha,\beta\}\subseteq S\cup\{\alpha\}\in\conflicts{\Kmc}$ and $\succ$ is a transitive priority relation, it follows that $\alpha\succ\beta$. Hence $S\not\defeat_{\Kmc,\succ}\alpha$. We conclude that $\defeat_{\Kmc,\succ'}=\defeat_{\Kmc,\succ}$ and that$F_{\Kmc, \succ}$ and $F_\mn{cl}$ have the same corresponding SETAF. 
\end{proof}

\thmGroundedPareto*
\begin{proof}
Since $\Bmc\in\preps{\Kmc_\succ}$, 
by Theorem \ref{tstable}, $\Bmc$ is a stable extension of $F_{\Kmc, \succ}$. 
Since every stable extension is also a complete extension, and $\Gmc$ is the $\subseteq$-minimal complete extension of $F_{\Kmc, \succ}$, it follows that $\Gmc\subseteq\Bmc$.
\end{proof}

\thmGroundedNoPref*
\begin{proof}
Let $\Gmc$ be the grounded extension of $F_\Kmc$ and $\Amc^\cap$ be the intersection of the repairs of $\Kmc$. 
Let $\alpha\in\Amc^\cap$. There is no $C\in\conflicts{\Kmc}$ such that $\alpha\in C$ (otherwise $C\setminus\{\alpha\}$ would be \Tmc-consistent and could be extended to a repair of $\Kmc$ that would not contain $\alpha$). 
Thus there is no $A\defeat_\Kmc\alpha$. It follows that $\alpha\in \chf{F_\Kmc}(\emptyset)\subseteq\Gmc$. Hence $\Amc^\cap\subseteq\Gmc$. 
In the other direction, let $\Amc'\in\reps{\Kmc}$. 
Since $\succ_\emptyset$ is the empty relation, then $\reps{\Kmc}=\preps{\Kmc_{\succ_\emptyset}}$, so $\Amc'\in\preps{\Kmc_{\succ_\emptyset}}$ and by Theorem \ref{ground-pareto}, $\Gmc\subseteq\Amc'$. It follows that $\Gmc\subseteq\Amc^\cap$.
\end{proof}

\thmElect*
\begin{proof}
Let $\alpha\in \mi{Elect}(\Kmc_\succ)$ and $F=(\Amc, \defeat_{\Kmc,\succ})$ be the SETAF corresponding to $F_{\Kmc,\succ}$. 
Since for every $C\in\conflicts{\Kmc}$ such that $\alpha\in C$, there exists $\beta\in C$ such that $\alpha\succ\beta$, there is no $C\in\conflicts{\Kmc}$ such that $C\setminus\{\alpha\}\defeat_{\Kmc,\succ}\alpha$, so no $S\subseteq \Amc$ such that $S\defeat_{\Kmc,\succ}\alpha$. 
It follows that $\alpha\in \chf{F}(\emptyset)\subseteq \Gmc$.
\end{proof}

\thmGroundedHardness*
\begin{proof}
We adapt the proof of \ptime-hardness for the problem of deciding whether an argument belongs to the grounded extension of an argumentation framework given in \cite{DBLP:journals/flap/DvorakD17} (Reduction 3.8). We modify the original log-space reduction from deciding whether a variable is in the minimal model of a definite Horn theory so that the argument framework does not contain cycles, which are not allowed in our setting. 
Let $\varphi=\{r_l\mid 1\leq l\leq n\}$ be a set of definite Horn clauses over a set of variables $X$, \ie $r_l= b_{l,1}\wedge\dots\wedge b_{l,i_l}\rightarrow h_l$, with $b_{l,j},h_l \in X$, and let $z\in X$ be the variable we want to decide whether it belongs to the minimal model of $\varphi$. We assume w.l.o.g. that for every rule $r_l$, $h_l\neq b_{l,j}$ for all $j$. 

We construct a \dllitecore KB $\Kmc=\tup{\Tmc,\Amc}$ and priority relation as follows.
\begin{align*}
\Tmc=&\{ \exists \mi{FalseInBod}^-\sqsubseteq \neg \exists \mi{SatBodHead}, \exists \mi{SatBodHead}^-\sqsubseteq \neg\exists \mi{FalseInBod}, \exists \mi{SatBodHead}^-\sqsubseteq \neg F, F\sqsubseteq \neg T\}\\
\Amc=&\{\mi{FalseInBod}(b_{l,j},r_l)\mid r_l\in\varphi, 1\leq j\leq i_l \}\cup\{\mi{SatBodHead}(r_l, h_l)\mid r_l\in\varphi\}\cup\{F(z), T(z)\}
\end{align*}
$$\mi{SatBodHead}(r_l, x)\succ \mi{FalseInBod}(x,r_{l'})\quad\quad \mi{SatBodHead}(r_l, z)\succ F(z)$$ 
Clearly, $\succ$ is acyclic. 
We show that $T(z)$ belongs to the grounded extension $\Gmc$ of $F_{\Kmc,\succ}$ iff $z$ belongs to the minimal model of~$\varphi$. 
The corresponding AF $F=(\Amc, \defeat_{\Kmc,\succ})$ of $F_{\Kmc,\succ}$ is such that 
\begin{align*}
\defeat_{\Kmc,\succ}=&\{(\mi{SatBodHead}(\_, x), \mi{FalseInBod}(x,\_))\mid x\in X\}\cup\{( \mi{SatBodHead}(\_,z),F(z))\}\cup\\&
\{(\mi{SatBodHead}(r_l, \_), \mi{FalseInBod}(\_,r_l)), (\mi{FalseInBod}(\_,r_l), \mi{SatBodHead}(r_l, \_))\mid r_l\in\varphi\}\cup\\
&\{(F(z), T(z)), (T(z),F(z))\}
\end{align*}
The intuition behind the reduction is that an argument $\mi{SatBodHead}(r_l, x)$ is in the grounded extension $\Gmc$ iff all variables in the body of $r_l$ are in the minimal model of $\varphi$, so that $x$ is also in this model, and that an argument $\mi{FalseInBod}(x,\_)$ or $F(x)$, stating that $x$ is not in the model, is attacked by $\Gmc$ only if $x$ is in the model. 
That is, when computing $\Gmc$ via iteratively applying the characteristic function $\chf{F}$ we simulate the following algorithm for deciding whether $z$ is in the minimal model of $\varphi$. The algorithm starts with the rules with empty body and adds their rule heads to the minimal model: 
$\chf{F}(\emptyset)=\{\mi{SatBodHead}(r_l, x)\mid\text{there is no }\mi{FalseInBod}(\_,r_l)\in\Amc\}$. 
Then it iteratively considers all rules with the body already being part of the minimal model and adds their heads to the minimal model until a fixed-point is reached: if $G_i$ is the result at step $i$, 
let $G_i^+=\{\mi{FalseInBod}(y,\_)\mid \mi{SatBodHead}(\_, y)\in G_i\}\cup\{F(z) \text{ if there is some }\mi{SatBodHead}(\_, z)\in G_i \}$ be the set of arguments that are attacked by $G_i$, 
$\chf{F}(G_i)=\{\mi{SatBodHead}(r_l, x)\mid\text{there is no }\mi{FalseInBod}(\_,r_l)\in \Amc\setminus G_i^+\}\cup\{T(z)\text{ if }F(z)\in G_i^+ \}$. 
Hence $z$ is in the minimal model of the Horn-formula $\varphi$ iff $T(z)\in \Gmc$, \ie $\Kmc_\succ\models_{GR} T(z)$.
\end{proof}

We recall the definition of well-founded semantics and introduce some notation and terminology. Our presentation is loosely based upon Chapter 15 from \cite{AbiteboulHV95}, but tailored (and simplified) to suit our purposes. We refer readers to the latter chapter for a more detailed introduction to the semantics of rule languages with negation. 

A \emph{normal logic program} (aka Datalog$^\neg$ program, henceforth abbreviated to program) is a finite set of \emph{rules} of the form $\rho= \mathtt{h} \leftarrow \bp_1, \ldots, \bp_n, \nnot\ \bp_{n+1}, \ldots, \nnot\ \bp_{n+m}$, where each $\bp_i$ is an atom $\mathtt{p}(t_1, \ldots,t_n)$ whose every term $t_i$ is either a constant or a variable. The atom $\mathtt{h}$ is called the head of the rule $\rho$, and the body of $\rho$ is $\bp_1, \ldots, \bp_n, \nnot\ \bp_{n+1}, \ldots, \nnot\ \bp_{n+m}$. The atoms $\bp_1, \ldots, \bp_n$ are called \emph{positive body atoms}, while the atoms $\bp_{n+1}, \ldots, \bp_{n+m}$ (which are preceded by $\nnot$) are called \emph{negative body atoms}. As is common, we require rules to satisfy the following safety condition: every variable that appears in the head atom of a rule must occur in some positive body atom of that rule. Note that it is allowed for the body of a rule to be empty (empty ruleheads can also be considered, but will not be used in what follows). 

A \emph{definite rule} is a rule that does not contain any negative body atoms, and a \emph{definite program} is a finite set of definite rules. A \emph{ground atom (resp.\ rule, program)} is an atom (resp.\ rule, program) that does not contain any variables. We will sometimes use the term \emph{fact} in place of \emph{ground atom}. 
Given any program $\Pi$, the \emph{grounding of $\Pi$}, denoted $grnd(\Pi)$, consists of all ground rules that can be obtained from a rule in $\Pi$ by instantiating its variables with constants occurring in $\Pi$. 

Given a ground program $\Pi$, we let $B_\Pi$ consist of all facts that appear (positively or negatively) in $\Pi$. A \emph{2-valued instance} $\mathbf{I}$ for $\Pi$ assigns to  each fact in $B_\Pi$ a value that is either true or false.  We denote by $\bot$ the 2-valued instance that assigns false to all facts in $B_\Pi$. A \emph{3-valued instance} for $\Pi$ is defined similarly, except now there are three possible values: true, unknown, or false. It will be convenient to view 2-valued instances as sets of (positive) facts, where a fact is assigned true if it belongs to the set, and otherwise, is assigned false. Similarly, a 3-valued instance can be given as a consistent set of literals (possibly negated facts), where facts appearing positively in the set are assigned true, those whose negation is in the set are assigned false, and all other facts are treated as unknown. We use the notation $\mathbf{I}(\fact)$ to refer to the truth value that $\mathbf{I}$ assigns to the fact  $\fact$.

Given two 3-valued instances $\mathbf{I}$ and $\mathbf{J}$, we write $\mathbf{I} < \mathbf{J}$ to mean that $\mathbf{I}(\fact) < \mathbf{J}(\fact)$ for ever ground atom $\fact$, where the truth values are ordered as follows: false $<$ unknown $<$ true. 
A \emph{(2-valued) model} of a ground definite program $\Pi$ is a 2-valued instance for $\Pi$ such that for every rule $\rho \in \Pi$, if all body atoms in $\rho$ are assigned true, then the head atom is also assigned true. It is known that every 
ground definite program has a unique \emph{minimal model}, i.e., a model $\mathbf{I}$ such that $\mathbf{I} < \mathbf{J}$ for every model $\mathbf{J}$. We denote by $mm(\Pi)$ the minimal model of a 
ground definite program $\Pi$.

We next extend minimal models to stratified programs. A \emph{stratification} of a program $\Pi$ 
is a sequence $\Pi^{1}, \ldots, \Pi^n$ and a function $\sigma$ mapping predicates to $[1,n]$ such that the following conditions hold:
\begin{itemize}
\item $\Pi$ is the disjoint union of $\Pi^{1}, \ldots, \Pi^n$
\item for every predicate $P$ that occurs in $\Pi$, all rules that have $P$ in the head belong to $\Pi^{\sigma(P)}$
\item for every $1 \leq i \leq n$:
\begin{itemize}
\item if predicate $P$ occurs positively in the body of some rule of $\Pi^i$, then $\sigma(P) \leq i$
\item if predicate $P$ occurs negatively in the body of a rule of $\Pi^i$, then $\sigma(P) < i$
\end{itemize}
\end{itemize}
A program $\Pi$ is stratified if there exists a stratification for $\Pi$. 
Given a program $\Pi$ with stratification $\Pi^{1}, \ldots, \Pi^n$, we proceed as follows. We start by computing the grounding $grnd(\Pi)$, 
and let $gr(\Pi^{1}), \ldots, gr(\Pi^n)$ be the ground programs corresponding to $\Pi^{1}, \ldots, \Pi^n$ (note that they are grounded w.r.t.\ all constants occurring in $\Pi$, not just the constants from the stratum). We set $M_1$ equal to the minimal model of the ground definite program $gr(\Pi^{1})$. Then for $1 \leq i < n$, we let $M_{i+1} = M_i \cup M'_{i+1}$, where $M'_{i+1}$ is the minimal model of the ground definite program obtained from $gr(\Pi^{i+1}$) by (i) deleting every rule whose body contains $\nnot \fact$ where $\fact \in M_i$, and (ii) removing all negative facts from the remaining rule bodies. We call $M_n$ the minimal model of $\Pi$. 

We now proceed to the definition of the well-founded semantics, which can be applied to arbitary programs. For every ground program $\Pi$ and 2-valued instance\footnote{This construction in \cite{AbiteboulHV95} was originally defined for 3-valued instances, but as is noted there, only 2-valued instances are needed for the fixpoint construction of the well-founded semantics. Focusing on 2-valued instances simplifies some definitions.} $\mathbf{I}$ for $\Pi$, we denote by $\Pi|_\mathbf{I}$ the ground definite program that is obtained from $\Pi$ as follows: 
(i) delete every rule whose body contains $\nnot \fact$ where $\fact \in \mathbf{I}$, 
(ii) remove all negative body atoms from the remaining rules.
We then consider the following infinite sequence of 2-valued instances:
$$\mathbf{I}_0 = \bot \qquad \mathbf{I}_{i+1} = mm(\Pi|_{\mathbf{I}_i})$$
It is known (and can be readily verified) that for every $i > 0$, we have:
$$\mathbf{I}_0 <  \mathbf{I}_2 < \ldots <\mathbf{I}_{2i} <  \mathbf{I}_{2i+2} <  \mathbf{I}_{2i+1} < \mathbf{I}_{2i-1} < \ldots < \mathbf{I}_{1}$$
As there are a finite number of possible instances, the sequences $(\mathbf{I}_{2i})_{i \geq 0}$ and $(\mathbf{I}_{2i+1})_{i \geq 0}$ both reach a fixpoint. 
We let $\mathbf{I}_*$ denote the limit of the increasing sequence $(\mathbf{I}_{2i})_{i \geq 0}$, 
and let $\mathbf{I}^*$ denote the limit of the decreasing sequence $(\mathbf{I}_{2i+1})_{i \geq 0}$. 
Then the (unique) \emph{well-founded model} of $\Pi$ 
is the 3-valued instance that assigns true to all facts $\fact$ such that $\fact \in \mathbf{I}_*$, 
assigns false to all facts $\fact$ such that $\fact \not \in \mathbf{I}^*$, and unknown to all other facts. 
If $\Pi$ is an arbitrary (not necessarily ground) program, then the well-founded model 
of $\Pi$ is defined as the well-founded model of $grnd(\Pi)$. 

It is known that if $\Pi$ is stratified, then the well-founded model of $\Pi$ is equal to the minimal model of $\Pi$. More precisely, if $\Pi$ has a stratification with $p$ levels, then 
$\mathbf{I}_{2p}$ is the minimal model, and we have $\mathbf{I}_{2p} = \mathbf{I}_{2p+i}$ for all $i \geq 0$.

\medskip

We recall the following result from \cite{DBLP:journals/ai/Dung95} (very slightly modified to suit our later extensions) that shows how to compute the grounded extension of an AF via well-founded semantics. 
It makes reference to the following two rules:
$$r_{\accp}: \accp(x) \leftarrow \mathtt{arg}(x), \nnot \defp(x) \qquad r_{\defp}: \defp(x) \leftarrow \attackp(y,x), \accp(y)$$

\begin{theorem}
Let $F=(\args,\defeat)$ be an AF. Then $E \subseteq \args$ is the grounded extension of $F$
iff 
$$\{\accp(\alpha), \neg \defp(\alpha) \mid \alpha \in E\} 
\cup \{\defp(\beta), \neg \accp(\beta) \mid \beta \in E^+\} \cup \{\attackp(\alpha, \beta)  \mid \alpha \defeat \beta\} \cup \{\mathtt{arg}(\alpha) \mid \alpha \in \args\}$$
is the well-founded model of the program $\{r_{\accp}, r_{\defp}\} \cup \{\attackp(\alpha, \beta) \leftarrow \, \mid \alpha \defeat \beta\}\cup \{\mathtt{arg}(\alpha) \leftarrow \, \mid \alpha \in \args\}$.
\end{theorem}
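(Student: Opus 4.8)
The final statement is essentially Dung's characterization, and the plan is to prove it by the standard \emph{alternating-fixpoint} computation of the well-founded model recalled in the preliminaries, matched step-by-step against the iteration of the characteristic function $\chff$, whose least fixpoint is the grounded extension $E$. First I would observe that the facts $\attackp(\alpha,\beta)$ and $\mathtt{arg}(\alpha)$ are derived in every relevant minimal model, hence true in the well-founded model; the whole computation is therefore driven by the mutually recursive pair $r_{\accp}, r_{\defp}$. This pair is \emph{not} stratified ($\accp$ depends negatively on $\defp$ and $\defp$ positively on $\accp$), which forces the use of the alternating fixpoint rather than the stratified shortcut.

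The key reduction is to describe a single minimal-model step in terms of the $\defp$-guess. For a $2$-valued instance $\mathbf{I}$, let $D$ be its set of $\defp$-atoms. Forming the reduct $\Pi|_{\mathbf{I}}$ keeps the rule $\accp(x)\leftarrow\mathtt{arg}(x)$ exactly when $\defp(x)\notin\mathbf{I}$, so in $mm(\Pi|_{\mathbf{I}})$ the accepted set is $\args\setminus D$ and, applying $r_{\defp}$, the defeated set is $(\args\setminus D)^+$. Thus the sequence $\mathbf{I}_0=\bot$, $\mathbf{I}_{i+1}=mm(\Pi|_{\mathbf{I}_i})$ is governed, on its $\defp$-component, by the antitone operator $\Phi(D)=(\args\setminus D)^+$.

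The heart of the argument is to identify the increasing even subsequence with the iterates of $\chff$ from $\emptyset$. Writing $A_i$ for the accepted set of $\mathbf{I}_{2i}$ (so its defeated set is $A_i^+$), I would show by induction that $A_i=\chff^i(\emptyset)$: unfolding one double step yields $A_{i+1}=\args\setminus(\args\setminus A_i^+)^+=\{x : \text{every attacker of } x \text{ lies in } A_i^+\}=\chff(A_i)$, which is exactly the defining form of the characteristic function (the base case $A_0=\emptyset$ being immediate). Since $\args$ is finite, $\chff^i(\emptyset)$ increases to the grounded extension $E$, so the increasing limit $\mathbf{I}_*$ has accepted set $E$ and defeated set $E^+$, giving the positive facts $\{\accp(\alpha),\defp(\beta) : \alpha\in E,\ \beta\in E^+\}$ together with the $\attackp,\mathtt{arg}$ facts.

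The main obstacle is getting the \emph{false} assignments exactly right, which are read off the decreasing odd subsequence: a fact is false iff it is absent from the limit $\mathbf{I}^*$. The same recursion gives that $\mathbf{I}^*$ has accepted set $\args\setminus E^+$ and defeated set $(\args\setminus E^+)^+$. From this, $\accp(x)$ false $\iff x\in E^+$ is immediate, but $\defp(x)$ false $\iff x\in E$ requires the identity $E=\args\setminus(\args\setminus E^+)^+$; I would derive this from $E$ being a fixpoint of $\chff$, since $x\in E$ iff every attacker of $x$ is defeated iff $x$ has no attacker outside $E^+$. Reading off the three-valued model then gives true exactly on $\{\accp(\alpha):\alpha\in E\}\cup\{\defp(\beta):\beta\in E^+\}$ (plus the facts), false exactly on $\{\defp(\alpha):\alpha\in E\}\cup\{\accp(\beta):\beta\in E^+\}$, and unknown on the arguments in $\args\setminus(E\cup E^+)$, which is precisely the claimed set. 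Finally, the ``iff'' follows because the well-founded model is unique and the accepted-true atoms of the computed set recover $E$ unambiguously, so no other set of arguments produces the same model.
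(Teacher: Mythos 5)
Your proof is correct and takes essentially the same approach as the paper: although the paper recalls this AF-level statement from Dung without reproving it, its proof of the $k$-SETAF generalization (Theorem \ref{lemma:accp}) is exactly your alternating-fixpoint argument, identifying the even iterates with $\chff^{j}(\emptyset)$ and the odd iterates with accepted set $\args \setminus (\chff^{j}(\emptyset))^{+}$, and pinning down the false $\defp$-atoms via the same fixpoint identity $E = \args \setminus (\args \setminus E^{+})^{+}$ (which the paper states as: $\beta \notin (\args\setminus E^{+})^{+}$ implies $\beta \in \chff(E)=E$). Nothing essential is missing from your argument.
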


The following theorem generalizes the preceding result to $k$-SETAFs, i.e., SETAFs for which $S \defeat \beta$ implies $|S| \leq k$. 
We will use relations $\attackp_1, \ldots, \attackp_k$ to store attacks, where $\attackp_i(\alpha_1, \ldots, \alpha_i, \beta)$ specifies that there is an attack $\{\alpha_1, \ldots, \alpha_i\} \defeat \beta$. We will use the following variants of rule $r_{\defp}$, where $i$ ranges from $1$ to $k$:
$$r_{\defp}^{i}: \defp(x) \leftarrow \attackp_i(y_1, \ldots, y_i,x), \accp(y_1), \ldots, \accp(y_i)$$

\textbf{Theorem \ref{lemma:accp}.} 
\emph{Let $F=(\args,\defeat)$ be a $k$-SETAF. Then $\alpha$ is in the grounded extension of $F$ iff 
$\accp(\alpha)$ belongs to the well-founded model of the following normal logic program:}
\begin{align*}
&\{\defp(x) \leftarrow \mathtt{att}_i(y_1,.., y_i,x),  \accp(y_1),...,\accp(y_i) \mid 1 \leq  i \leq k\} \cup \{\accp(x) \leftarrow \mathtt{arg}(x), \nnot \defp(x)\} \\
&\cup \{\mathtt{arg}(\alpha) \mid \alpha \in \args\}  \cup \{\mathtt{att}_i(\alpha_1, \ldots, \alpha_i, \beta) \leftarrow \, \mid \{\alpha_1, \ldots, \alpha_i\} \defeat \beta\}
\end{align*}
\begin{proof} 
Let $\Pi_F= \{r_{\accp}, r_{\defp}^1, \ldots, r_{\defp}^k\} \cup \{\mathtt{arg}(\alpha) \mid \alpha \in \args\}  \cup \{\attackp_i(\alpha_1, \ldots, \alpha_i, \beta) \leftarrow \, \mid \{\alpha_1, \ldots, \alpha_i\} \defeat \beta\}$, and let us consider the sequence of sets produced during the computation of the well-founded model of $\Pi_F$.
A straightforward inductive argument shows that the following statements holds for every $j \geq 0$:
\begin{align*}
\mathbf{I}_{2j +2} &= S_{\mathtt{arg}} \cup S_{\attackp} \cup \{\accp(\alpha) \mid \alpha \in \chff^{j+1}(\emptyset)\} \cup \{\defp(\beta) \mid \beta \in (\chff^{j+1}(\emptyset))^+ \}\\
\mathbf{I}_{2j +1} &= S_{\mathtt{arg}} \cup  S_{\attackp} \cup \{\accp(\alpha) \mid \alpha \in \args \setminus (\chff^{j}(\emptyset))^+ \}
\cup \{\defp(\beta) \mid \beta \in (\args \setminus (\chff^{j}(\emptyset))^+)^+ \}
\end{align*}
where $S_{\mathtt{arg}}= \{\mathtt{arg}(\alpha) \mid \alpha \in \args\} $, $S_{\attackp}=\{\attackp_i(\alpha_1, \ldots, \alpha_i, \beta)\mid \{\alpha_1, \ldots, \alpha_i\} \defeat \beta\}$ and $(\chff^{0}(\emptyset))^+=\emptyset$.
Recall that these are 2-valued instances, so we list the positive facts, with all other facts being assigned false. 

Now let $E$ be the grounded extension of $F$. 
Then there is $\ell $ such that $E = \chff^{\ell}(\emptyset)$, and $\chff^{\ell+j}(\emptyset)=\chff^{\ell}(\emptyset)$ for all $j \geq 1$. 
Then we have:
\begin{align*}
\mathbf{I}_* &= S_{\mathtt{arg}} \cup  S_{\attackp} \cup \{\accp(\alpha) \mid \alpha \in \chff^{\ell}(\emptyset)\} \cup \{\defp(\beta) \mid \beta \in (\chff^{\ell}(\emptyset))^+ \}\\
&=  S_{\mathtt{arg}} \cup S_{\attackp} \cup \{\accp(\alpha) \mid \alpha \in E\} \cup \{\defp(\beta) \mid \beta \in E^+ \}\\
\mathbf{I}^* &= S_{\mathtt{arg}} \cup  S_{\attackp} \cup \{\accp(\alpha) \mid \alpha \in \args \setminus (\chff^{\ell}(\emptyset))^+ \}
\cup \{\defp(\beta) \mid \beta \in (\args \setminus (\chff^{\ell}(\emptyset))^+)^+ \}\\
&= S_{\mathtt{arg}} \cup  S_{\attackp} \cup \{\accp(\alpha) \mid \alpha \in \args \setminus E^+ \}
\cup \{\defp(\beta) \mid \beta \in (\args \setminus E^+)^+ \}
\end{align*}
Note that if $\beta \not \in (\args \setminus E^+)^+$, then for every $S \defeat \beta$, there exists $\gamma \in S$ such that $\gamma \in E^+$. It follows that $\beta \in \chff(E)=E$. We therefore have that
the well-founded model of $\Pi_F$ is equal to
$$S_{\mathtt{arg}} \cup  S_{\attackp} \cup \{\accp(\alpha), \neg \defp(\alpha) \mid \alpha \in E\} \cup \{\defp(\beta), \neg \accp(\beta) \mid \beta \in E^+ \}$$
which completes the proof. 
\end{proof}

Next we need to define rules to compute conflicts, in order to derive which attack facts should be included. We assume that we have a finite set $Q_{\mathsf{unsat}}$ of BCQs such that $\Amc$ is $\Tmc$-consistent iff no $q \in Q_{\mathsf{unsat}}$ evaluates to true on $\Amc$ (or more formally, on the interpretation $\Imc_\Amc$ based upon $\Amc$). Note that when $\Tmc$ is formulated in \dlliterhorn, such a set $Q_{\mathsf{unsat}}$ can be computed using standard query rewriting procedures. In what follows, we let $N$ be the maximal number of conjuncts in any BCQ in $Q_{\mathsf{unsat}}$. 

To be able to identify conflicts, and later repairs, we assume that each assertion has an additional final argument that contains a unique id for that assertion, e.g. $A(a,u_5), R(a,b,u_{17})$. Given a (standard) ABox $\Amc$, we will use $\Amc_{id}$ to denote a corresponding set of annotated assertions, and for every $\alpha \in \Amc$, we let $id(\alpha)$ be the id of the corresponding atom in $\Amc_{id}$. We will store the priority relation in a binary relation $\prefp$ over assertion ids, with $\prefp(id(\alpha), id(\beta))$ meaning that $\alpha \succ \beta$. We let $Q_{\mathsf{unsat}}^*$ consist of all BCQs $q^*$ that can be obtained from some $q \in Q_{\mathsf{unsat}}$ as follows: (i) order the atoms in $q$, and (ii) add a final argument $z_j$ to the $j$th atom (where $z_j$ is a fresh variable not occurring in $Q_{\mathsf{unsat}}$ and distinct from $z_k$ with $k \neq j$). 

We introduce predicates $\confp_i$ for $i=1, \ldots, N$, which will store conflicts that involve exactly $i$ assertions. Concretely, for each BCQ $q^* \in Q_{\mathsf{unsat}}^*$ with a single atom 
we have the following rule (which can be omitted if we assume no self-contradicting assertions):
$$ \confp_0(z_1) \leftarrow q^*(z_1)$$
and for every BCQ $q^* \in Q_{\mathsf{unsat}}^*$ with $k+1$ atoms, we have:
$$ \confp_{k}(z_1, \ldots, z_k, z_{k+1}) \leftarrow q^*(z_1, \ldots, z_k, z_{k+1}), \bigwedge_{\substack{0 \leq j < k\\
(z_{\ell_1}, \ldots, z_{\ell_j}, z_{\ell_{j+1}}) \in \{z_1, \ldots, z_k, z_{k+1}\}^{j+1}}} \nnot \confp_{j}(z_{\ell_1}, \ldots, z_{\ell_j}, z_{\ell_{j+1}})
 $$
Here we write $q^*(z_1, \ldots, z_k)$ to highlight the shared variables $z_j$, but $q^*$ will contain also the original variables in the BCQ $q$. Observe that negated body atoms are used to ensure subset minimality. We can then compute the attack relations by using the following rule for every $1 \leq i < N$:
$$r_{\mathtt{att}}^{i}: \attackp_i(z_1, \ldots, z_i, z_{i+1}) \leftarrow \confp_{i}(z_1, \ldots, z_i, z_{i+1}), \nnot \prefp(z_{i+1}, z_1), \ldots, \nnot \prefp(z_{i+1}, z_i)$$
where 
We generate the set of arguments (corresponding to ids of assertions that are not self-contradictory) by means of the following rules, for every concept name $A$ and role name $R$:
$$\mathtt{arg}(z_1) \leftarrow A(x,z_1), \nnot \confp_0(z_1) \qquad  \mathtt{arg}(z_1) \leftarrow R(x,y,z_1), \nnot \confp_0(z_1) $$
Let $\Pi_{\confp}$ be the preceding set of rules. We observe that $\Pi_{\confp}$ is stratified. It is not hard to see that these rules allow us to compute the conflicts, and hence the attacks in the SETAF corresponding to the PSETAF $F_{\Kmc, \succ}$. Specifically, we have:

\begin{lemma} \label{lemma:attackp}Let $F=(\args, \defeat_\succ)$ be the corresponding SETAF for the PSETAF $F_{\Kmc, \succ}$ based upon the prioritized KB $\Kmc_\succ$. The following statements are equivalent:
\begin{enumerate}
\item $\{\alpha_1, \ldots, \alpha_n\} \defeat_\succ \beta$ 
\item the atom $\attackp_n(id(\alpha_1), \ldots, id(\alpha_{n}), id(\beta))$ belongs to the minimal model of  
$\Pi_{\confp} \cup \{\gamma \leftarrow \, \mid \gamma \in \Amc_{id}\} \cup \{ \prefp(id(\alpha), id(\beta))\mid \alpha \succ \beta\}$.
\end{enumerate}
\end{lemma}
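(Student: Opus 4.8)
The plan is to reduce the claim to a correctness statement about the conflict predicates $\confp_k$ and then read off the attacks from the rule $r_{\mathtt{att}}^n$. Since $\Pi_{\confp}$ is stratified --- with each $\confp_k$ depending negatively only on the $\confp_j$ for $j<k$, the $\attackp_i$ sitting above all the $\confp_j$, and the $\mathtt{arg}$ rules above those --- its minimal model is well defined and can be computed stratum by stratum. I will work throughout with this minimal model, which I denote $M$, of the program $\Pi_{\confp} \cup \{\gamma \leftarrow \, \mid \gamma \in \Amc_{id}\} \cup \{\prefp(id(\alpha), id(\beta)) \mid \alpha \succ \beta\}$.

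The heart of the argument is the following auxiliary claim, which I would prove by induction on $k$ along the stratification: for any tuple $(z_1, \dots, z_{k+1})$ of ids, $\confp_k(z_1, \dots, z_{k+1}) \in M$ iff the $z_i$ are pairwise distinct and $\{z_1, \dots, z_{k+1}\} \in \conflicts{\Kmc}$. The two ingredients are the defining property of $Q_{\mathsf{unsat}}$ --- namely that for every $S \subseteq \Amc$, $S$ is $\Tmc$-inconsistent iff some $q \in Q_{\mathsf{unsat}}$ evaluates to true on $S$ --- together with the observation that, for a minimal inconsistent set $C$, any witnessing query must match onto all of $C$ (otherwise a proper subset would already be inconsistent), so that the corresponding id-annotated query $q^* \in Q_{\mathsf{unsat}}^*$ supplies exactly the positive body atom needed. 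For the ``if'' direction, a conflict of size $k+1$ is matched by such a $q^*$, and, since no proper subset is inconsistent, the induction hypothesis guarantees that none of the negated guards $\nnot\,\confp_j(\dots)$ (which range over all tuples drawn from $\{z_1,\dots,z_{k+1}\}$) is satisfied in $M$, so the rule fires. For the ``only if'' direction, firing of $\confp_k$ forces, via the $q^*$ body atom, that $\{z_1, \dots, z_{k+1}\}$ is $\Tmc$-inconsistent; if two of the $z_i$ coincided, the underlying set would have size at most $k$ and would therefore contain a conflict detected at a strictly lower stratum, tripping a guard, so the $z_i$ are distinct; and the guards together with the induction hypothesis rule out any proper inconsistent subset, so the set is a minimal conflict.

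Granting the auxiliary claim, the lemma follows directly. By the rule $r_{\mathtt{att}}^n$, the atom $\attackp_n(id(\alpha_1), \dots, id(\alpha_n), id(\beta)) \in M$ iff both $\confp_n(id(\alpha_1), \dots, id(\alpha_n), id(\beta)) \in M$ and $\prefp(id(\beta), id(\alpha_j)) \notin M$ for every $j$. By the auxiliary claim the first condition means $\{\alpha_1, \dots, \alpha_n, \beta\} \in \conflicts{\Kmc}$, i.e.\ $\{\alpha_1, \dots, \alpha_n\} \defeat_{\Kmc} \beta$; and since $\prefp$ holds exactly of the pairs $(id(\alpha), id(\beta))$ with $\alpha \succ \beta$, the second condition says $\beta \not\succ \alpha_j$ for every $j$. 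These are precisely the two clauses defining $S \defeat_\succ \alpha$ in the SETAF corresponding to $F_{\Kmc, \succ}$, so $\attackp_n(id(\alpha_1), \dots, id(\alpha_n), id(\beta)) \in M$ iff $\{\alpha_1, \dots, \alpha_n\} \defeat_\succ \beta$, which is the required equivalence.

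The main obstacle I anticipate lies entirely in the auxiliary claim, specifically in the bookkeeping forced by non-injective query matches: because a BCQ in $Q_{\mathsf{unsat}}$ may map several atoms to a single assertion, a $q^*$ with more than $k+1$ atoms can match a set of fewer than $k+1$ distinct assertions, and one must check that such matches never create a spurious $\confp_k$ fact. The way out is to argue that the underlying smaller inconsistent set already contains a conflict detected at a strictly lower stratum, which then satisfies one of the $\nnot\,\confp_j$ guards --- this is exactly why the guards quantify over all tuples (all orderings and all sub-selections, with repetitions allowed) drawn from $\{z_1,\dots,z_{k+1}\}$, and why $Q_{\mathsf{unsat}}^*$ is defined to contain every ordering of each query's atoms. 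Dually, the completeness half requires that every genuine conflict of size $k+1$ is witnessed by a query whose match uses exactly those $k+1$ distinct ids; verifying that the rewriting-based $Q_{\mathsf{unsat}}$ for \dlliterhorn{} furnishes such a size-matching witness is the delicate point on which the rest of the argument rests.
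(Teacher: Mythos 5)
The paper never actually proves this lemma: it is stated as an observation, prefaced by ``It is not hard to see that these rules allow us to compute the conflicts, and hence the attacks,'' with the only hint being the remark that the negated body atoms ensure subset minimality. So your argument is not paralleling an existing proof but supplying the missing one, and it does so along exactly the intended lines: evaluate the stratified program stratum by stratum; prove by a simultaneous induction on $k$ that $\confp_k$ holds of a tuple of ids iff the ids are pairwise distinct and the corresponding assertions form a conflict of size $k+1$ (soundness at stratum $k$ uses completeness at lower strata, so that the guards block both repeated-id matches and non-minimal inconsistent sets, while completeness at stratum $k$ uses soundness at lower strata, so that no guard can fire inside a minimal conflict); then read off the equivalence with $\defeat_\succ$ from the rule $r^{n}_{\mathtt{att}}$ together with the fact that $\prefp$ holds of exactly the pairs with $\alpha \succ \beta$. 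That final step is routine and you have it right, including the identification of $\confp_n(id(\alpha_1),\ldots,id(\alpha_n),id(\beta))$ with $\{\alpha_1,\ldots,\alpha_n,\beta\}\in\conflicts{\Kmc}$, i.e.\ with $\{\alpha_1,\ldots,\alpha_n\}\defeat_{\Kmc}\beta$.

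The ``delicate point'' you flag at the end is, however, a genuine issue --- not just in your write-up but in the lemma as the paper states it. The hypothesis placed on $Q_{\mathsf{unsat}}$ (that $\Amc$ is $\Tmc$-consistent iff no $q\in Q_{\mathsf{unsat}}$ holds in $\Amc$) does not imply the size-matching witness property that your completeness direction needs, and without it the equivalence fails. Concretely, take $\Tmc=\{A\sqcap\exists R\sqsubseteq\neg\exists R\}$ and $Q_{\mathsf{unsat}}=\{\exists x,y_1,y_2.\,A(x)\wedge R(x,y_1)\wedge R(x,y_2)\}$: this single query does characterize inconsistency, but for $\Amc=\{A(a),R(a,b)\}$ the conflict $\{A(a),R(a,b)\}$ is matched only non-injectively, so the program derives only $\confp_2$ facts with a repeated id (their guards mention $\confp_0$ and $\confp_1$, and no $\confp_1$ fact is ever derivable since there is no two-atom query), while $\attackp_1(id(A(a)),id(R(a,b)))$, which corresponds to the genuine attack $\{A(a)\}\defeat_\succ R(a,b)$, is never derived; statement (1) of the lemma then holds while statement (2) fails. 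The repair is what you gesture at: additionally require $Q_{\mathsf{unsat}}$ to be closed under unification (condensation) of atoms. This closure is finite, preserves the defining property (a condensed query entails the original one, so it is still true only on inconsistent ABoxes), and guarantees that any witness query for a minimal inconsistent set $C$ can be condensed along its match into a query with exactly $|C|$ atoms matching $C$ bijectively; for \dlliterhorn this is exactly what PerfectRef-style rewriting provides via its reduce step. With that assumption made explicit, your induction closes and the proof is complete.
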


Intuitively, we want to construct a program that first computes the minimal model of the stratified program $\Pi_{\confp}$, then uses the attack and conflict facts from this model as input for the `argumentation' program $$\Pi_{\mathtt{arg}}=\{r_{\accp}, r_{\defp}^1, \ldots, r_{\defp}^N\}$$ 
 However, if we simply apply well-founded semantics to the combined program, we might not get the desired result: while we are `waiting' for the computation of the stratified rules to finish, there could be some incorrectly derived $\confp_{i}$ facts, which could lead to wrongly derived $\attackp$ and $\defp$ facts, in turn wrongly eliminating some $\accp$ facts. 

The following lemma provides a means of safely combining the two programs to achieve the desired result. The basic idea is to delay the application of rules in the second program until the initial stratified program has stabilized. We note that the lemma is tailored to our purposes, and some of the hypotheses could be relaxed at the cost of a more involved proof. 

\begin{lemma}\label{lem:combine}
Let $\Pi_1$ and $\Pi_2$ be programs verifying the following conditions: (i) $\Pi_1$ is stratified and does not use any predicate / fact that occurs in the head of a rule from $\Pi_2$,  
and (ii) $\Pi_2$ may only use predicates / facts from $\Pi_1$ in the body of rules and in positive form. 
Then we can construct a program $\Pi_{1 \rightarrow 2}$ such that the following models contain precisely the same (positive) facts over the relations in $\Pi_2$:
\begin{itemize}
\item the well-founded model of $\Pi_{1\rightarrow 2}$
\item the well-founded model of $\Pi_2 \cup \{\alpha \leftarrow \, \mid \alpha \in mm(\Pi_1)\}$
\end{itemize}
\end{lemma}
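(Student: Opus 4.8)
The plan is to make the informal ``delay'' idea precise by guarding every rule of $\Pi_2$ behind a fresh readiness signal that can only become true once the stratified base $\Pi_1$ has converged to $mm(\Pi_1)$. Concretely, I fix a stratification $\Pi_1^{1}, \ldots, \Pi_1^{p}$ of $\Pi_1$ and set $\Pi_{1 \rightarrow 2} = \Pi_1 \cup \widehat{\Pi_2} \cup \Delta$, where $\widehat{\Pi_2}$ is obtained from $\Pi_2$ by adding the atom $\mathtt{ready}$ positively to the body of every rule, and $\Delta$ is a small auxiliary program that defines $\mathtt{ready}$ through a length-$2p$ chain of fresh nullary atoms $a_0 \leftarrow{}$, $a_i \leftarrow \nnot a_{i-1}$ (for $1 \leq i \leq 2p$), together with $\mathtt{ready} \leftarrow a_{2p}$. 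The point of using negation in $\Delta$ is that, unlike positive rules, it advances the alternating fixpoint one step at a time: a direct computation shows that the even-indexed atom $a_{2p}$ is true in the well-founded model and enters the increasing (under-approximating) sequence $(\mathbf{I}_{2k})$ precisely at step $2p+2$, i.e.\ after the step $\mathbf{I}_{2p}$ at which the stratified base has stabilized. Since the construction is allowed to depend on $\Pi_1$, the constant $p$ is available.

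I would then track the alternating fixpoint sequence $(\mathbf{I}_j)_{j \geq 0}$ of $\Pi_{1 \rightarrow 2}$ and compare it, on the predicates of $\Pi_2$, to the sequence $(\mathbf{J}_j)_{j \geq 0}$ of $\Pi_2 \cup \{\alpha \leftarrow{} \mid \alpha \in mm(\Pi_1)\}$. The first invariant, proved by induction on $j$, is that the restriction of $\mathbf{I}_j$ to the predicates of $\Pi_1$ equals the $j$-th instance of the alternating fixpoint of $\Pi_1$ alone, and likewise for the $a_i$; this uses hypothesis (i), which guarantees that neither the guarded $\Pi_2$-rules nor $\Delta$ ever occur in, or feed back into, the derivation of a $\Pi_1$-fact, so those facts are computed exactly as in isolation. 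Consequently the $\Pi_1$-part of $\mathbf{I}_j$ equals $mm(\Pi_1)$ for all $j \geq 2p$, and $\mathtt{ready}$ is true from even step $2p+2$ on. Hypothesis (ii) is then what lets me conclude that, once the $\Pi_1$-predicates have settled to their correct two-valued values, the reduct of the guarded rules of $\widehat{\Pi_2}$ with respect to $\mathbf{I}_j$ is, modulo the satisfied guard $\mathtt{ready}$, literally the reduct of $\Pi_2 \cup \{\alpha \leftarrow{} \mid \alpha \in mm(\Pi_1)\}$ with respect to the $\Pi_2$-part of $\mathbf{I}_j$: since $\Pi_2$ reads $\Pi_1$-predicates only positively and in bodies, frozen correct base facts act exactly like the added base facts.

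The crux is to show that the two well-founded models agree on all $\Pi_2$-facts, i.e.\ that the limits $\mathbf{I}_*$ and $\mathbf{I}^*$ restricted to $\Pi_2$ coincide with $\mathbf{J}_*$ and $\mathbf{J}^*$. For the under-approximating limit this is easy: $\mathtt{ready}$ is false in every even instance before step $2p+2$, so no guarded $\Pi_2$-fact appears there, and from step $2p+2$ onward the even subsequence increases under exactly the $\Pi_2$-operator on the frozen base, yielding $\mathbf{J}_*$. The delicate case is $\mathbf{I}^*$, because $a_{2p}$ (hence $\mathtt{ready}$) is true in the over-approximating odd instances already from step $1$, so the guarded rules fire early on an over-approximated base and may over-derive $\Pi_2$-facts. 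Here I would use a monotonicity/sandwiching argument: from odd step $2p+1$ on the base over-approximation has also stabilized and $\mathtt{ready}$ is true, so the odd subsequence of $\Pi_{1 \rightarrow 2}$, restricted to $\Pi_2$, is generated by iterating the same monotone double-step operator $\Phi$ that produces $(\mathbf{J}_{2k+1})$; starting from an over-approximation $\mathbf{X}_0$ with $\mathbf{J}^* \leq \mathbf{X}_0 \leq \mathbf{J}_1$ and applying $\Phi^k$, monotonicity gives $\mathbf{J}^* = \Phi^k(\mathbf{J}^*) \leq \Phi^k(\mathbf{X}_0) \leq \mathbf{J}_{2k+1}$, and passing to the limit squeezes $\lim_k \Phi^k(\mathbf{X}_0)$ to $\mathbf{J}^*$.

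I expect this last sandwiching to be the main obstacle, and it splits into two subtasks: verifying that the over-derivations produced before the base freezes really are transient, i.e.\ that $\mathbf{J}^* \leq \mathbf{X}_0 \leq \mathbf{J}_1$ holds on the $\Pi_2$-predicates (the lower bound because the odd instances always dominate the true facts, the upper bound because the maximal reduct $\mathbf{J}_1$ over-approximates any over-approximation), and then applying monotonicity of $\Phi$ together with finiteness to force convergence to the same fixpoint. The remaining work is routine bookkeeping: checking that $\mathtt{ready}$, being a fresh atom occurring only positively in the guarded bodies, is genuinely inert beyond gating activation, and that the fixed two-step parity of the well-founded computation lines up with the $2p$-step stabilization bound of the stratified base so that the offsets in the two inductions match.
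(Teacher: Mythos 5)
Your proposal is correct in its essentials and follows the same core strategy as the paper: guard every $\Pi_2$-rule behind a fresh gadget that takes $2p$ alternating-fixpoint steps to stabilize, so that $\Pi_2$ only "sees" the converged $mm(\Pi_1)$. The difference is the guard design, and it changes what must be proved. The paper guards each rule with the mixed pattern $\nnot\,\oddp(0), \oddp(1), \nnot\,\oddp(2), \ldots, \oddp(2p-1), \nnot\,\oddp(2p)$ over an $\oddp/\nextp$ chain, so the negative guard literals get the guarded rules deleted from the reducts taken w.r.t.\ the early over-approximating (odd) instances; your guard is a single positive atom $\mathtt{ready}$, which survives and is derivable in those reducts, so your guarded rules genuinely do fire early on an over-approximated base --- the complication you correctly identify and then handle by sandwiching. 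Your handling can be made to work, but it is more roundabout than necessary: since $\mathtt{ready}$ is not derivable in any reduct taken w.r.t.\ an odd instance before step $2p+2$, the even instances carry no $\Pi_2$-facts through step $2p$; hence the odd instance at step $2p+1$ is computed from a clean, stabilized base, so its $\Pi_2$-part equals exactly $\mathbf{J}_1 = T(\emptyset)$, where $T$ is the one-step operator of the reference program $\Pi_2 \cup \{\alpha \leftarrow\ \mid \alpha \in mm(\Pi_1)\}$. From there the combined sequence restricted to the $\Pi_2$-predicates is the reference sequence shifted by $2p$, which yields both limits at once --- this exact-shift conclusion is also how the paper finishes.

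Two further points. First, the one genuinely imprecise step in your write-up is the lower bound $\mathbf{J}^* \leq \mathbf{X}_0$, justified by "the odd instances always dominate the true facts": what you need is domination of the \emph{not-false} facts of the \emph{reference} program, which is not a generic property of over-approximating instances of the combined program; it follows instead from the emptiness of the $\Pi_2$-part of $\mathbf{I}_{2p}$ together with antitonicity of $T$ (namely $\mathbf{X}_0 = T(\mathbf{I}_{2p}|_{\Pi_2}) \geq T(\mathbf{J}_*) = \mathbf{J}^*$), both of which you have established. Second, your explicit treatment of early firing is actually a point in your favour relative to the paper: the paper asserts its guarded rules "cannot fire" for $i < 2p$ because the guard is not satisfied in $\mathbf{J}_i$, but satisfaction of a mixed positive/negative guard in a single instance is not the same as firing in the reduct computation (which evaluates negative literals in $\mathbf{J}_i$ and positive ones in $\mathbf{J}_{i+1}$); the paper's guarded rules can in fact fire when early odd instances are computed, and its construction is ultimately rescued by the same flushing argument described above. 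So both gadgets work, for the same underlying reason, and your proof makes that reason explicit where the paper elides it.
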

\begin{proof}
Suppose that $\Pi_1$ has a stratification with $p$ levels. 
Consider the following auxiliary program:
$$\Pi_p = \{\oddp(x) \leftarrow \nextp(y,x), \nnot \oddp(y)\} \cup \{\nextp(i, i+1) \leftarrow \, \mid 0 \leq i <2p\}$$
We assume that the predicate names $\oddp, \nextp$ and constants $0, 1, \ldots, 2p$  do not occur in either $\Pi_1$ or $\Pi_2$ (otherwise, we can choose other fresh names / constants). 
A straightforward induction shows that the following instances will be produced during the construction of the well-founded model of $\Pi_p$:
\begin{align*}
\mathbf{I}_{2j +2} &= \{\oddp(i) \mid 1\leq i\leq 2p, i < 2j+2, \text{ and } i \text{ is odd } \} \cup   \{\nextp(i, i+1) \mid 0 \leq i <2p\}\\
\mathbf{I}_{2j +1} &=  
 \{\oddp(i) \mid 1\leq i\leq 2p \text{ and either } i \text{ is odd or } i > 2j+1 \}
 \cup \{\nextp(i, i+1) \mid 0 \leq i <p\}
\end{align*}
In particular, this means that the instance $\mathbf{I}_{2p}$, and every later instance in the sequence, contains 
$$\nnot \oddp(0), \oddp(1), \nnot \oddp(2), \ldots,  \oddp(2p-1), \nnot \oddp(2p)$$
but none of the earlier instances contains precisely this combination of positive and negative facts. 

Let us now modify the rules in $\Pi_2$ as follows: add $\nnot \oddp(0), \oddp(1), \nnot \oddp(2), \ldots, \oddp(2p-1), \nnot \oddp(2p)$ to the body of each rule $r$ (denoting the result $r'$). Let $\Pi_2'$ be the resulting set of rules. 

Now consider the program $\Pi_{1\rightarrow 2}=\Pi_1 \cup \Pi_2' \cup \Pi_p$ and let $\mathbf{J_0}, \mathbf{J}_1, \mathbf{J}_2, \ldots$ be the sequence of instances obtained during the construction of the well-founded model of $grnd(\Pi_1 \cup \Pi_2' \cup \Pi_p)$. 
We consider the three parts of the program:
\begin{itemize}
\item Sub-program $\Pi_1$: Due to the assumptions on $\Pi_1$, and the safety condition on rules that ensures all head variables occur in a positive body atom, it is not hard to show that for every $i \geq 2p$, $\mathbf{J}_i$ contains the same positive facts as the minimal model of $\Pi_1$, when restricted to predicates and facts from $\Pi_1$. Specifically, we have:
\begin{itemize}
\item every $\alpha \in mm(\Pi_1)$ belongs to $\mathbf{J}_i$
\item if $\alpha \in \mathbf{J}_i$ and $\alpha$ uses a predicate from $\Pi_1$, then $\alpha \in mm(\Pi_1)$
\end{itemize}
\item Sub-program $\Pi_p$: The rule $\oddp(x) \leftarrow \nextp(y,x), \nnot \oddp(y)$ is guarded by the predicate $\nextp$ which only occurs in $\Pi_p$ and which can only match the concrete facts given in $\Pi_p$. It follows that $\mathbf{J}_i$
and $\mathbf{I}_i$ contain precisely the same positive $\oddp$ and $\nextp$ facts for every $i \geq 0$. 
In particular, the set of literals $\{\nnot \oddp(0), \oddp(1), \nnot \oddp(2), \ldots,  \oddp(2p-1), \nnot \oddp(2p)\}$ belongs to 
$\mathbf{J}_i$ for every $i \geq 2p$, and it is not contained in $\mathbf{J}_i$ for $i < 2p$. 
\item Sub-program $\Pi_2'$:  Every rule in $\Pi_2'$ contains $\nnot \oddp(0), \oddp(1), \nnot \oddp(2), \ldots, \oddp(2p-1),  \nnot\oddp(2p)$ in its body. It follows from the preceding point that this set of literals is not contained in $\mathbf{J}_i$ for $i < 2p$. 
Thus, for $i < 2p$, the rules of $\Pi_2'$ cannot fire. For every $i \geq 2p$, the set of literals $\{\nnot \oddp(0), \oddp(1), \nnot \oddp(2), \ldots,  \oddp(2p-1), \nnot\oddp(2p)\}$ belongs to $\mathbf{J}_i$, so the rules in $\Pi_2'$ behave henceforth like the original rules in $\Pi_2$. Moreover, as noted above, when $i \geq 2p$, the instance $\mathbf{J}_i$ contains the same facts
as $mm(\Pi_1)$ w.r.t.\ predicates / facts from $\Pi_1$, so the same rule applications that were possible using $\{\alpha \leftarrow \, \mid \alpha \in mm(\Pi_1)\}$ can be reproduced in $\Pi_{1\rightarrow 2}$ starting from timepoint $2p$. Conversely, due to the safety condition on $\Pi_2$, all applications of rules from $\Pi_2'$ used to generate the instances $\mathbf{J}_{2p}, \mathbf{J}_{2p+1},  \ldots$ can be reproduced by the program $\Pi_2 \cup \{\alpha \leftarrow \, \mid \alpha \in mm(\Pi_1)\}$. It follows that the well-founded model of $\Pi_{1\rightarrow 2}$ and the well-founded model of $\Pi_2 \cup \{\alpha \leftarrow \, \mid \alpha \in mm(\Pi_1)\}$ contain precisely the same positive facts when restricted  to  the relations / facts in $\Pi_2$. \qedhere
\end{itemize}
\end{proof}

It is easily verified that $\Pi_{\confp}$ and $\Pi_{\mathtt{arg}}$ satisfy the conditions of programs $\Pi_1$ and $\Pi_2$ of the preceding lemma. We can therefore combine Theorem \ref{lemma:accp} and Lemmas \ref{lemma:attackp} and \ref{lem:combine} to obtain the following result. While stated for \dlliterhorn, the result holds for any constraint / ontology language for which conflicts can be identified by a finite set of CQs that depends solely on the ontology / constraints.

\Wfhorn*

A natural under-approximation of the grounded extension is obtained by considering $\chff^d(\emptyset)$ for some fixed number $d$. We describe how to adapt the preceding program to compute such an approximation. Essentially, we need to replace 
the rules $r_{\accp}, r_{\defp}^{i}$ by the following stratified set of rules (recall that $N$ is the maximal cardinality of conflicts):
\begin{align*}
\accp_1(x) &\leftarrow \mathtt{arg}(x)&&\\
\defp_1(x) &\leftarrow \attackp_i(y_1, \ldots, y_i, x), \accp_{1}(y_1), \ldots, \accp_{1}(y_i) &&1 \leq i < N\\
\accp_{j+1}(x) &\leftarrow \mathtt{arg}(x), \nnot \defp_{j}(x) && 1 \leq j < 2d \\
\defp_{j+1}(x) &\leftarrow \attackp_i(y_1, \ldots, y_i, x), \accp_{j+1}(y_1), \ldots, \accp_{j+1}(y_i) &&1 \leq j < 2d, 1 \leq i < N
\end{align*}
It is easily verified that the preceding rules behave exactly like $r_{\accp}, r_{\defp}^{i}$ for the first $2d$ iterations. To make this more formal, let $\Pi_{\mathtt{arg}}^d$ be the preceding rules. Consider some ABox $\Amc$, a priority relation $\succ$ between facts in $\Amc$, and define $S_{\Amc, \succ}$ as follows:
$$S_{\Amc, \succ} =  \{\gamma \leftarrow \, \mid \gamma \in \Amc_{id}\}
\cup \{ \prefp(id(\alpha), id(\beta)) \leftarrow \, \mid \alpha \succ \beta\} $$
Now let $\mathbf{K}_0, \mathbf{K}_1, \ldots$ and $\mathbf{L}_0, \mathbf{L}_1, \ldots$ be respectively the sequence of instances generated during the computation of the well-founded models of $\{r_{\accp}\} \cup \{r_{\defp}^{i}, 1 \leq i \leq N\} \cup S_{\Amc, \succ}$ and $\Pi_{\mathtt{arg}}^d \cup S_{\Amc, \succ}$ respectively. Then for every $1 \leq j \leq 2d$, we have the following:
\begin{itemize}
\item $\accp_j(\alpha) \in \mathbf{L}_j$ iff $\accp(\alpha) \in  \mathbf{K}_j$
\item $\defp_j(\alpha) \in \mathbf{L}_j$ iff $\defp(\alpha) \in  \mathbf{K}_j$
\end{itemize}
Thus, by combining the $\Pi_{\mathtt{arg}}^d $ with the program $\Pi_{\confp}$, we obtain a stratified program whose minimal model contains $\accp_{2d}(id(\alpha))$ iff $\alpha \in \Gamma_{F_{\Kmc, \succ}}^d(\emptyset)$. Moreover, by examining the program, we remark that there is no recursion inside the strata of the program. It is known that every non-recursive stratified datalog query can be equivalently expressed as a first-order query, essentially by `unfolding' the rules starting from a chosen goal predicate. We thus obtain:

\begin{theorem}
For every \dlliterhorn\ TBox $\Tmc$ and $d >0$, there exists a first-order query $\Psi_{\Tmc,d}$ with one free variable 
such that for every ABox $\Amc$, priority relation $\succ$ for $\Kmc=\tup{\Tmc, \Amc}$, and assertion $\alpha$, the following are equivalent:
\begin{itemize}
\item $\alpha$ belongs to $\Gamma_{F_{\Kmc, \succ}}^d(\emptyset)$ 
\item $\Psi_{\Tmc,d}(id(\alpha))$ evaluates to true over the finite first-order interpretation corresponding to $\Amc_{id} 
\cup \{ \prefp(id(\alpha), id(\beta))\mid \alpha \succ \beta\}$. 
\end{itemize}
\end{theorem}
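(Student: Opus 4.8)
The plan is to reuse the stratified, non-recursive program built just above and appeal to the classical fact that non-recursive stratified Datalog$^\neg$ is expressible in first-order logic. Concretely, fix $\Tmc$ and $d$, and let $\Pi_{\Tmc,d} = \Pi_{\confp} \cup \Pi_{\mathtt{arg}}^d$ be the union of the conflict/attack rules and the bounded argumentation rules. Both components depend only on $\Tmc$ (via the rewriting set $Q_{\mathsf{unsat}}$, hence on $N$) and on $d$, not on $\Amc$ or $\succ$; the latter enter only through the fact set $S_{\Amc,\succ}$. I would take $\accp_{2d}$ as the goal predicate and define $\Psi_{\Tmc,d}(z)$ to be the first-order formula obtained by unfolding $\accp_{2d}(z)$ through $\Pi_{\Tmc,d}$ down to the base predicates, which are exactly the concept/role relations of $\Amc_{id}$ together with $\prefp$.

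First I would record correctness of the program. By Lemma \ref{lemma:attackp}, the minimal model of $\Pi_{\confp}\cup S_{\Amc,\succ}$ computes precisely the attacks $\attackp_i$ of the SETAF corresponding to $F_{\Kmc,\succ}$. The rules $\Pi_{\mathtt{arg}}^d$ then replay the first $2d$ stages of the characteristic-function iteration: as already observed via the correspondence $\accp_j(\alpha)\in\mathbf L_j \Leftrightarrow \accp(\alpha)\in\mathbf K_j$ (and likewise for $\defp$), together with the stage analysis from the proof of Theorem \ref{lemma:accp}, the minimal model of $\Pi_{\Tmc,d}\cup S_{\Amc,\succ}$ contains $\accp_{2d}(id(\alpha))$ iff $\alpha\in\Gamma_{F_{\Kmc,\succ}}^d(\emptyset)$. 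Thus it suffices to express the goal predicate $\accp_{2d}$ in first-order logic over the data.

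Next I would carry out the unfolding. Because every predicate of $\Pi_{\Tmc,d}$ is indexed so that it is defined only in terms of strictly lower strata — the $\confp_k$ refer only to $\confp_j$ with $j<k$, each $\attackp_i$ to $\confp_i$ and $\prefp$, and $\accp_{j+1}/\defp_{j+1}$ to $\accp_{\le j}/\defp_{\le j}$ — the predicate-dependency graph is acyclic, so repeatedly substituting each intensional atom by the disjunction of its existentially closed rule bodies terminates and yields a finite formula mentioning only base relations. A negated intensional subgoal $\nnot\,\mathtt p(\bar z)$ is replaced by the classical negation of the first-order formula produced for $\mathtt p(\bar z)$; this substitution is sound precisely because the program is non-recursive and stratified, so the extension of $\mathtt p$ in the minimal model is completely fixed by the lower strata and coincides with closed-world evaluation. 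The safety (range-restriction) of all rules guarantees that $\Psi_{\Tmc,d}$ is domain-independent and hence a legitimate first-order query evaluable over the finite structure $S_{\Amc,\succ}$. Chaining the two equivalences gives $\Psi_{\Tmc,d}(id(\alpha))$ true iff $\accp_{2d}(id(\alpha))$ is in the minimal model iff $\alpha\in\Gamma_{F_{\Kmc,\succ}}^d(\emptyset)$.

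I expect the only delicate point to be justifying the treatment of negation under the unfolding, i.e.\ that replacing a negated intensional atom by the negation of its unfolded formula faithfully reflects stratified negation-as-failure. This is exactly where non-recursiveness is essential: it rules out the pathologies that make general Datalog$^\neg$ inexpressible in first-order logic, and it lets one invoke the standard equivalence of minimal and well-founded models for stratified programs together with termination of unfolding. The remaining bookkeeping — verifying acyclicity of the dependency graph and the range-restriction of the generated query — is routine given the explicit shape of $\Pi_{\confp}$ and $\Pi_{\mathtt{arg}}^d$.
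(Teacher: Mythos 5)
Your proposal matches the paper's own argument essentially step for step: the paper likewise combines $\Pi_{\confp}$ with the bounded, indexed rules $\Pi_{\mathtt{arg}}^d$, observes via Lemma \ref{lemma:attackp} and the $\mathbf{K}_j$/$\mathbf{L}_j$ stage correspondence that the resulting stratified program's minimal model contains $\accp_{2d}(id(\alpha))$ iff $\alpha\in\Gamma_{F_{\Kmc,\succ}}^d(\emptyset)$, notes that the program is non-recursive, and then invokes the standard unfolding of non-recursive stratified Datalog$^\neg$ into a first-order query with goal predicate $\accp_{2d}$. Your additional remarks on the acyclic predicate-dependency graph, the treatment of negated intensional atoms, and range-restriction are just a more explicit spelling-out of the same route, so the proof is correct and takes essentially the paper's approach.
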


\section{Comparison with Related Work}

Let $\trianglerighteq$ be a partial preorder (i.e. a reflexive and transitive binary relation) over the ABox $\Amc$ of a KB $\Kmc$.  
We show that the preferred repair $\mi{Partial}_{PR}(\Kmc,\trianglerighteq)$ defined in \cite{DBLP:conf/ksem/BelabbesB19} is precisely the intersection of the completion-optimal repairs of $\Kmc_{\succ_\trianglerighteq}$ where $\alpha\succ_\trianglerighteq\beta$ iff $\{\alpha,\beta\}\subseteq C\in\conflicts{\Kmc}$ and $\alpha\triangleright\beta$ (i.e. $\alpha\trianglerighteq\beta$ and $\beta\ntrianglerighteq\alpha$). 

We first recall the definition of $\mi{Partial}_{PR}(\Kmc,\trianglerighteq)$. Observe that given a total preorder $\geq$ over $\Amc$, $\succ_\geq$ is score-structured, so the three families of optimal repairs collapse. We denote by $\mi{OptRep}(\Kmc, \geq)$ the set $\creps{\Kmc_{\succ_\geq}}=\greps{\Kmc_{\succ_\geq}}=\preps{\Kmc_{\succ_\geq}}$. The repairs in $\mi{OptRep}(\Kmc, \geq)$ correspond to the preferred repairs used in \cite{DBLP:conf/ksem/BelabbesB19}, which are the $\subseteq_P$-repairs of \cite{DBLP:conf/aaai/BienvenuBG14} defined with the priority levels naturally associated with $\geq$.

\begin{definition}
Let $\geq$ be a total preorder that extends $\trianglerighteq$, i.e. such that for every $\alpha,\beta\in\Amc$, if $\alpha\trianglerighteq\beta$, then $\alpha\geq \beta$, and if $\alpha\triangleright\beta$, then $\alpha> \beta$.  
The intersection of preferred repairs associated with $(\Kmc,\geq)$ is $\mi{IPR}(\Kmc,\geq) = \bigcap_{\Amc' \in \mi{OptRep}(\Kmc, \geq)}\Amc'$.

The preferred repair associated with $(\Kmc,\trianglerighteq)$ is:
$\mi{Partial}_{PR}(\Kmc,\trianglerighteq)= \bigcap_{\geq\text{total extension of }\trianglerighteq} \mi{IPR}(\Kmc,\geq)$.
\end{definition}

\begin{theorem}
For every KB $\Kmc=\tup{\Tmc,\Amc}$ and partial preorder $\trianglerighteq$ over $\Amc$, 
$\mi{Partial}_{PR}(\Kmc,\trianglerighteq)=\bigcap_{\Amc' \in \creps{\Kmc, \succ_\trianglerighteq}}\Amc'$, where $\alpha\succ_\trianglerighteq\beta$ iff $\{\alpha,\beta\}\subseteq C\in\conflicts{\Kmc}$ and $\alpha\triangleright\beta$.
\end{theorem}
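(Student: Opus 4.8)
The plan is to unfold the nested definitions and compare the two quantities \emph{at the level of intersections}, not sets. Writing $\geq$ for an arbitrary total preorder extending $\trianglerighteq$, we have $\mi{Partial}_{PR}(\Kmc,\trianglerighteq)=\bigcap_{\geq}\mi{IPR}(\Kmc,\geq)=\bigcap_{\geq}\bigcap_{\Amc''\in\mi{OptRep}(\Kmc,\geq)}\Amc''$ with $\mi{OptRep}(\Kmc,\geq)=\creps{\Kmc_{\succ_\geq}}$, so the goal is to show this equals $\bigcap_{\Amc'\in\creps{\Kmc_{\succ_\trianglerighteq}}}\Amc'$. The key preliminary fact is that, since $\alpha\triangleright\beta$ forces $\alpha>\beta$ whenever $\geq$ extends $\trianglerighteq$, we get $\succ_\trianglerighteq\subseteq\succ_\geq$; hence every completion of $\succ_\geq$ is a completion of $\succ_\trianglerighteq$, and so $\mi{OptRep}(\Kmc,\geq)=\creps{\Kmc_{\succ_\geq}}\subseteq\creps{\Kmc_{\succ_\trianglerighteq}}$. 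Taking the union over all $\geq$ gives $\bigcup_\geq\mi{OptRep}(\Kmc,\geq)\subseteq\creps{\Kmc_{\succ_\trianglerighteq}}$, and since intersecting over a larger family yields a smaller set, this immediately delivers the inclusion $\bigcap_{\Amc'\in\creps{\Kmc_{\succ_\trianglerighteq}}}\Amc'\subseteq\mi{Partial}_{PR}(\Kmc,\trianglerighteq)$.

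The substantial direction is $\mi{Partial}_{PR}(\Kmc,\trianglerighteq)\subseteq\bigcap_{\Amc'\in\creps{\Kmc_{\succ_\trianglerighteq}}}\Amc'$, and here one must stress that the naive set equality $\creps{\Kmc_{\succ_\trianglerighteq}}=\bigcup_\geq\mi{OptRep}(\Kmc,\geq)$ is \emph{false}: a completion-optimal repair of $\Kmc_{\succ_\trianglerighteq}$ need not be optimal for any single total extension, because its witnessing completion of $\succ_\trianglerighteq$ may not be realizable as $\succ_\geq$. I would therefore argue the inclusion pointwise and contrapositively: fix a completion-optimal repair $\Amc'$ and an assertion $\alpha\notin\Amc'$, and exhibit a total extension $\geq$ of $\trianglerighteq$ and an optimal repair $\Amc''\in\mi{OptRep}(\Kmc,\geq)$ with $\alpha\notin\Amc''$. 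This yields $\alpha\notin\mi{IPR}(\Kmc,\geq)\supseteq\mi{Partial}_{PR}(\Kmc,\trianglerighteq)$, and as $\alpha$ and $\Amc'$ are arbitrary we obtain $\mi{Partial}_{PR}(\Kmc,\trianglerighteq)\subseteq\Amc'$ for every $\Amc'\in\creps{\Kmc_{\succ_\trianglerighteq}}$.

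To build $\geq$ I would invoke the greedy characterization of completion-optimal repairs recalled in Section~\ref{sec:reps}: $\Amc'$ is produced by the greedy procedure on some completion $\succ'$ of $\succ_\trianglerighteq$, considering each assertion only after all assertions that $\succ_\trianglerighteq$-dominate it. As $\alpha$ was not added, it is blocked by a conflict $C\in\conflicts{\Kmc}$ with $\alpha\in C$ and $C\setminus\{\alpha\}\subseteq\Amc'$; since each $\gamma\in C\setminus\{\alpha\}$ was considered before $\alpha$ and $\{\alpha,\gamma\}$ is a conflicting pair, $\alpha$ cannot $\succ_\trianglerighteq$-dominate $\gamma$, i.e.\ $\alpha\ntriangleright\gamma$. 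I then extend $\trianglerighteq$ to a total preorder $\geq$ subject to the extra demands $\gamma\geq\alpha$ for all $\gamma\in C\setminus\{\alpha\}$; these are consistent because, $\alpha\ntriangleright\gamma$ holding for each such $\gamma$, the new constraints form a star pointing into $\alpha$ and create no strict cycle with $\trianglerighteq$. Under such a $\geq$, $\alpha$ $\succ_\geq$-dominates no element of $C\setminus\{\alpha\}$, which is what is needed to keep $\alpha$ excludable.

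The main obstacle is the final step. Because a conflict is \emph{minimal}, $\alpha$ is blocked only when \emph{all} of $C\setminus\{\alpha\}$ are simultaneously retained, so I must produce a single optimal (equivalently, by Theorem~\ref{tstable}, stable) repair of $\Kmc_{\succ_\geq}$ that keeps the whole blocking set $C\setminus\{\alpha\}$; a naive re-run of the greedy under $\geq$ could drop some $\gamma\in C\setminus\{\alpha\}$ through an unrelated higher-priority conflict. I expect to resolve this by choosing $\geq$ so that $C\setminus\{\alpha\}$ sits as high as $\trianglerighteq$ allows and then arguing, by induction on the priority levels of $\geq$ (or via an explicit Pareto-improvement exchange), that $\alpha$ can always be excluded from some optimal repair. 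This combinatorial exclusion argument, combined with the realization that one cannot prove the statement through the set-level equality, is the crux of the whole proof.
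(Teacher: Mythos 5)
Your first inclusion is correct and coincides with the paper's argument, and your structural claim that the set-level equality $\creps{\Kmc_{\succ_\trianglerighteq}}=\bigcup_{\geq}\mi{OptRep}(\Kmc,\geq)$ can fail is also correct --- and it matters more than you suggest. The paper's own proof of the hard inclusion goes through precisely that route: for each completion $\succ'$ of $\succ_\trianglerighteq$ it posits a total preorder $\geq$ containing $\trianglerighteq\cup\succ'$ and preserving the strict part of $\trianglerighteq$, and concludes $\Amc_{\succ'}\in\mi{OptRep}(\Kmc,\geq)$. But $\trianglerighteq\cup\succ'$ may contain a cycle passing through a strict edge of $\trianglerighteq$, in which case no such $\geq$ exists; so the paper's proof has a gap at exactly the point your observation identifies.

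The genuine problem with your proposal is the step you explicitly defer (producing, for your chosen $\geq$, an optimal repair that retains the blocking set): it is not merely the crux, it is impossible, because the stated equality is false. Consider $\Amc=\{\alpha_0,\gamma,\delta,\epsilon\}$ with conflicts $\{\alpha_0,\gamma\}$, $\{\gamma,\delta\}$, $\{\delta,\epsilon\}$ (a \dllitecore KB with TBox $A\sqsubseteq\neg G$, $G\sqsubseteq\neg D$, $D\sqsubseteq\neg E$), and let $\trianglerighteq$ be the reflexive closure of $\delta\triangleright\gamma$ and $\alpha_0\triangleright\epsilon$, so that $\succ_\trianglerighteq=\{(\delta,\gamma)\}$. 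The greedy order $\epsilon,\delta,\gamma,\alpha_0$ respects $\succ_\trianglerighteq$ and yields the completion-optimal repair $\{\gamma,\epsilon\}$, so $\alpha_0\notin\bigcap_{\Amc'\in\creps{\Kmc_{\succ_\trianglerighteq}}}\Amc'$. Yet $\alpha_0\in\mi{Partial}_{PR}(\Kmc,\trianglerighteq)$: for any total preorder $\geq$ extending $\trianglerighteq$, either $\alpha_0>\gamma$, in which case $\alpha_0\succ_\geq\gamma$ and (since $\{\alpha_0,\gamma\}$ is the only conflict containing $\alpha_0$) every optimal repair contains $\alpha_0$; or $\gamma\geq\alpha_0$, in which case the forced chain $\delta>\gamma\geq\alpha_0>\epsilon$ gives $\delta\succ_\geq\gamma$ and $\delta\succ_\geq\epsilon$, so $\delta$ lies in every optimal repair, hence $\gamma$ in none, hence $\alpha_0$ in all by maximality of repairs. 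Thus $\alpha_0\in\mi{IPR}(\Kmc,\geq)$ for every admissible $\geq$. This is your ``main obstacle'' in unavoidable form: imposing your star constraint $\gamma\geq\alpha_0$ automatically traps $\gamma$ below its killer $\delta$ while forcing $\delta$'s only killer $\epsilon$ below $\delta$, so no choice of $\geq$, however high you place $\gamma$, lets $\gamma$ survive in any optimal repair. No induction or exchange argument can close this gap; the inclusion $\mi{Partial}_{PR}(\Kmc,\trianglerighteq)\subseteq\bigcap_{\Amc'\in\creps{\Kmc_{\succ_\trianglerighteq}}}\Amc'$ is simply false, and the fault lies with the theorem (and the paper's proof of it), not with your overall strategy.
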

\begin{proof}
Let $\Amc^{\cap,C,\succ_\trianglerighteq}=\bigcap_{\Amc' \in \creps{\Kmc, \succ_\trianglerighteq}}\Amc'$ be the intersection of the completion-optimal repairs of $\Kmc_{\succ_\trianglerighteq}$. 

Let $\alpha_0\in \Amc^{\cap,C,\succ_\trianglerighteq}$, let $\geq$ be a total preorder that extends $\trianglerighteq$ and $\Amc'\in\mi{OptRep}(\Kmc, \geq)$. 
For every $\alpha,\beta$, 
if $\alpha{\succ_\trianglerighteq}\beta$, then $\{\alpha,\beta\}\subseteq C\in\conflicts{\Kmc}$ and $\alpha\triangleright\beta$, so since $\geq$ extends $\trianglerighteq$, $\alpha > \beta$, i.e. $\alpha\succ_\geq\beta$. 
It follows that $\succ_\geq$ extends $\succ_\trianglerighteq$. 
Hence, the completion-optimal repairs of $\Kmc_{\succ_\geq}$ form a subset of the completion-optimal repairs of $\Kmc_{\succ_\trianglerighteq}$, so their intersections are related as follows: $\bigcap_{\Amc' \in \creps{\Kmc, \succ_\trianglerighteq}}\Amc'\subseteq \bigcap_{\Amc' \in \creps{\Kmc, \succ_\geq}}\Amc'$. 
It follows that $\alpha_0\in \bigcap_{\Amc' \in \creps{\Kmc, \succ_\geq}}\Amc'=\bigcap_{\Amc' \in \mi{OptRep}(\Kmc, \geq)}\Amc'=\mi{IPR}(\Kmc,\geq)$. 
Hence, for every total preorder $\geq$ that extends $\trianglerighteq$, $\alpha_0\in \mi{IPR}(\Kmc,\geq)$. It follows that $\alpha_0\in \mi{Partial}_{PR}(\Kmc,\trianglerighteq)$. 
We conclude that $ \Amc^{\cap,C,\succ_\trianglerighteq}\subseteq \mi{Partial}_{PR}(\Kmc,\trianglerighteq)$.

In the other direction, let $\alpha_0\in \mi{Partial}_{PR}(\Kmc,\trianglerighteq)$ and let $\succ'$ be a completion of $\succ_\trianglerighteq$ and $\Amc_{\succ'}$ be the unique optimal repair defined by $\succ'$. 
Let $\geq$ be a total preorder that extends the preorder $\geq'$ defined as follows: $\alpha\geq'\beta$ if (i) $\alpha\trianglerighteq\beta$ or (ii) $\alpha\succ'\beta$. 
We show that $\geq$ extends $\trianglerighteq$. 
If $\alpha\trianglerighteq\beta$ then $\alpha\geq'\beta$ so $\alpha\geq\beta$. 
Assume that $\alpha\triangleright\beta$. 
If $\alpha$ and $\beta$ are in a conflict, $\alpha\succ_\trianglerighteq\beta$ so $\alpha\succ'\beta$ and $\beta\not\succ'\alpha$ by definition of a completion of a priority relation. 
If $\alpha$ and $\beta$ are not conflicting, $\beta\not\succ'\alpha$ trivially. 
Since $\beta\not\trianglerighteq\alpha$ and $\beta\not\succ'\alpha$, it follows that $\beta\not\geq'\alpha$. 
Hence $\alpha>'\beta$, so $\alpha>\beta$ since $\geq$ extends $\geq'$. 
It follows that $\geq$ extends $\trianglerighteq$. 
Thus $\alpha_0\in\mi{IPR}(\Kmc,\geq)$.
We show that $\Amc_{\succ'}\in \mi{OptRep}(\Kmc, \geq)$, which implies that $\mi{IPR}(\Kmc,\geq)\subseteq\Amc_{\succ'}$, so that $\alpha_0\in\Amc_{\succ'}$. 
Since $\mi{OptRep}(\Kmc, \geq)=\creps{\Kmc_{\succ_\geq}}$, it is sufficient to show that $\succ'$ is a completion of $\succ_\geq$. 
Let $\alpha\succ_\geq\beta$. By definition of $\succ_\geq$, $\alpha$ and $\beta$ are in a conflict and $\alpha>\beta$. 
Since $\succ'$ is a total priority relation and $\alpha$, $\beta$ are in a conflict, it must be the case that $\alpha\succ'\beta$ or $\beta\succ'\alpha$. 
However, $\beta\not\succ'\alpha$. Indeed, otherwise $\beta\geq'\alpha$ by construction of $\geq'$, and since $\geq$ extends $\geq'$, it follows that $\beta\geq\alpha$, contradicting $\alpha>\beta$. Hence $\alpha\succ'\beta$.  
It follows that $\succ'$ is indeed a completion of $\succ_\geq$, so that $\alpha_0\in\Amc_{\succ'}$.  
We conclude that $\mi{Partial}_{PR}(\Kmc,\trianglerighteq) \subseteq \Amc^{\cap,C,\succ_\trianglerighteq}$.
\end{proof}

\end{document}